\documentclass[11pt, letterpaper]{article}
\usepackage{fullpage}
\usepackage{amsthm}
\usepackage{amsmath,amssymb,amsfonts,nicefrac}
\usepackage{xspace}
\usepackage{color}
\usepackage{url}
\usepackage{hyperref}
\usepackage{bm}
\usepackage{bbm}
\usepackage{times}

\usepackage{enumitem}

\newtheorem{thm}{Theorem}[section]

\newtheorem{lemma}[thm]{Lemma}

\newtheorem{claim}[thm]{Claim}

\newtheorem{definition}[thm]{Definition}

\newtheorem{fact}[thm]{Fact}

\newcommand\E{\mathop{\mathbb{E}}}
\newcommand\card[1]{\left| {#1} \right|}
\newcommand\sett[2]{\left\{ \left. #1 \;\right\vert #2 \right\}}

\newcommand\set[1]{{\left\{ #1 \right\}}}
\newcommand\Prob[2]{{\Pr_{#1}\left[ {#2} \right]}}

\newcommand\cProb[3]{{\Pr_{#1}\left[ \left. #3 \;\right\vert #2 \right]}}

\newcommand\Expect[2]{{\mathop{\mathbb{E}}_{#1}\left[ {#2} \right]}}

\newcommand\cExpect[3]{{\mathbb{E}_{#1}\left[ \left. #3 \;\right\vert #2 \right]}}
\newcommand\norm[1]{\left\| #1 \right\|}

\newcommand\defeq{\stackrel{def}{=}}

\newcommand\inner[2]{\langle{#1},{#2}\rangle}
\newcommand\eps{\varepsilon}

\renewcommand\geq{\geqslant}
\renewcommand\leq{\leqslant}

\newcommand\MI{\mathrm{I}}

\newcommand{\HH}{\mathrm{H}}

\newcommand{\DKL}[2]{\mathrm{D}_{\text{KL}}\left( #1 \parallel #2 \right)}

\newcommand{\rom}[1]{\uppercase\expandafter{\romannumeral #1\relax}}

\title{Rounding via Low Dimensional Embeddings}
\author{Mark Braverman
\thanks{Department of Computer Science, Princeton University.
Research supported in part by the NSF Alan T. Waterman Award, Grant No. 1933331, a Packard Fellowship in Science and Engineering, and the Simons Collaboration on Algorithms and Geometry. }
\and Dor Minzer \thanks{Department of Mathematics, Massachusetts Institute of Technology, Cambridge, USA. Supported by a Sloan Research
Fellowship.}}
\date{\vspace{-5ex}}
\begin{document}
\maketitle
\begin{abstract}
A regular graph $G = (V,E)$ is an $(\eps,\gamma)$ small-set expander if for any set of vertices of fractional size at most $\eps$,
at least $\gamma$ of the edges that are adjacent to it go outside. In this paper, we give a unified approach to several known complexity-theoretic
results on small-set expanders.
In particular, we show:
\begin{enumerate}
  \item Max-Cut: we show that if a regular graph $G = (V,E)$ is an $(\eps,\gamma)$ small-set expander that contains a cut of fractional
  size at least $1-\delta$, then one can find in $G$ a cut of fractional size at least $1-O\left(\frac{\delta}{\eps\gamma^6}\right)$ in polynomial time.

  \item Improved spectral partitioning, Cheeger's inequality and the parallel repetition theorem over small-set expanders. The general form of each one
  of these results involves square-root loss that comes from certain rounding procedure, and we show how this can be avoided over small set expanders.
\end{enumerate}
Our main idea is to project a high dimensional vector solution into a low-dimensional space while roughly maintaining $\ell_2^2$ distances, and then
perform a pre-processing step using low-dimensional geometry and the properties of $\ell_2^2$ distances over it.
This pre-processing leverages the small-set expansion property of the graph to transform a vector valued solution
to a different vector valued solution with additional structural properties, which give rise to more efficient integral-solution rounding schemes.
\end{abstract}

\section{Introduction}

Expander graphs are important combinatorial objects with myriad of applications throughout theoretical computer science (see~\cite{Expander}).
One of the reasons for that are the numerous equivalent definitions of expander graphs, that offer different views on this basic object.
Combinatorially, a $d$-regular graph $G = (V,E)$ is said to be a combinatorial expander if the {\it edge expansion} of any $S\subseteq V$ of size at most $n/2$ is
at least an absolute constant.
\begin{definition}
  Let $G = (V\cup U,E)$ be an undirected regular graph. The edge expansion of $S\subseteq V$ is defined as
  $\Phi(S) = \frac{\card{\sett{(v,u)}{v\in S, u\not\in S}}}{\card{\sett{(v,u)}{v\in S}}}$. In words,
  $\Phi(S)$ is the fraction of edges adjacent to $S$ going outside it.
\end{definition}

Spectrally, a graph is said to be an expander if, letting $A_G$ be the normalized
transition matrix of $G$, the second eigenvalue of $A_G$ is at most $1-\Omega(1)$.
A basic result in spectral graph theory, called Cheeger's inequality, asserts that
qualitatively, spectral expanders and combinatorial expanders are equivalent.

The main object studied in this paper is small set expanders, which is a relaxation of expander graphs defined as follows.
\begin{definition}
  A graph $G = (V\cup U,E)$ is said to be $(\eps,\gamma)$ small-set expanding if for any $S\subseteq V$ of size at most
  $\eps\card{V}$ we have that $\Phi(S)\geq \gamma$.
\end{definition}
Upon first sight, small-set expanders may seem like a slight variation over the combinatorial definition of expander graphs. However this seemingly small
change makes a big of difference, and indeed small-set expanders are more difficult to study. For instance, there is no general known equivalence between the
spectrum of the adjacency matrix of a graph and the small-set expansion of the graph. This is partly the reason that small-set expanders are typically much harder
to work with.

The main goal of this paper is to present the idea of ``low-dimension embeddings'' and show how to use it to give alternative, more direct proofs to several results
concerning computational problems over small set expanders. In particular, we
(1) recover the results of~\cite{TWZ,LRTV} about strong parallel repetition for Unique-Games over small set expanders,
(2) we show an improved Cheeger's inequality as well as approximation algorithm for Max-Cut over small set expanders, recovering results of~\cite{KLLOT,HigherOrder}.

Conceptually, we show that for several cases (such as the computational problems mentioned above), it is beneficial to consider an intermediate ``low-dimensional''
projection step which incurs less quantitative loss when compared to the usual integral rounding procedures. In a low dimension, one has additional tools in their
disposal to manipulate a vector valued solution so that it is possible to perform a final integral rounding procedure with better performance. We believe that such
ideas may be helpful towards getting improved analysis of other semi-definite programming relaxations of combinatorial optimization problems, and hope that the current
work encourages research along this direction.

\subsection{Our Results}
In this section, we formally state the main results of this paper.
\subsubsection{Solving Max-Cut on Small Set Expanders}
Our first result addresses the Max Cut problem. We show that on small set expanders, one can get an algorithm
achieving a better approximation ratio compared to the Geomans-Williamson algorithm~\cite{GW}.
\begin{thm}\label{thm:main_MC}
  There exists an absolute constant $C>0$, such that the following holds for all $\eps,\delta,\gamma>0$.
  There is an efficient algorithm for the following task:

  \noindent{\bf Input:} a regular, $(\eps,\gamma)$ small-set expander $G=(V\cup U,E)$ containing a cut of fractional size $1-\delta$.

  \noindent{\bf Output:} a bipartition $V = V_1\cup V_2$ of the vertices that cuts at least $1-C\frac{\delta}{\eps\gamma^6}$ fraction of the edges.
\end{thm}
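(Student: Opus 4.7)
The plan is to implement the low-dimensional embedding scheme announced in the introduction. First, solve the standard Goemans--Williamson SDP to obtain unit vectors $\{v_i\}_{i\in V}\subseteq \mathbb{R}^n$ with $\E_{(i,j)\in E}[(1-\langle v_i, v_j\rangle)/2]\geq 1-\delta$, so on a typical edge $(i,j)$ we have $v_i\approx -v_j$. Intuitively the $v_i$'s ``want'' to cluster around two antipodal points $\pm v^*$, but no per-vertex guarantee is available, and naively rounding by a random hyperplane only recovers the $\sqrt{\delta}$ bound. Next, apply a Gaussian Johnson--Lindenstrauss projection $\Pi\colon \mathbb{R}^n\to \mathbb{R}^d$ with $d=d(\eps,\gamma)$ depending only on the small-set expansion parameters, and let $w_i=\Pi v_i$. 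With high probability all pairwise $\ell_2^2$-distances are preserved up to a $(1\pm o(1))$ factor, so the SDP objective essentially survives the projection; crucially, working in constant dimension makes geometric tools -- $\ell_2^2$-nets, bounded-diameter partitions, volume bounds -- available at $O(1)$ overhead.

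The heart of the argument is a pre-processing step in low dimension that uses small-set expansion to ``snap'' vectors to a discrete set. In $\mathbb{R}^d$, take a partition $\{B_\ell\}_{\ell=1}^K$ of the ambient ball into cells of $\ell_2^2$-diameter $O(\gamma)$, inducing clusters $V_\ell=\{i:w_i\in B_\ell\}$, and replace each $w_i$ by the center $w_i'$ of its cell. If a cluster $V_\ell$ has fractional size at most $\eps$, then the $(\eps,\gamma)$ small-set expansion hypothesis forces at least a $\gamma$ fraction of the edges incident to $V_\ell$ to leave the cluster, and the ``defect'' of these edges in the SDP objective provides slack to be charged against the error introduced by snapping within $V_\ell$. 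Large clusters, on the other hand, have diameter $O(\gamma)$ and so contain vectors that already agree up to $O(\gamma)$ and can be treated uniformly. After careful bookkeeping one obtains a pre-processed vector solution supported on only $K=O_{\eps,\gamma}(1)$ distinct points whose SDP value is still $1-O(\delta/(\eps\gamma^c))$ for some constant $c$.

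Finally, since the pre-processed solution lives on only $K$ points, we can enumerate over all $2^K=O(1)$ bipartitions of the $K$ cluster centers and output the bipartition of $V$ inducing the largest cut. Summing the losses from the projection and pre-processing steps yields a cut of fractional size $1-O(\delta/(\eps\gamma^6))$ as required. The main obstacle will be the quantitative bookkeeping in the pre-processing step: one must trade the cell diameter (which controls the error introduced by snapping), the number of cells $K$ (which controls the final enumeration cost), and the expansion rate $\gamma$ (which controls how much defect each small cluster contributes). The $\gamma^6$ exponent plausibly arises from chaining a few $\ell_2^2$-triangle-inequality arguments against the small-set expansion bound, and pinning these constants down while keeping $d$ (hence $K$) depending only on $\eps$ and $\gamma$ is the delicate part of the argument.
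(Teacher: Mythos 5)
Your high-level plan (GW SDP, projection to constant dimension, an SSE-based ``snapping'' pre-processing, then a trivialized rounding) matches the paper's strategy, but the two steps that carry all the difficulty are not actually established. The central gap is the charging argument in the pre-processing. When you snap every vector of a light cluster $V_\ell$ to its cell center, the additive increase in the objective is of order $d\,|V_\ell|$ times the squared cell diameter, and you propose to pay for it with the SDP ``defect'' of the $\geq \gamma d |V_\ell|$ edges that leave $V_\ell$. But small-set expansion gives you many escaping edges and \emph{no lower bound on their lengths}: a vector just inside the cell wall can have all of its neighbours just outside it, at $\ell_2$ distance arbitrarily close to $0$, so those edges may contribute essentially nothing to the objective and there is no slack to charge against. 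This is precisely the difficulty the paper's Lemmas~\ref{lem:round_to_boundary} and~\ref{lem:round_to_corners} handle with an iterative ball-growing argument: one grows neighbourhoods $B_1\subset B_2\subset\cdots$ of the cell in increments of order $\gamma\eps$, shows that if the local cost were small each $B_{k+1}$ must be larger than $B_k$ by a factor $(1+\gamma/2)$, and derives a contradiction with the mass bound $\eps n$; moreover the paper only moves points to the boundary of their cell and then to corners (two stages, each losing $O(1/\gamma^3)$, whence $\gamma^6$), and it never snaps heavy cells at all. A secondary inaccuracy: a Johnson--Lindenstrauss projection to dimension $d(\eps,\gamma)$ cannot preserve all pairwise distances up to $(1\pm o(1))$; what is true, and what the paper proves in Lemma~\ref{lem:dim_reduce}, is the weaker statement that projecting to $\mathbb{R}^3$ and renormalizing increases the expected objective by only an absolute constant factor.

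The final step is also unsupported. Knowing that the pre-processed vector solution is supported on $K=O_{\eps,\gamma}(1)$ points with value $1-O(\delta/(\eps\gamma^{c}))$ does not imply that some bipartition that is constant on clusters cuts $1-O(\delta/(\eps\gamma^{6}))$ of the edges: the vector value is only a relaxation, and without a separation guarantee between distinct support points the best cluster-respecting integral cut can still be quadratically worse, so ``enumerate over $2^K$ bipartitions'' needs its own rounding analysis. The paper gets this separation structurally: after the corner-rounding, for every edge the two images are either equal or at distance $\Omega(\eps)$, except near the $O(1/\eps)$ heavy cells, and the random hyperplane is conditioned on missing the heavy-cell boundaries (Claim~\ref{claim:prob_to_intersect_skel}), after which the hyperplane loss is a $1/\eps$ factor rather than a square root (Lemma~\ref{lem:compute_rounding}); this is where the $1/\eps$ in the theorem comes from. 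Since your cell diameter is tied to $\gamma$ rather than $\eps$, even granting the pre-processing you would still need to supply this last analysis.
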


\subsubsection{Spectral Partitioning}
Our second result is an improved quantitative for Cheeger's inequality when the underlying graph is a small set expander.
Recall that one side of Cheeger's inequality states that if $G = (V,E)$ is a regular graph such that $\lambda_2(G) \leq 1-\delta$, then one can
find in polynomial time a set $S\subseteq V$ of size at most $\card{V}/2$ such that $\Phi(S)\leq\sqrt{2\delta}$. The other side of Cheeger's inequality
asserts that if $\lambda_2(G) = 1-\delta$, then $\min_{\card{S}\leq \card{V}/2} \Phi(S)\geq \delta/2$, hence the upper bound is tight up to the square root
(and constant factors).
We show that for small set expanders,
one can considerably improve on the upper bound, and get a result which is much closer to the lower bound in Cheeger's inequality:
\begin{thm}\label{thm:main_cheeger}
  There exists an absolute constant $C>0$, such that if $G$ is a $(\eps,\gamma)$ small set expander
  and $\lambda_2(G)\geq 1-\delta$, then
  $G$ contains a set of size at most $n/2$ such that $\Phi(S)\leq C\cdot \frac{\delta}{\gamma^3\eps^3}$.
  Furthermore, such set $S$ can be found efficiently.
\end{thm}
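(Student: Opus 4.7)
Let $f: V \to \R$ be the second eigenvector of the normalized transition matrix $A_G$, normalized so $\E[f] = 0$ and $\E[f^2] = 1$; the eigenvalue bound $\lambda_2 \geq 1-\delta$ translates to $\mathcal{E}(f) := \E_{(u,v)\sim E}[(f(u)-f(v))^2] \leq 2\delta$. In the paper's framework, $f$ already serves as a $1$-dimensional ``low-dimensional embedding,'' so no random projection is needed; what remains is (i) a pre-processing step on $f$ using small-set expansion, and then (ii) a threshold rounding. After a sign flip I may assume $|\{v : f(v) > 0\}| \leq n/2$ and focus on the positive part $f_+$, whose threshold sets automatically have size at most $n/2$.

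\noindent\textbf{Pre-processing.} In the classical Cheeger proof, the $\sqrt{\delta}$ loss comes from the Cauchy--Schwarz step bounding $\sum_{(u,v)} |f(u)^2 - f(v)^2|$ by $\sqrt{\sum (f(u)-f(v))^2} \cdot \sqrt{\sum (f(u)+f(v))^2}$; this inequality is tight only when $|f(u)-f(v)|$ and $|f(u)+f(v)|$ are proportional on every edge, which fails for smooth $f$ but holds automatically when $f$ is, say, $\pm 1$-valued. My goal is to use SSE to reshape $f_+$ into a function $\tilde f$ that is approximately piecewise-constant on a few levels, thereby making the Cauchy--Schwarz step essentially lossless up to factors $\mathrm{poly}(1/\eps,1/\gamma)$. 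Concretely, I would (a) truncate the top tail: for the smallest $s$ with $T_s := \{v : f_+(v) \geq s\}$ of size $\leq \eps n$, SSE gives $\Phi(T_s) \geq \gamma$, and combined with $\mathcal{E}(f_+) \leq 2\delta$ this caps how large $\max f_+$ can be; (b) flatten $f_+$ across dyadic scales, iteratively arguing via SSE that no small-mass level $\{v : f_+(v) \in [2^{-k-1}, 2^{-k}]\}$ can simultaneously carry significant $\ell_2^2$ mass and have a small boundary; and (c) absorb negligible tails that fall outside the SSE regime.

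\noindent\textbf{Rounding and main obstacle.} Given such a flattened $\tilde f$, I would analyze the sweep cut $S_t = \{v : \tilde f(v) \geq t\}$ level by level: at each threshold either $|S_t| \leq \eps n$, in which case SSE directly pins down $\Phi(S_t) \geq \gamma$, or the cut lies in a balanced regime where the bounded range of $\tilde f$ keeps Cauchy--Schwarz tight. Aggregating across levels with a weighting adapted to the reshaped structure, I would conclude $\min_t \Phi(S_t) \leq C\delta/(\eps^3\gamma^3)$, with the three factors of $1/(\eps\gamma)$ naturally matching the three applications of SSE above. The main obstacle is the pre-processing: constructing $\tilde f$ so that the near-proportionality $|\tilde f(u)-\tilde f(v)| \asymp |\tilde f(u)+\tilde f(v)|$ holds simultaneously across all scales, while keeping both the Dirichlet energy and the variance within the target $\mathrm{poly}(1/\eps,1/\gamma)$ factors. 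Making these multi-scale applications of SSE combine without the losses multiplying wastefully is where the real work lies.
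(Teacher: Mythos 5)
Your plan follows the same route as the paper: use the second eigenvector as the (already one-dimensional) embedding, pre-process it using small-set expansion so that the sweep-cut rounding no longer incurs the Cauchy--Schwarz square root, and then run the standard threshold rounding. However, the proposal stops exactly where the proof actually is. The paper's entire content for this theorem is a quantization lemma: partition the range of the (median-shifted) eigenvector into multiplicative intervals $[M(1+\eps)^{-j-1}, M(1+\eps)^{-j})$ and round each coordinate to an endpoint of its interval so that (i) the Dirichlet energy grows by at most $O(\gamma^{-3}\eps^{-2})$ and (ii) any two unequal coordinates have ratio outside $(1-\eps,1+\eps)$. With (ii), the per-edge cut probability $|y(u)-y(v)|(|y(u)|+|y(v)|)$ is bounded by $2\eps^{-1}|y(u)-y(v)|^2$ term by term; no Cauchy--Schwarz and no ``balanced regime'' analysis is needed in the rounding at all. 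Your rounding paragraph instead invokes ``SSE pins down $\Phi(S_t)\geq\gamma$'' for small sweep sets (which is the wrong direction for exhibiting a sparse cut and plays no role in the paper's rounding) and a vague tightness claim for Cauchy--Schwarz on balanced cuts, so even the easy half of the argument is not in place.

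The concrete gap is in the pre-processing, which you yourself flag as ``where the real work lies.'' Two things are missing. First, within a single light level (mass at most $\eps n$) you must choose the rounding threshold so that the added cost $\approx d\,|B|\,L_j^2$ (where $B$ is the set of points near the threshold and $L_j$ the interval length) is charged to energy that was already present; the paper does this by an iterative ball-growing argument: if the local cost were too small, SSE would force $|B_{k+1}|\geq(1+\gamma/2)|B_k|$ for neighborhoods of geometrically chosen radii summing to less than the interval length, which is impossible. This is where the $\gamma^{-3}$ comes from, and nothing in your sketch substitutes for it. Second, and more importantly, your flattening step (b) only discusses ``small-mass levels,'' but levels of mass exceeding $\eps n$ are precisely where SSE says nothing, and they cannot be ignored (they can carry most of the vertices). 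The paper handles heavy levels by placing $r+2$ candidate cut points in the middle third of the interval (where the level has mass between $r\eps n$ and $(r+1)\eps n$), shrinking the ball-growing scale to $\gamma/(rK)$, and arguing by disjointness/pigeonhole that the SSE argument can fail for at most $r+1$ of the candidates, so some candidate works; this is also where the extra $\eps^{-2}$ (and ultimately the third power of $\eps$ in the theorem) enters. Without an argument for heavy levels and an explicit per-level charging scheme, the proposal does not yield the claimed $O(\delta/(\gamma^3\eps^3))$ bound.
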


\subsubsection{Parallel Repetition}
Our third result is about parallel repetition when applied on Unique-Games over small-set expanders. We begin by formally defining the Unique-Games problem.
\begin{definition}
  A Unique-Games instance is composed of a graph $G = (V,E)$, a finite alphabet $\Sigma$ and a collection of constraints $\{\phi_e\}_{e\in E}$, one for
  each edge. The constraint on an edge $e$ is a $1$-to-$1$ relation $\phi_e\subseteq \Sigma\times \Sigma$, specifying for each edge the tuples of
  labels that are considers satisfactory.
\end{definition}
Given a Unique-Games instance $\Psi = (G,\Sigma,\Phi = \{\phi_e\}_{e\in E})$, the value of $\Psi$, denoted by ${\sf val}(\Psi)$, is the maximum fraction
of constraints that may be satisfied by any assignment $A\colon V\to \Sigma$. A central conjecture in complexity theory, called the Unique-Games Conjecture~\cite{Khot},
asserts for any $\eps,\delta>0$, there is $k\in\mathbb{N}$ such that given a Unique-Games instance $\Psi$ with alphabet size $k$, it is NP-hard to distinguish
between the case that ${\sf val}(\Psi)\geq 1-\eps$ and the case that ${\sf val}(\Psi)\leq \delta$.
A well known approach to the Unique-Games Conjecture proceeds by first establishing a weak form of the conjecture, wherein $\delta$ is also close to $1$,
and then amplifying the gap via \emph{parallel repetition}. Here, given a Unique-Games instance $\Psi$ and a parameter $t\in\mathbb{N}$, the $t$-fold repeated game corresponds to the tensor product of the game is $\Psi^{\otimes t} = (G' = (V^t, E'), \Sigma^{t}, \Phi')$, where
\[
E' = \sett{((u_1,\ldots,u_t),(v_1,\ldots,v_t))}{(u_i,v_i)\in E~\forall i\in [t]},
\]
and $\Phi' = \{\phi_{e'}\}_{e'\in E'}$ where the constraint on edge $e'$ corresponds to the AND of the $t$ constraints on the $t$ edges of the original graph
$G$.

The parallel repetition theorem of Raz~\cite{Raz} and subsequent improvements~\cite{Holenstein,BravermanGarg,Rao}
turn out not to be good enough; such results are only able to show that for a unique game $\Psi$, if ${\sf val}(\Psi)\leq 1-\eps$, then the value of
the $t$-fold repeated game is at most $(1-\eps^2)^{\Omega(t)}$. This square  often times does not matter in application, however in the context
of Unique-Games it is crucial. Indeed, the weak forms of the Unique-Games conjecture that seem plausible may go to soundness which is as small as
$\delta = 1-\Theta(\sqrt{\eps})$ (though current technology does not even close to this), thereby offering a quadratic difference at best. This makes
the quadratic loss in parallel repetition unaffordable for the purposes of proving the Unique-Games Conjecture.

This raises the question: on which classes of graphs can one prove an
improved parallel repetition theorem, surpassing the square barrier?
Much effort had been devoted to study this question, starting with the tightness of the parallel repetition theorem~\cite{Raz,KROW,AK,FKO} and
subsequently studying parallel repetition over special classes of graphs~\cite{BHHRRS,AKKSTV,SS,BM21,TWZ,LRTV,MoshkovitzSSE}, most popularly on
expanders and small-set expanders. It is worth noting that while it is known that the Unique-Games Conjecture fails on such graph~\cite{AKKSTV,RagSteurer}, it is
still possible that there is a regime of parameters for which a weak form of the Unique Games Conjecture holds, and for which a strong version
of the parallel repetition theorem holds.

Using our technique, we are able to recover the result of~\cite{TWZ,LRTV}, and show that strong parallel repetition holds for unique games over small set expanders:
\begin{thm}\label{thm:main_UG}
  There exists an absolute constant $c>0$, such that the following holds for all $0 < \eps,\gamma,\delta<1$.
  If $\Psi = (G, \Sigma, \Phi)$ is
  a Unique-Games instance such that $G$ is $(\eps,\gamma)$ small-set expanding, regular graph whose
  value is at most $1-\delta$, then ${\sf val}(\Psi^{\otimes t})\leq (1-\delta)^{c\eps^3\gamma^3 t}$.
\end{thm}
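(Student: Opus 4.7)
I would combine (i) exact multiplicativity of a natural SDP relaxation of the Unique-Games value under tensor product with (ii) a small-set-expander analog of the standard SDP-versus-integral gap for a single copy of $\Psi$ -- essentially the Unique-Games counterpart of Theorem~\ref{thm:main_cheeger}. Given these two ingredients, the conclusion follows from the chain
\[
{\sf val}(\Psi^{\otimes t}) \;\leq\; {\sf SDP}(\Psi^{\otimes t}) \;=\; {\sf SDP}(\Psi)^{t} \;\leq\; (1-c\eps^3\gamma^3\delta)^{t},
\]
which, after absorbing constants, is at most $(1-\delta)^{c\eps^3\gamma^3 t}$ as claimed.

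\textbf{Step 1 (a tensorizable SDP).} I would work with the natural vector SDP assigning to each pair $(v,a)\in V\times\Sigma$ a vector $\vec{v}_a$, with $\vec{v}_a\cdot\vec{v}_b=0$ for $a\neq b$ at a common vertex and the usual normalization, and whose objective is $\E_{(u,v)\in E}\sum_{a\in\Sigma} \vec{u}_a\cdot\vec{v}_{\pi_{uv}(a)}$. This quantity is spectral in nature: it is (essentially) the top eigenvalue of the label-extended graph $H_\Psi$ on vertex set $V\times\Sigma$ with edges $\{((u,a),(v,\pi_{uv}(a))):(u,v)\in E,\,a\in\Sigma\}$. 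Since $H_{\Psi^{\otimes t}} = H_\Psi^{\otimes t}$ and tensor products of graphs multiply eigenvalues, ${\sf SDP}(\Psi^{\otimes t}) = {\sf SDP}(\Psi)^{t}$.

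\textbf{Step 2 (rounding with linear loss).} The heart of the argument is the following Unique-Games analog of Theorem~\ref{thm:main_cheeger}: if $G$ is $(\eps,\gamma)$ small-set expanding and ${\sf val}(\Psi)\leq 1-\delta$, then ${\sf SDP}(\Psi)\leq 1-c\eps^3\gamma^3\delta$. I would prove the contrapositive: given an SDP solution of value $1-\eta$, produce an integral assignment of value $\geq 1-O(\eta/(\eps^3\gamma^3))$. The recipe, in line with the paper's overall philosophy, is: (a) apply a Johnson--Lindenstrauss-type projection to embed the vectors $\{\vec{v}_a\}_{v,a}$ into a low-dimensional space while approximately preserving their pairwise $\ell_2^2$ geometry; (b) use the $(\eps,\gamma)$ small-set expansion of $G$ to preprocess the solution by discarding a small ``unstructured'' set of vertices at which the low-dimensional vectors deviate significantly from an orthogonal frame, with small-set expansion guaranteeing that this set carries only a $\gamma$-small fraction of the edge weight; and (c) apply a low-dimensional geometric rounding (e.g.\ assigning each vertex the label whose low-dimensional vector maximizes a random Gaussian inner product) that respects the $1$-to-$1$ permutations on typical edges.

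\textbf{Main obstacle.} The hard step is 2(c). Unlike Max-Cut (Theorem~\ref{thm:main_MC}), where each vertex makes a binary decision, here the alphabet $\Sigma$ can be large and labels at the two endpoints of an edge must agree modulo the permutation $\pi_{uv}$, so the preprocessing in 2(b) must yield low-dimensional frames that are not merely clean locally but are also \emph{consistent} across typical edges. Pinning down how small-set expansion controls the fraction of edges on which consistency fails, and propagating this through the geometric rounding so as to incur only a linear (rather than square-root) loss in $\delta$ -- ultimately producing the $\eps^3\gamma^3$ dependence in the exponent -- is where the bulk of the technical work lies, and it is the analog in the parallel-repetition setting of the low-dimension preprocessing that drives the Cheeger and Max-Cut results in the paper.
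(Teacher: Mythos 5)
There is a genuine gap, in fact two. First, your tensorization step is not justified. The SDP you write down (vectors $\vec{v}_a$ with orthogonality constraints at each vertex and the usual normalization) is not ``essentially the top eigenvalue of the label-extended graph'': the label-extended graph is regular, its top normalized eigenvalue is $1$ regardless of ${\sf val}(\Psi)$, and the constrained SDP cannot be identified with an unconstrained eigenvalue problem, so the fact that eigenvalues of tensor products multiply does not give ${\sf SDP}(\Psi^{\otimes t})={\sf SDP}(\Psi)^{t}$. Relaxations that \emph{are} known to be multiplicative for two-prover games (Feige--Lov\'asz-type relaxations, or the Hellinger-type quantities used in the work on rounding parallel repetitions of unique games) relate to the true game value with a quadratic loss in general, and that quadratic relationship is tight (e.g.\ the odd-cycle game); this is precisely the barrier the theorem is about, so you cannot simply invoke a multiplicative relaxation without addressing it. Second, even granting a multiplicative relaxation, your Step 2 --- showing that on an $(\eps,\gamma)$ small-set expander, ${\sf val}(\Psi)\leq 1-\delta$ forces the relaxed value down to $1-c\eps^3\gamma^3\delta$, i.e.\ a rounding with only linear loss in $\delta$ for a large-alphabet constraint problem --- is exactly the technical heart of the statement, and you explicitly leave it as an obstacle. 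Items (a)--(c) are a plausible program, but the consistency-across-permutations issue you flag is where all the difficulty lives, so as it stands the proposal is a proof outline conditioned on the very lemma that would need to be proved.

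For comparison, the paper takes an entirely different, information-theoretic route: it follows the Braverman--Garg proof of parallel repetition (Lemma~\ref{lemma:final_BG} collects the relevant KL-divergence bounds), and the only new ingredient is in the final rounding-to-a-strategy step, where the usual application of Pinsker's inequality (the source of the square-root loss) is replaced by a preprocessing of the conditional distributions ${\bf L}_{S,G,H,I}\,|\,{\bf X}_I=x,W$: using the quantization lemma for small-set expanders (Lemma~\ref{lem:find_pseudo_quantized}, built on the improved Cheeger-type Lemma~\ref{lem:cheeger_quantized}), these are perturbed multiplicatively into $\eps$-quantized pseudo-distributions, for which Hellinger-squared controls statistical distance with only a $1/\eps$ loss (Lemma~\ref{lem:from_hell_to_sd}); correlated sampling for pseudo-distributions then yields a strategy with failure probability linear, rather than square-root, in the divergence. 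If you want to pursue your SDP-based route, you would need both a genuinely multiplicative relaxation and an SSE-specific linear-loss rounding for it, neither of which is supplied here.
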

We remark that our proof applies to the more general class of projection games, but we state it only for Unique-Games for the purpose of this introduction.

\subsection{Our Technique}
The proofs of Theorems~\ref{thm:main_MC} and~\ref{thm:main_cheeger} follow a similar theme.
In both cases, there are well known classical results that are very much similar to it except that they incur an additional square-root loss: Cheeger's inequality and
the Goemans Williamnson algorithm~\cite{GW}.
The source for this quadratic loss comes from a rounding phase in their proofs, which transforms a vector-valued solution with an $\ell_2^2$ distances
guarantee to an integral valued solution with an $\ell_1$ distances guarantee via rounding.
Our proofs of Theorems~\ref{thm:main_MC} and~\ref{thm:main_cheeger} avoid this quadratic loss by incorporating a preprocessing phase
and a ``soft'' rounding phase, which incur constant factors loses depending on the small set expansion parameters of the graphs, but ensures that the subsequent integral
rounding phase would be almost lossless.

\paragraph{Low-dimensional embeddings.} Our arguments also utilize the idea of low-dimensional embeddings. Here, one is given a collection of vectors (typically an SDP
solution to some problem, such as Max-Cut or Unique-Games) which a-priori may lie in a high dimensional Euclidean space. Common wisdom, which manifests itself in standard
rounding algorithms (such as Geomans-Williamson~\cite{GW}), utilize such vectors to generate $1$-dimensional Gaussians vectors with certain correlations. Our idea is to
use higher (but constant) dimensional projections as a way to reduce the dimension of the vector while roughly preserving the quality of the solution.
This allows us to apply ideas from low-dimensional geometry and construct a rounding scheme, which for small-set expanders, outperforms
the standard rounding schemes for these problems.

Similar, yet different ideas of low-dimensional projections have already made some appearances in the literature, in the context of higher order Cheeger inequalities~\cite{HigherOrder}
and in establishing improved bounds on the Grothendieck constant~\cite{BKMN}. 
Below, we explain in more detail our proof strategy for our results.

\subsubsection{Proof of Theorem~\ref{thm:main_MC}}
Our proof of Theorem~\ref{thm:main_MC} makes use of the Goemans-Williamson semi-definite program relaxation of Max-Cut~\cite{GW}. Our algorithm
proceeds by the following steps:
\begin{enumerate}
  \item {\bf Solving the SDP program:} We find unit vectors in $\mathbb{R}^m$ that achieve value at least $1-\delta$, where $m\leq n$. In the next
  two steps we pre-process these vectors to get a (different) collection of vectors that can be used in a rounding scheme with improved
  guarantees.

  \item {\bf Dimension reduction:} We show, via appropriate random projections, that we can reduce $m$ to $3$ as long as we are willing to take a multiplicative
  cost in the error. More precisely, we show we can find a solution consisting of unit vectors in $\mathbb{R}^3$ that achieve value at least $1-C\delta$, for some absolute value $C>0$.
  Denote them by $(x_v)_{v\in V}$.

  We remark that it is also possible to reduce the dimension down to $2$, and moreover that it simplifies subsequent steps.
  However, doing so may reduce the value of the resulting solution to $1-C\delta\log(1/\delta)$, and we would like to avoid this logarithmic factor.

  \item {\bf Soft rounding:} The main issue in the standard hyperplane rounding procedure in the Goemans-Williamson algorithm is the case that
  the vectors $x_v$ are uniformly distributed on the sphere, satisfying that $\norm{x_u-x_v}_2 = \Theta(\sqrt{\delta})$ for a typical edge $(u,v)$ in the graph.
  In this step, we show that in
  the case of small set expanders, this cannot be the case. To be more precise, we show that such case can be avoided if we are willing to pay constant
  factors depending on the expansion parameters of the graph.

  Towards this end, we show that given a low-dimension solution $(x_v)_{v\in V}$ from the previous step, one may produce a solution $(y_v)_{v\in V}$
  with similar value, such that the set of vertices $V$ may be divided into two sets, $Z_1,Z_2$, where vectors corresponding to vertices in $Z_2$
  occur in the union of at most $1/\eps$ balls with small $\ell_2$ radius, and for any edge $(u,v)$ that has
  at least one of its endpoints in $Z_1$ must either have $0$ or long $\ell_2$ distance, i.e. $\norm{y_v - y_u}_2 = 0$ or
  $\norm{y_v - y_u}_2\geq \eps$. Thus, in a sense the ``bad structure'' that may occur
  for the hyperplane rounding procedure in the Goemans-Williamson algorithm can only occur inside $Z_2$.

  \item {\bf Rounding:} Finally, we show an improved analysis of the standard hyperplane rounding when performed on a collection of vectors with
  the structure from the previous step. Intuitively, as $Z_2$ is small, we can show via a union bound that with constant probability,
  a random hyperplane doesn't cut any of the small balls in which vectors from $Z_2$ occur. Then, we sample a hyperplane conditioned on this event
  happening and show that a square loss does not occur for the rest of the edges (as the distance over these edges already has a
  ``Boolean-type'' behaviour of either being $0$, or far from $0$).
\end{enumerate}

\subsubsection{Proof of Theorem~\ref{thm:main_cheeger}}
To prove Theorem~\ref{thm:main_cheeger}, we view the second eigenvector of $G$, call it $x$ as an embedding of the graph into $\mathbb{R}$.
Recall that the square-root loss in the standard proof of Cheeger's inequality comes from an application of Cauchy-Schwarz, which
bounds $\sum\limits_{(u,v)\in E}{\card{x_v}\card{x_u - x_v}}$ by the square root of spectral gap of $G$. We wish to circumvent this loss,
and for that we observe that if the ratios of entries in $x$ were either $1$ or bounded away from $1$, i.e. outside the interval $[1-\eps,1+\eps]$,
then one may indeed proceed as
\[
\sum\limits_{(u,v)\in E}{\card{x_v}\card{x_u - x_v}}
=\sum\limits_{(u,v)\in E}{\card{x_v}^2\card{x_u/x_v - 1}}
\leq \sum\limits_{(u,v)\in E}{\card{x_v}^2\frac{\card{x_u/x_v - 1}^2}{\eps}}
=\frac{1}{\eps}\sum\limits_{(u,v)\in E}{\card{x_u - x_v}^2}.
\]

Thus, our goal is to preprocess $x$ so that we obtain this property, while at the same time not increasing
$\sum\limits_{(u,v)\in E}{\card{x_u - x_v}^2}$ by too much. Indeed, we show that a soft rounding strategy
inspired by the above algorithms can be used in this case. We partition $\mathbb{R}$ ``dyadically'' into
intervals of the form $[(1+\eps)^{i}, (1+\eps)^{i+1})$. We show that inside each one of these intervals, there is a way
to round each entry of $x$ to one of the endpoints without incurring too much of a loss in
$\sum\limits_{(u,v)\in E}{\card{x_u - x_v}^2}$.

\subsubsection{Proof of Theorem~\ref{thm:main_UG}}
The proof of Theorem~\ref{thm:main_UG} follows the information theoretic approach to parallel repetition~\cite{Raz,Rao,Holenstein,BravermanGarg}.
Our proof follows the outline of~\cite{BravermanGarg}, except that in the rounding phase in that result ---
which uses Pinkser's inequality to transform a KL-divergence closeness between distributions to a statistical distance closeness guarantee between the same distributions ---
we perform a preprocessing step. Indeed, we show that one may appeal to our improved version of Cheeger's inequality
to change the distributions so that the KL-divergence between them does not change too much and yet the square root loss in Pinsker's inequality does not occur.
From there,
the rest of our proof follows the outline of~\cite{BravermanGarg}.

\section{Solving Max-Cut on SSE's: Proof of Theorem~\ref{thm:main_MC}}
In this section, we prove Theorem~\ref{thm:main_MC} following the outline given in the introduction.
\subsection{The Goemans-Williamson Semi-definite Program Relaxation}
Below is the standard semi-definite program relaxation of Max-Cut. Instead of thinking of the graph $G$,
it will be more convenient for us to work with the graph $G' = (V',E')$, wherein each vertex has two copies,
$V' = V\times \{-1,1\}$, and the edges are $E' = \sett{((v,b),(u,b'))}{(v,u)\in E, b\neq b'}$. We have the following claim.
\begin{claim}
  The following two assertions hold:
  \begin{enumerate}
    \item If $G$ has a cut of fractional size $1-\delta$, then $G'$ has a cut of fractional size at least $1-\delta$.
    \item If $G$ is an $(\eps,\gamma)$ small set expander, then $G'$ is a $(\eps/2,\gamma/2)$ small set expander.
  \end{enumerate}
\end{claim}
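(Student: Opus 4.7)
The plan is to handle the two parts separately, each via a short direct construction.

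For the first part, given a cut $V = A \cup B$ of $G$ that cuts a $(1-\delta)$-fraction of edges, I would define the bipartition $V_1' = A \times \{-1,1\}$ and $V_2' = B \times \{-1,1\}$ of $V'$. The construction of $G'$ replaces each edge $(u,v) \in E$ with the two edges $((u,1),(v,-1))$ and $((u,-1),(v,1))$ in $E'$, so $|E'| = 2|E|$. Since the bipartition ignores the sign coordinate, the edge $((u,b),(v,-b))$ is cut by $(V_1',V_2')$ exactly when $(u,v)$ was cut by $(A,B)$: both lifts of a cut edge are cut and both lifts of a non-cut edge are not. Hence the fractional cut size in $G'$ equals that in $G$, namely at least $1-\delta$.

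For the second part, let $S' \subseteq V'$ with $|S'| \leq (\eps/2)|V'| = \eps|V|$, and project it to $T = \{v \in V : (v,1) \in S' \text{ or } (v,-1) \in S'\} \subseteq V$. Then $|T| \leq |S'| \leq \eps|V|$, so by the SSE hypothesis on $G$, at least $\gamma d |T|$ edges of $E$ leave $T$, where $d$ is the common degree of $G$ and $G'$. The heart of the argument is a simple lifting step: if $(u,v) \in E$ with $u \in T$ and $v \notin T$, pick $b \in \{-1,1\}$ with $(u,b) \in S'$; since $v \notin T$, neither $(v,1)$ nor $(v,-1)$ lies in $S'$, so the edge $((u,b),(v,-b)) \in E'$ leaves $S'$. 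Distinct leaving edges of $E$ lift to distinct leaving edges of $E'$, giving at least $\gamma d |T|$ edges of $E'$ leaving $S'$. Finally $|S'| \leq 2|T|$ because each vertex of $T$ has at most two lifts in $V'$, so $\Phi(S') \geq \gamma d |T|/(d|S'|) \geq \gamma/2$.

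I do not anticipate any real obstacle beyond careful factor-of-two bookkeeping in the edge and vertex counts; the only conceptual point is the use of the projection $S' \mapsto T$, which loses at most a factor of $2$ in both the size constraint and the expansion guarantee, matching the $(\eps/2,\gamma/2)$ parameters claimed.
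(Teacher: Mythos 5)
Your proof is correct and follows essentially the same route as the paper: lift the cut coordinate-wise for part (1), and for part (2) project $S'$ to its set of base vertices (your $T$, the paper's $S$), apply the small-set expansion of $G$, and absorb the factor of $2$ coming from $|S'|\leq 2|T|$. The only difference is presentational — the paper phrases the lifting of escaping edges as a probabilistic statement about a random edge leaving $S'$, while you count the lifted escaping edges directly — and both yield $\Phi(S')\geq \gamma/2$.
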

\begin{proof}
  The first item is obvious. For the second item, suppose $S'\subseteq V'$ is a set of fractional size at most $\eps/2$, and
  let $S = \sett{v}{\exists b\in\{-1,1\},~ (v,b)\in S'}$. Then $S$ has fractional size at most $\eps$, and we may write
  \[
  \Phi(S')
  \geq \Prob{(v, b)\in S', (u,b')\text{ neighbour}}{(u,b')\not\in S'}
  \geq \Prob{(v, b)\in S', (u,b')\text{ neighbour}}{u\not\in S}.
  \]
  Note that sampling $(v,b)\in S'$, the distribution of $v$ may not be uniform over $S$, but the probability of each $v\in S$
  is either $1/\card{S'}$ or $2/\card{S'}$, hence
  \[
  \Prob{(v, b)\in S', (u,b')\text{ neighbour}}{u\not\in S}
  \geq
  \frac{1}{2}\Prob{v\in S, u\text{ neighbour}}{u\not\in S}
  =\frac{1}{2}\Phi(S)
  \geq \frac{\gamma}{2}.\qedhere
  \]
\end{proof}

We thus write the program below, which is the standard semi-definite program relaxation for Max-Cut for the graph $G'$:
\[
\begin{array}{lll}\label{prog:GW}
  \min & \frac{1}{\card{E'}}\sum\limits_{((u,b),(v,b'))\in E'}\norm{x_{u,b} - x_{v,b'}}_2^2 & ~\\
  \text{subject to} & \norm{x_{v,b}}_2 = 1 &~\forall (v,b)\in V',\\
  &x_{(v,b)}\in \mathbb{R}^{2n} &~\forall (v,b)\in V',\\
  &x_{(v,-b)}  = -x_{(v,b)} &~\forall (v,b)\in V'.
\end{array}
\]
It is clear that if $G'$ contains a cut of fractional size at least $1-\delta$, then there a solution for
the above program with value at most $\delta$. Using the ellipsoid algorithm, we may efficiently find
a collection of vectors $(x_v)_{v\in V}$ achieving value at most $\delta + \xi = \delta'$ where $\xi$
decays with the runtime of the algorithm; we take $\xi = \delta$ and fix such solution henceforth

\subsection{Reducing the Dimension of the Semi-Definite Program Solution}\label{sec:dim_reduce_3}
The goal of this section is to prove the following lemma.
\begin{lemma}\label{lem:dim_reduce}
  There exists an absolute constant $C>0$ such that the following holds. There is an efficient, randomized procedure
  that given a solution $(x_v)_{v\in V'}$ to SDP program to Max-Cut with value at most $\delta'$ consisting of vectors
  from $\mathbb{R}^{n}$, outputs a solution $(z_v)_{v\in V'}$ consisting of vectors from $\mathbb{R}^3$, whose value is
  at most $C\delta'$.
\end{lemma}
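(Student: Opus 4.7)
The plan is to apply an i.i.d.\ Gaussian random projection to $\mathbb{R}^3$ followed by normalization. Draw $g_1,g_2,g_3 \sim N(0,I_n)$ independent; for each $v\in V'$ set $y_v = (\langle g_1,x_v\rangle,\langle g_2,x_v\rangle,\langle g_3,x_v\rangle) \in \mathbb{R}^3$ and $z_v := y_v/\|y_v\|_2$. The unit-norm constraint is trivially satisfied, and the antipodal constraint $z_{(v,-b)}=-z_{(v,b)}$ is preserved since both the linear projection and the normalization are odd. I then aim to show that in expectation over the randomness of $g_1,g_2,g_3$ the new SDP objective is $O(\delta')$; Markov's inequality then produces a good projection with probability at least a half, and a handful of independent trials followed by direct verification of the objective makes the overall procedure efficient.

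The core is a per-edge estimate: for every $e = ((u,b),(v,b'))\in E'$ with $\delta_e := \|x_{u,b}-x_{v,b'}\|_2^2$, we want $\E[\|z_{u,b}-z_{v,b'}\|_2^2] = O(\delta_e)$; summing over $e$ gives the desired bound by linearity. The starting point is the elementary inequality
\[
\left\|\tfrac{a}{\|a\|}-\tfrac{b}{\|b\|}\right\|_2 \leq \tfrac{2\|a-b\|_2}{\|b\|_2},
\]
(obtained by adding and subtracting $b/\|a\|$ and using the reverse triangle inequality), which gives $\|z_{u,b}-z_{v,b'}\|_2^2 \leq 4\,\|y_{u,b}-y_{v,b'}\|_2^2/\|y_{v,b'}\|_2^2$. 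I would then decompose $x_{u,b} = \alpha\,x_{v,b'}+\beta w$ with $w$ a unit vector orthogonal to $x_{v,b'}$, $\alpha = 1-\delta_e/2$, and $\beta^2 \leq \delta_e$. By rotational invariance of Gaussians and the orthogonality $w\perp x_{v,b'}$, the projection $\tilde w$ of $w$ is a standard Gaussian on $\mathbb{R}^3$ that is \emph{independent} of $y_{v,b'}$.

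Writing $y_{u,b}-y_{v,b'} = (\alpha-1)y_{v,b'}+\beta\tilde w$ and expanding, the ratio $\|y_{u,b}-y_{v,b'}\|_2^2/\|y_{v,b'}\|_2^2$ becomes
\[
(\alpha-1)^2 \;+\; \frac{2(\alpha-1)\beta\langle \tilde w,y_{v,b'}\rangle}{\|y_{v,b'}\|_2^2} \;+\; \frac{\beta^2\|\tilde w\|_2^2}{\|y_{v,b'}\|_2^2}.
\]
The cross term has expectation zero since $\tilde w$ is mean-zero and independent of $y_{v,b'}$. The third term's expectation equals $\beta^2\,\E[\|\tilde w\|_2^2]\,\E[1/\|y_{v,b'}\|_2^2] = 3\beta^2$ by independence, using $\|y_{v,b'}\|_2^2 \sim \chi_3^2$ and the classical identity $\E[1/\chi_3^2]=1$. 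Combining, $\E[\|z_{u,b}-z_{v,b'}\|_2^2] \leq 4\bigl((\delta_e/2)^2+3\beta^2\bigr) = O(\delta_e)$, as desired.

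The delicate point --- and the reason the lemma targets $\mathbb{R}^3$ rather than $\mathbb{R}^2$ --- is the potential blow-up of the normalization when $\|y_{v,b'}\|_2$ happens to be near $0$. In dimension $3$ the chi-square density vanishes like $t^{1/2}$ at the origin, so $\E[1/\chi_3^2]$ is a finite constant and the clean bound above goes through. In dimension $2$ the chi-square density is bounded away from $0$ near the origin, $\E[1/\chi_2^2]=\infty$, and the same argument would require a truncation that costs an extra $\log(1/\delta')$ factor --- exactly the loss flagged in the introduction as the reason to project to $\mathbb{R}^3$ rather than to $\mathbb{R}^2$.
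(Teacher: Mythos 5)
Your proposal is correct, and it reaches the per-edge bound $\E\left[\norm{z_{u,b}-z_{v,b'}}_2^2\right]=O(\delta_e)$ by a genuinely different and in fact slicker route than the paper. The paper also projects with three independent Gaussians and normalizes, but its per-edge analysis is done ``by hand'': it splits on the event $\norm{y_v}_2\leq\norm{y_u}_2$, decomposes dyadically according to the scale of $\norm{y_u}_2$, applies H\"older's inequality with high moments of $\norm{y_u-y_v}_2$, and uses the anti-concentration bound $\Prob{}{\norm{y_u}_2\leq s}=O(s^3)$ coming from the three independent coordinates; the choice of dimension $3$ enters exactly through that cubic small-ball estimate. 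You instead exploit the exact Gaussian structure: writing $x_{u,b}=\alpha x_{v,b'}+\beta w$ with $w\perp x_{v,b'}$, rotational invariance makes the projected $\tilde w$ independent of $y_{v,b'}$, the cross term vanishes in expectation (it is integrable since $\E\left[1/\norm{y_{v,b'}}_2\right]<\infty$ in dimension $3$), and the whole estimate reduces to the finiteness of $\E\left[1/\chi_3^2\right]=1$ --- which is the same ``dimension $3$ suffices'' phenomenon as the paper's $O(s^3)$ bound, but packaged as a single closed-form moment, giving explicit constants (your bound is at most $4\left(\delta_e^2/4+3\delta_e\right)\leq 16\delta_e$) with no dyadic bookkeeping. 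Two cosmetic points: the elementary inequality you invoke should be derived by inserting the intermediate point $a/\norm{b}$ (inserting $b/\norm{a}$ yields the version with $\norm{a}$ in the denominator, which would also work after symmetrizing), and the paper states the lemma only in expectation, so your added Markov-plus-repetition step to make the procedure constructive is a harmless strengthening rather than a necessity.
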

\begin{proof}
  Let $g_1,g_2,g_3\sim {\sf N}(0,I_n)$ be independent multi-dimensional Gaussians. For each $v\in V$, define
  \[
  y_v = \left(\inner{x_v}{g_1}, \inner{x_v}{g_2}, \inner{x_v}{g_3}\right),
  \qquad\qquad
  z_v = \frac{y_v}{\norm{y_v}_2}.
  \]
  Clearly, $(z_v)_{v\in V'}$ is a solution to the SDP program, and we next analyze the objective value that it gets.
  Fix $u,v\in V'$, and let $\delta_{u,v} = \norm{x_u - x_v}_2^2$. We will show that, $\Expect{g_1,g_2,g_3}{\norm{z_v - z_u}_2^2} = O(\delta_{u,v})$
  from which the claim follows by linearity of expectation over all the edges of the graph.

  Denote $z = \frac{y_v}{\norm{y_u}_2}$, then
  \begin{align*}
  \Expect{g_1,g_2,g_3}{\norm{z_v - z_u}_2^2}
  &\leq
  2\Expect{g_1,g_2,g_3}{\norm{z_v - z_u}_2^2 1_{\norm{y_v}_2\leq \norm{y_u}_2}}\\
  &\leq
  4\underbrace{\Expect{g_1,g_2,g_3}{\norm{z_v - z}_2^2 1_{\norm{y_v}_2\leq \norm{y_u}_2}}}_{(\rom{1})} +
  4\underbrace{\Expect{g_1,g_2,g_3}{\norm{z_u - z}_2^2 1_{\norm{y_v}_2\leq \norm{y_u}_2}}}_{(\rom{2})},
  \end{align*}
  and we upper bound each expectation separately, each by $O(\delta_{u,v})$, thereby finishing the proof.
  \begin{claim}
    $(\rom{2})\leq O(\delta_{u,v})$.
  \end{claim}
  \begin{proof}
  We write
  \begin{align*}
  \Expect{g_1,g_2,g_3}{\norm{z_u - z}_2^2 1_{\norm{y_v}_2\leq \norm{y_u}_2}}
  &=\Expect{g_1,g_2,g_3}{\frac{\norm{y_u - y_v}_2^2}{\norm{y_u}_2^2} 1_{\norm{y_v}_2\leq \norm{y_u}_2}}\\
  &=
  \Expect{g_1,g_2,g_3}{\frac{\norm{y_u - y_v}_2^2}{\norm{y_u}_2^2} 1_{\norm{y_v}_2\leq \norm{y_u}_2} 1_{\norm{y_u}_2< \delta_{u,v}}}\\
  &+
  \Expect{g_1,g_2,g_3}{\frac{\norm{y_u - y_v}_2^2}{\norm{y_u}_2^2} 1_{\norm{y_v}_2\leq \norm{y_u}_2} 1_{\norm{y_u}_2\geq \delta_{u,v}}}.
  \end{align*}
  For the first expectation, as $\norm{y_v}_2\leq \norm{y_u}_2$ we have
  $\frac{\norm{y_u - y_v}_2^2}{\norm{y_u}_2^2}\leq 4$, so the first expectation is bounded by
  \[
  4\Expect{g_1,g_2,g_3}{1_{\norm{y_v}_2\leq \norm{y_u}_2} 1_{\norm{y_u}_2< \delta_{u,v}}}
  \leq
  4\Expect{g_1,g_2,g_3}{1_{\norm{y_u}_2< \delta_{u,v}}}
  \leq 4 \Prob{G\sim N(0,1)}{\card{G}\leq \delta_{u,v}}
  =O(\delta_{u,v}).
  \]
  For the second expectation, we upper bound it by
  \[
  \sum\limits_{k=0}^{\infty}
  \Expect{g_1,g_2,g_3}{\frac{\norm{y_u - y_v}_2^2}{2^{2k} \delta_{u,v}^2} 1_{2^{k}\delta_{u,v}\leq \norm{y_u}_2< 2^{k+1}\delta_{u,v}}}.
  \]
  We bound the tail, that is the sum over $k\geq \log(1/\delta_{u,v})$, by
  \[
  \sum\limits_{k=\lceil{\log(1/\delta_{u,v})\rceil}}^{\infty}
  \Expect{g_1,g_2,g_3}{\norm{y_u - y_v}_2^2 1_{2^{k}\delta_{u,v}\leq \norm{y_u}_2< 2^{k+1}\delta_{u,v}}}
  \leq\Expect{g_1,g_2,g_3}{\norm{y_u - y_v}_2^2}
  =O(\norm{x_u-x_v}_2^2) = O(\delta_{u,v}).
  \]

  As for the sum on $k=0,\ldots,\lceil{\log(1/\delta_{u,v})\rceil}-1$, take $p\in\mathbb{N}$ sufficiently large ($p=4$ will do), and let $p'$ be the H\"{o}lder conjugate of $p$.
  Then
  \begin{align}
  &\Expect{g_1,g_2,g_3}{\frac{\norm{y_u - y_v}_2^2}{2^{2k} \delta_{u,v}^2} 1_{2^{k}\delta_{u,v}\leq \norm{y_u}_2< 2^{k+1}\delta_{u,v}}}\notag\\\label{eq1}
  &\qquad\leq
  \frac{1}{2^{2k}\delta_{u,v}^2}
  \left(\Expect{g_1,g_2,g_3}{\norm{y_u - y_v}_2^{2p}}\right)^{1/p}
  \left(\Expect{g_1,g_2,g_3}{1_{2^{k}\delta_{u,v}\leq \norm{y_u}_2< 2^{k+1}\delta_{u,v}}}\right)^{1/p'}.
  \end{align}
  First, we have
  \[
  \Expect{g_1,g_2,g_3}{\norm{y_u - y_v}_2^{2p}}
  =\Expect{g_1,g_2,g_3}{\left(G_1^2+G_2^2+G_3^2\right)^p}
  \leq 3^{p-1}\Expect{g_1,g_2,g_3}{G_1^{2p}+G_2^{2p}+G_3^{2p}}
  =3^p \Expect{g_1,g_2,g_3}{G_1^{2p}},
  \]
  where $G_1 = \inner{g_1}{x_u-x_v}$, $G_2 = \inner{g_2}{x_u-x_v}$, $G_3 = \inner{g_3}{x_u-x_v}$.
  As $G_1$ is a Gaussian random variable with mean $0$ and variance $\delta_{u,v}$, we know that
  \[
  \Expect{g_1,g_2,g_3}{G_1^{2p}} = O_p(\delta_{u,v}^p).
  \]

  Second,
  \[
  \Expect{g_1,g_2,g_3}{1_{2^{k}\delta_{u,v}\leq \norm{y_u}_2< 2^{k+1}\delta_{u,v}}}
  \leq \Prob{g_1,g_2,g_3}{\card{\inner{g_1}{x_u}},\card{\inner{g_2}{x_u}},\card{\inner{g_3}{x_u}}\leq 2^{k+1}\delta_{u,v}},
  \]
  and as $g_1,g_2,g_3$ are independent, the last probability is at most $2^{3(k+1)}\delta_{u,v}^3$.

  Plugging everything into~\eqref{eq1} we get that the sum on $k=0,\ldots,\lceil{\log(1/\delta_{u,v})\rceil}-1$ is at most
  \[
  \sum\limits_{k=0}^{\lceil{\log(1/\delta_{u,v})\rceil}-1}
  \frac{1}{2^{2k}\delta_{u,v}^2}
  O_p(\delta_{u,v})
  \left(2^{3(k+1)}\delta_{u,v}^3\right)^{1/p'}
  =\delta^{3/p'-1}O_p(1)\sum\limits_{k=0}^{\lceil{\log(1/\delta_{u,v})\rceil}-1}
  2^{(3/p'-2)k}.
  \]
  For sufficiently large $p$, $3/p' - 2 > 0$, and we may bound the last sum by
  $O(2^{(3/p'-2)\lceil{\log(1/\delta_{u,v})}\rceil})=\delta_{u,v}^{2-3/p'}$, and plugging
  that in yields the bound $O_p(\delta_{u,v})$.
  \end{proof}

  \begin{claim}
    $(\rom{1})\leq O(\delta_{u,v})$.
  \end{claim}
  \begin{proof}
  We write
  \begin{align*}
  \Expect{g_1,g_2,g_3}{\norm{z_v - z}_2^2 1_{\norm{y_v}_2\leq \norm{y_u}_2}}
  &=\Expect{g_1,g_2,g_3}{\norm{y_v\left(\frac{1}{\norm{y_v}_2} - \frac{1}{\norm{y_u}_2}\right)}_2^2 1_{\norm{y_v}_2\leq \norm{y_u}_2}}\\
  &=
  \Expect{g_1,g_2,g_3}{\left(1 - \frac{\norm{y_v}_2}{\norm{y_u}_2}\right)^2 1_{\norm{y_v}_2\leq \norm{y_u}_2}}\\
  &=\Expect{g_1,g_2,g_3}{\left(\frac{\norm{y_u}_2-\norm{y_v}_2}{\norm{y_u}_2}\right)^2 1_{\norm{y_v}_2\leq \norm{y_u}_2}}\\
  &\leq
  \Expect{g_1,g_2,g_3}{\left(\frac{\norm{y_u - y_v}_2}{\norm{y_u}_2}\right)^2 1_{\norm{y_v}_2\leq \norm{y_u}_2}}\\
  &=(\rom{1}),
  \end{align*}
  hence the claim follows from the prior claim. We used the fact that $\card{\norm{a}-\norm{b}}\leq \norm{a-b}$.
\end{proof}
\end{proof}

\subsection{The Soft Rounding: the Cubicular Rounding}\label{sec:MC_cubic_round}
In this section, we pick off where the last section ended with a solution vector collection $(z_v)_{v\in V'}$ consisting
of vectors from $\mathbb{R}^3$ that has value $\delta''$ where $\delta''\leq O(\delta') = O(\delta)$. Our goal is to construct a new vector solution
$\{z_v''\}_{v\in V'}$ whose value is at most $\delta'''$ for $\delta''' = O(\delta''/\gamma)$ that satisfies the following property.
The set of vectors $\{z_v''\}_{v\in V'}$ can be partitioned into two collections, $Z_1$ and $Z_2$, such that
$\card{Z_2}\leq 1/\eps$, and for any edge $(u,v)$ such that $z_u''\neq z_v''$ and $z_u''\in Z_1$ it holds that $\norm{z_u'' - z_v''}_2\geq \Omega(\eps)$.
In Section~\ref{sec:round} we show how to use this property to perform a more efficient
analysis of the standard hyperplane rounding.

\subsubsection{Partitioning the $3$-dimensional sphere into cubes}
Consider the $3$ dimensional sphere $S^2 = \sett{z\in\mathbb{R}^3}{\norm{z}_2 = 1}$. We will use a simple construction of triangulation
of the sphere (geodesic polyhedron), that we describe next. Starting with a standard cube in $\mathbb{R}^2$,
denoted it by $\mathcal{T}_6$ (that has $6$ faces and $8$ vertices), we may partition each face of it (which is a square) into $4$
squares in the natural way (i.e. by adding the lines that connect the midpoints of opposing sides).
We then project the $4$ new points that were generated by each face into $S^2$,
and get a more refined regular polygon with square faces, with $4$ times as many faces; denote it by $\mathcal{T}_{24}$. We will perform this operation
several times to reach a sufficiently refined shape; we need the perimeter of each face to be small compared to $\eps$, say at most $\frac{\eps}{K}$
for sufficiently large absolute constant $K$. We note that at each iteration, the perimeter of the faces shrinks by factor $2$, hence
we may pick a power of $4$, $t\geq \frac{K'}{\eps^2}$ for $K'$ depending only on $K$, and have that each phase of $\mathcal{T}_t$ has
perimeter at most $\frac{\eps}{K}$. We summarize this discussion with the following standard fact.

\begin{fact}\label{fact:triangulation}
  For all $K>0$ there is $K' = K'(K)$, such that for all $\eps>0$,
  taking $t$ the smallest number of the form $6\cdot 4^{m}$ that is larger than $K'/\eps^2$,
  there is a regular polygon $\mathcal{T}_t\subseteq \mathbb{R}^3$ that can be constructed in $O_{\eps}(1)$ time
  such that:
  \begin{enumerate}
    \item Each face of $\mathcal{T}_t$ is a square.
    \item Central symmetry: $P$ is a face of $\mathcal{T}_t$ if and only if $-P$ is a face of $\mathcal{T}_t$.
    \item $\mathcal{T}_t$ has $t$ faces and $\frac{4}{3} t$ vertices.
    \item The perimeter of each face is at most $\eps/K$.
  \end{enumerate}
\end{fact}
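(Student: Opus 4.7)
The plan is to prove all four properties by induction on the number $m$ of refinement steps, starting from the cube $\mathcal{T}_6 \subseteq \mathbb{R}^3$ inscribed in $S^2$. The refinement step applied to a polytope with quadrilateral faces takes each face $abcd$, computes the four edge midpoints and the face center, projects these five new points onto $S^2$ via $p\mapsto p/\norm{p}_2$, and replaces the original face with the four spherical quadrilaterals formed in the natural way. Properties (1) and (2) then follow immediately: each refinement produces four quadrilateral pieces per face by construction, and the initial cube is centrally symmetric while each refinement commutes with the antipodal map, since $p\mapsto p/\norm{p}_2$ commutes with $p\mapsto -p$ and midpoints commute with antipodes.

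For property (3), the face count multiplies by $4$ at each step, giving $F_m = 6 \cdot 4^m = t$. For the vertex count, each step introduces one new vertex per existing face (the projected center) and one per existing edge (the projected midpoint); since a polytope with $F$ quadrilateral faces has $E = 2F$ edges by double counting, we obtain $V_{m+1} = V_m + 3F_m$. With $V_0 = 8$, $F_0 = 6$ this recursion gives $V_m = 2 + 6\cdot 4^m = 2 + t$, which is at most $\tfrac{4}{3}t$ for every $t \geq 6$, consistent with the stated bound.

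For property (4), the essential point is that each refinement step shrinks face perimeters by a factor close to $2$. Projecting the midpoint of a small chord onto $S^2$ moves it by a distance of order the squared length of the chord, so for small enough faces the new (projected) edges have Euclidean length $(1+o(1))$ times half the length of the edge being subdivided. After $m$ iterations every face therefore has perimeter at most $C_0\cdot 2^{-m}$ for some absolute constant $C_0$, and choosing $K' = (KC_0)^2$ ensures that once $t = 6\cdot 4^m \geq K'/\eps^2$ we have $2^{-m}\leq \eps/(KC_0)$, giving perimeter at most $\eps/K$. Since each refinement step performs a constant amount of work per face, the construction runs in time polynomial in $t = O_\eps(1)$.

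The main (in fact only) technical obstacle is the perimeter-shrinkage estimate: over many iterations one must guarantee that the cumulative distortion from repeatedly projecting onto the sphere does not amplify the constant in front of $2^{-m}$. This is controlled by a bi-Lipschitz bound for $p\mapsto p/\norm{p}_2$ restricted to a small neighborhood of $S^2$ inside which all newly generated midpoints lie; the Lipschitz constants tend to $1$ as the neighborhood shrinks, so the cumulative distortion across all iterations remains bounded by an absolute constant and can be absorbed into $C_0$.
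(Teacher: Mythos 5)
Your construction is the same one the paper has in mind: the Fact is stated there as ``standard'' with only the informal sketch preceding it (start from the cube, quadrisect each face, project to $S^2$, perimeters roughly halve each step), and your write-up simply supplies the missing details -- the induction, the symmetry argument, and the bi-Lipschitz control of the cumulative projection distortion -- all of which are sound. One substantive point: your vertex count is the correct one and it disagrees with item (3) as literally stated. Since all faces are quadrilaterals, $E=2F$, and Euler's formula forces $V=F+2$, so $\mathcal{T}_t$ has $t+2$ vertices rather than exactly $\tfrac{4}{3}t$; the paper's figure matches only the initial cube $t=6$ and otherwise double-counts the shared edge midpoints. This is harmless downstream (the later arguments only use that the numbers of faces, boundary arcs and corners are $O(1/\eps^2)$ and that each face has $O(1)$ arcs), and reading $\tfrac{4}{3}t$ as an upper bound, as you do, is the sensible repair. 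Two cosmetic fixes: with $K'=(KC_0)^2$ the smallest admissible $t$ only yields $2^{-m}\leq \sqrt{6}\,\eps/(KC_0)$, so take $K'=6(KC_0)^2$ to get perimeter at most $\eps/K$; and the projected faces are spherical quadrilaterals of slightly varying size rather than literal congruent squares, which is all the paper actually uses.
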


\subsubsection{Division of the sphere}
Using the triangulation $\mathcal{T}_t$ from Fact~\ref{fact:triangulation}, we may define a division of $S^2$
by projecting each face of $\mathcal{T}_t$ into the sphere. This way, we get a partition $P_1\cup\ldots\cup P_t$
of $S^2$. We also note that the perimeter of each of $P_i$ is the same, and is at most order of the perimeter
of a face from $\mathcal{T}_t$, i.e. at most $O(\eps/K)$.

For each $i$ define the neighbourhood of $P_i$ by:
\[
P_i^{\uparrow} = \sett{u\in S^{2}}{\exists p\in P_i, \norm{u-p}_2\leq \eps}.
\]
For each subset of $A\subseteq S^2$, we define the mass of $A$ as ${\sf mass}(A) = \card{\sett{v\in V}{z_v\in A}}$.

\begin{claim}\label{claim:mass_sum}
  $\sum\limits_{i=1}^t {\sf mass}(P_i^{\uparrow}) = O(n)$.
\end{claim}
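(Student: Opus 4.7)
The plan is to swap the order of summation and bound, for each vertex $v$, the number of pieces $P_i$ whose $\eps$-neighborhood contains $z_v$; equivalently the number of pieces $P_i$ that intersect the spherical ball $B(z_v,\eps) = \{u \in S^2 : \|u - z_v\|_2 \leq \eps\}$. Writing this out,
\[
\sum_{i=1}^{t} {\sf mass}(P_i^{\uparrow}) = \sum_{v \in V'} \big|\{i \in [t] : P_i \cap B(z_v,\eps) \neq \emptyset\}\big|,
\]
so it suffices to show that each vertex contributes at most an absolute constant.

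First, I would use the diameter bound on the pieces. By Fact~\ref{fact:triangulation}, each face of $\mathcal{T}_t$ has perimeter at most $\eps/K$, so after projection to $S^2$ each $P_i$ still has diameter $O(\eps/K)$ (projection from a face lying close to the sphere onto the sphere distorts distances by a bounded factor, since $t$ is chosen so each face is $O(\eps)$-sized and hence nearly flat). Consequently, any $P_i$ that intersects $B(z_v,\eps)$ is entirely contained in $B(z_v, \eps + O(\eps/K)) \subseteq B(z_v, 2\eps)$ for $K$ a sufficiently large constant.

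Next, I would use an area comparison. The pieces $P_1,\ldots,P_t$ form a partition of $S^2$, so their areas sum to $4\pi$; by the symmetry of the construction, each $P_i$ has area $\Theta(1/t) = \Theta(\eps^2/K')$. On the other hand, the spherical cap $B(z_v,2\eps)$ has area $O(\eps^2)$. Therefore the number of pieces $P_i$ contained in $B(z_v,2\eps)$ is at most
\[
\frac{\text{area}(B(z_v,2\eps))}{\min_i \text{area}(P_i)} = O(K') = O(1),
\]
since $K'$ (and $K$) are absolute constants. Combining with the previous paragraph, each vertex $v$ lies in at most $O(1)$ of the neighborhoods $P_i^{\uparrow}$, so summing over the $|V'| = 2n$ vertices yields $\sum_{i=1}^t {\sf mass}(P_i^{\uparrow}) = O(n)$.

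I do not foresee a serious obstacle here: the only subtlety is that $P_i$ is the spherical projection of a planar face, so one must check that diameters do not blow up under projection. This is straightforward because each flat face sits at distance $1 + O(\eps^2)$ from the origin and has diameter $O(\eps)$, so the radial projection is bi-Lipschitz with constant close to $1$ on each face.
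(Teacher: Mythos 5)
Your proposal is correct and takes essentially the same route as the paper: both are double-counting arguments resting on the single geometric fact that, since each piece has diameter $O(\eps/K)$ while the neighborhoods have radius $\eps$, the sets $P_i^{\uparrow}$ have only $O(1)$-bounded overlap. The paper organizes this as an $O(1)$ bound on the degrees of an auxiliary bipartite graph (each $P_i^{\uparrow}$ meets only $O(1)$ parts $P_j$) and sums ${\sf mass}$ over pieces, whereas you sum over vertices and justify the $O(1)$ multiplicity by an explicit area-packing estimate inside the cap of radius $2\eps$ — the same fact the paper asserts without proof.
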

\begin{proof}
  Consider the bipartite graph $\tilde{G} = ([t]\times [t], \tilde{E})$ wherein we connect
  $i$ to $j$ if $P_i^{\uparrow} \cap P_j\neq\emptyset$. We note that the degree of
  each vertex in $\tilde{G}$ is $O(1)$, and $P_i^{\uparrow} \subseteq \cup_{j:(i,j)\in\tilde{E}} P_j$.
  Thus,
  \[
  \sum\limits_{i=1}^t {\sf mass}(P_i^{\uparrow})
  \leq
  \sum\limits_{i=1}^t \sum\limits_{j:(i,j)\in \tilde{E}}{\sf mass}(P_j)
  =\sum\limits_{j=1}^{t} {\sf deg}(j){\sf mass}(P_j)
  =O\left(\sum\limits_{j=1}^{t} {\sf mass}(P_j)\right)
  =O(n).\qedhere
  \]
\end{proof}
It follows that for all but at most $2/\eps$ of the $i$'s we have that ${\sf mass}(P_i\uparrow)\leq \eps n/2$.
We refer to such $i$'s as light parts of the partition. In the following section we show that one may change
a vector $z_v$ such that $v$ is in $P_i$ for a light $i$, to be a vector in the boundary of $P_i$, and only incur a
constant factor loss in the objective value, effectively reducing these vectors to be $2$-dimensional.

\subsubsection{Rounding to the Boundary}
We now modify the solution $(z_u)_{u\in V}$ to $(z'_u)_{u\in V}$ as follows. For $u$, consider the $i$ such that $z_u\in P_i$. If $P_i$ is heavy
we set $z'_u = z_u$, and otherwise we modify it as follows. Let $p_i$ be the center of $P_i$, and consider
the segment from $p_i$ to $z_u$, and in particular its two intersection points with the boundary of $P_i$. We
let $z'_u$ be the intersection point closer to $z_u$ among these $2$.
\begin{lemma}\label{lem:round_to_boundary}
  The collection of vectors $z_u'$ defined about is a solution to the Max-Cut program, and
  \[
  \frac{1}{\card{E'}}\sum\limits_{(u,v)\in E'}\norm{z'_{u} - z'_{v}}_2^2\leq \delta'''
  \]
  for $\delta''' = O(\delta''/\gamma^3)$.
\end{lemma}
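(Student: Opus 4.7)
The plan is to verify two things: (i) the collection $(z'_u)_{u \in V'}$ is a feasible solution to the Max-Cut SDP, and (ii) its objective is at most $O(\delta''/\gamma^3)$.

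Feasibility is immediate. Each $z'_u$ lies on the boundary of some $P_i$, and since the $P_i$ are obtained by projecting faces of $\mathcal{T}_t$ onto $S^2$, $z'_u \in S^2$, so $\|z'_u\|_2 = 1$. The antipodal symmetry $z'_{v,-b} = -z'_{v,b}$ follows from the central symmetry of $\mathcal{T}_t$ (item 2 of Fact~\ref{fact:triangulation}): $z_{v,b} \in P_i$ implies $z_{v,-b} = -z_{v,b} \in -P_i$, whose center is $-p_i$, and the radial rounding rule is itself antipodally covariant. Similarly, $P_i$ is light iff $-P_i$ is light, so the set of modified vertices is closed under the antipodal map.

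For the objective, I apply the elementary bound $\|z'_u - z'_v\|_2^2 \le 3\|z_u - z_v\|_2^2 + 3\|z_u - z'_u\|_2^2 + 3\|z_v - z'_v\|_2^2$, sum over the $d$-regular graph $G'$, and obtain
\[
\sum_{(u,v) \in E'} \|z'_u - z'_v\|_2^2 \le 3\delta''|E'| + 6d\sum_{u \in L} \|z_u - z'_u\|_2^2,
\]
where $L$ is the set of vertices in light parts (heavy-part vertices contribute zero displacement). For $u \in L$, the displacement $\|z_u - z'_u\|_2$ is at most the diameter of its part, which is $O(\eps/K)$ for $K$ a large absolute constant. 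So it remains to control $|L|$.

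The key step is to bound $|L|$ via small-set expansion. For each light $P_i$, let $T_i = \{v : z_v \in P_i\}$ and enlarge it to $S_i = \{v : z_v \in P_i^{\uparrow}\}$, which has $|S_i| \le \eps|V'|/2$ by the definition of lightness. Since $G'$ is $(\eps/2,\gamma/2)$-small-set expanding, $\Phi(S_i) \ge \gamma/2$, giving at least $(\gamma/2)d|S_i|$ edges leaving $S_i$. For any edge $(u,v)$ with $u \in T_i$ and $v \notin S_i$, we have $z_u \in P_i$ while $z_v$ is $\Omega(\eps)$-far from $P_i$, so $\|z_u-z_v\|_2 \ge \Omega(\eps)$. (One may first replace the neighborhood parameter by $2\eps$ and adjust constants to make this separation clean.) These edges contribute $\Omega(\gamma \eps^2 d|T_i|)$ to the SDP cost. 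Summing over light $i$, using the finite multiplicity $M=O(1)$ of the cover $\{S_i\}$ on $S^2$ exactly as in Claim~\ref{claim:mass_sum}, each edge is counted $O(1)$ times, so
\[
\delta''|E'| \;\gtrsim\; \tfrac{1}{M}\,\gamma\eps^2\, d\sum_{i \text{ light}} |T_i| \;\ge\; \Omega(\gamma \eps^2)\, d\,|L|,
\]
which yields $|L| \le O(\delta''|V'|/(\gamma\eps^2))$. Plugging back, the extra contribution is $6d\cdot O(\eps^2/K^2)\cdot |L| \le O(\delta''/(\gamma K^2))\cdot |E'|$, giving a total objective of $O(\delta''/\gamma) \le O(\delta''/\gamma^3)$.

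The main obstacle is the geometric bookkeeping in the last paragraph: verifying that the covers $\{S_i\}$ (and the corresponding ``edges leaving $S_i$'' sums) have $O(1)$ multiplicity and that an edge leaving $S_i$ truly has length $\Omega(\eps)$ after we slightly enlarge the neighborhood radius from $\eps$ to some $c\eps$ with $c>1$. Once these two low-dimensional geometric facts are in place, the SSE-based counting closes the loop and gives the claimed $O(\delta''/\gamma^3)$ bound.
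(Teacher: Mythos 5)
Your feasibility argument and your overall accounting scheme (bound each displacement by the cell diameter $O(\eps/K)$, then bound the number $\card{L}$ of vertices in light cells via a per-cell lower bound on the SDP cost) are fine in outline, but the key step --- the per-cell cost lower bound --- has a genuine gap. You apply small-set expansion once to $S_i=\sett{v}{z_v\in P_i^{\uparrow}}$ to get $(\gamma/2)d\card{S_i}$ edges leaving $S_i$, and then treat these as edges from $T_i$ to the complement of $S_i$, hence of length $\Omega(\eps)$. But SSE applied to $S_i$ only produces edges whose $S_i$-endpoint may lie anywhere in $S_i$ --- in particular in the annulus $P_i^{\uparrow}\setminus P_i$, right next to the outer boundary --- with the other endpoint just outside; such edges need not touch $T_i$ and can be arbitrarily short. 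A single application of SSE forces neither long edges nor edges from $T_i$ that exit $S_i$. Concretely, the vectors of vertices near a light cell can spread outward gradually in thin shells of width about $\gamma\eps$, each shell roughly $(1+\Theta(\gamma))$ times larger than the union of the previous ones; this satisfies all SSE constraints with every edge of length $O(\gamma\eps)$, and the total cost of edges incident to $P_i^{\uparrow}$ is then of order $\gamma^2\eps^2 d\card{T_i}$, strictly below your claimed $\Omega(\gamma\eps^2 d\card{T_i})$. So your final bound $O(\delta''/\gamma)$ cannot be reached this way --- the fact that it is stronger than the lemma's $O(\delta''/\gamma^3)$ is itself a warning sign.

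What is needed, and what the paper does, is an iterative ball-growing use of SSE rather than a one-shot application: starting from $B_0$ (the vertices mapped into the cell) with initial radius increment $\ell_1=\gamma\eps/W$ and subsequent increments $\sqrt{\card{B_0}/\card{B_k}}\,\ell_1$, at each step either at least $\tfrac{\gamma}{2}d\card{B_k}$ edges jump over the next annulus, contributing cost at least $\tfrac{\gamma}{2}d\card{B_0}\ell_1^2=\Omega(\gamma^3\eps^2)d\card{B_0}$, or $\card{B_{k+1}}\geq(1+\gamma/2)\card{B_k}$; since the radii sum to $O(\ell_1/\gamma)\leq\eps$, all the $B_k$ stay inside $P_i^{\uparrow}$, whose mass is capped at $\eps n/2$ by lightness, so the geometric growth cannot continue indefinitely and the cost bound must hold. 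With that (weaker but correct) per-cell bound of $\Omega(\gamma^3\eps^2)d\card{T_i}$, your accounting does go through and yields exactly the claimed $O(\delta''/\gamma^3)$. The obstacles you flag (the $O(1)$ multiplicity of the cover and enlarging the neighborhood radius to get $\Omega(\eps)$ separation) are real but minor; the missing ingredient is the chaining argument above.
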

\begin{proof}
  First, it is clear that all of the $z_u$'s are norm $1$ vectors, and also that they respect the conditions
  $z_{u,-b} = - z_{u,b}$ as the part of a vector $z$ and its negation have the same mass (as if $P_i$ is the
  part of $z$, then $-P_i$ is the part of $z$).

  Fix a light $P_i$ and denote ${\sf cost}_i = \sum\limits_{\substack{u: z_u\in P_i^{\uparrow} \\ (u,v)\in E'}}{\norm{z_u - z_v}_2^2}$.
  We argue that ${\sf cost}_i$ grows multiplicatively by at most factor $O(1/\gamma^3)$ due to the change of the procedure on the vectors
  inside $P_i$. This finishes the proof,
  as the only other effect of the procedure we need to make note of is that for each edge $e=(u,v)\in E'$ such that $u\in P_i$, $v\in P_j$,
  $\norm{z'_{u} - z'_{v}}_2 = O(\norm{z_{u} - z_{v}}_2)$, so each edge in ${\sf cost}_j$ increases by a constant factor at most once by
  the procedure on other $P_i$'s.

  We now argue that ${\sf cost}_i$ grows multiplicatively by at most factor $O(1/\gamma^3)$, and assume towards contradiction
  this is not the case. For simplicity of notation, we drop the subscript $i$ as we will only be concerned with $P_i$ from now on.
  Let $B_0 = \sett{u}{z_u\in P}$, and let $W>0$ be a large absolute constant to be determined.
  As a result of our soft rounding, the contribution of edges whose endpoints are both outside $B_0$ grows by at most factor $O(1)$,
  and the contribution of edges that have at least one of their endpoints in $B_0$ grows additively by at most $2d\card{B_0}\eps^2$.
  Hence, if ${\sf cost} \geq  \frac{\gamma^3\eps^2}{50 W^2} d\card{B_0}$ we would have be done, so we assume from now on that
  ${\sf cost} <  \frac{\gamma^3\eps^2}{50 W^2} d\card{B_0}$

  Consider $\ell_1 = \frac{\gamma}{W}\eps$ and let $B_1 = \sett{u}{\exists p\in B_0, \norm{z_u-p}_2\leq \ell_1}$.
  Also, define
  $\ell_2 = 2\ell_1$ and set $B_2 = \sett{u}{\exists p\in B_0, \norm{z_u-p}_2\leq \ell_2}$.
  As $B_1\subseteq P^{\uparrow}$, ${\sf mass}(B_1)\leq \eps n$ so by
  small set expansion at least $\gamma d\card{B_1}$ of the edges touching $B_1$ escape
  it. If $\card{B_2\setminus B_1}\leq \frac{\gamma}{2} \card{B_1}$, we would get that
  at least $\frac{\gamma}{2} d\card{B_1}$ of the edges touching $B_1$ go outside $B_2$,
  hence
  \[
  {\sf cost}\geq \frac{\gamma}{2} d\card{B_1}(\ell_2-\ell_1)^2 = \frac{\gamma \ell_1^2}{2} d\card{B_1} > \frac{\gamma^3\eps^2}{50 W^2} d\card{B_0}
  \]
  and contradiction. Hence $\card{B_2}\geq (1+\gamma/2)\card{B_1}$. Next, define
  $\ell_3 = \ell_2 + \sqrt{\frac{\card{B_0}}{\card{B_2}}}\ell_1$ as well as
  $B_3 = \sett{u}{\exists p\in B_0, \norm{z_u-p}\leq  \ell_3}$. Then given $\ell_2\leq \eps$,
  we have that ${\sf mass}(B_2)\leq \eps$ and similar argument to before gives that
  $\card{B_3}\geq (1+\gamma/2)\card{B_2}$.
  Indeed, otherwise we get that at least $\frac{\gamma}{2}d\card{B_2}$ of the edges touching $B_2$
  go outside $B_3$, and then
  \[
  {\sf cost} \geq \frac{\gamma}{2}d\card{B_2}(\ell_3-\ell_2)^2 = \frac{\gamma\ell_1^2}{2} d\card{B_0},
  \]
  and contradiction. This way, we continue iteratively, and once $\ell_k$ and $B_k$ have been defined we take
  \[
  \ell_{k+1} = \ell_k + \sqrt{\frac{\card{B_{0}}}{\card{B_{k}}}}\ell_1,
  \qquad
  B_{k+1} = \sett{u}{\exists p\in B_0, \norm{z_u-p}\leq  \ell_{k+1}}
  \]
  and provided that $\ell_k\leq \eps$ we conclude that $\card{B_{k+1}}\geq (1+\gamma/2)\card{B_k}$. Note that
  provided we have gotten to the $k+1$ step, we have
  \[
  \ell_k \leq \sum\limits_{r=0}^{\infty} (1+\gamma/2)^{-r/2} \ell_1\leq O(\ell_1/\gamma)=O(\eps/W)\leq \eps
  \]
  for sufficiently large absolute constant $W$, so we can ensure that the argument goes through indefinitely.
  This is a contradiction, as at some point the size of $B_k$ would exceed $\eps n$.
\end{proof}

\subsubsection{Rounding to the Corners}
Next, we modify the vector-valued solution $(z'_u)_{u\in V}$ to a different vector-valued solution $(z''_u)_{u\in V}$ wherein (roughly speaking) vectors belonging to vertices on light $P_i$'s can only be in the corners of $\mathcal{T}_t$.

Towards this end, consider the boundary of the partition $\mathcal{P}$ and a particular arc in it.
This arc has $2$ neighbouring cells, say $P_i$ and $P_j$. If $P_i$ and $P_j$ are light, we look at that arc, mark its middle point $p_{i,j}$. We then round the
each vector $z'_u$ on that arc to the closer of the two endpoints of the arc to it (namely, according to the side of $z'_u$ with respect to $p_{i,j}$).

\begin{lemma}\label{lem:round_to_corners}
  The collection of vectors $z_u''$ defined about is a solution to the Max-Cut program, and
  \[
  \frac{1}{\card{E'}}\sum\limits_{(u,v)\in E'}\norm{z''_{u} - z''_{v}}_2^2\leq \delta'''
  \]
  for $\delta'''' = O(\delta'''/\gamma^3)$.
\end{lemma}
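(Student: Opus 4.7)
The plan is to imitate the proof of Lemma~\ref{lem:round_to_boundary}, with the role of a light cell $P_i$ played by a corner $c$ of $\mathcal{T}_t$. First observe that the rounding moves each affected vector $z'_u$ by at most half of an arc length, i.e.\ by at most $O(\eps/K)$. For each corner $c$ of $\mathcal{T}_t$, let $B_0(c) = \sett{u \in V'}{z''_u = c}$; since every such $u$ had $z'_u$ on an arc incident to $c$ between two light cells, ${\sf mass}(B_0(c))$ is bounded by the combined mass of the (at most four) $P_i^{\uparrow}$'s for light $P_i$'s containing $c$. After, if necessary, tightening the definition of ``light'' by an absolute constant factor, this ensures ${\sf mass}(B_0(c)) \leq \eps n$, so the small-set expansion hypothesis of $G'$ is available for $B_0(c)$ and all of its thickenings.

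Next I would localize the cost growth. For any edge $(u,v) \in E'$ with $u \in B_0(c)$, the triangle inequality gives
\[
\|z''_u - z''_v\|_2^2 \leq 2\|z'_u - z'_v\|_2^2 + O(\eps^2/K^2),
\]
so writing ${\sf cost}(c) = \sum_{u \in B_0(c),\ (u,v) \in E'} \|z'_u - z'_v\|_2^2$, the contribution of edges incident to $B_0(c)$ to the new objective is at most $O({\sf cost}(c)) + O(d |B_0(c)| \eps^2/K^2)$. Since every vertex is rounded to at most one corner, each edge contributes to at most two such local costs, so summing over $c$ gives $\sum_c {\sf cost}(c) \leq 2 \sum_{(u,v) \in E'} \|z'_u - z'_v\|_2^2$. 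It therefore suffices to prove that for each corner $c$, ${\sf cost}(c) \geq \Omega(\gamma^3 \eps^2 / W^2) \cdot d |B_0(c)|$ for a suitable large absolute constant $W$; indeed then $\sum_c d |B_0(c)| \eps^2/K^2 = O(\gamma^{-3} \sum_c {\sf cost}(c))$, which delivers the desired multiplicative $O(1/\gamma^3)$ bound.

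Assume for contradiction that ${\sf cost}(c) < \bigl(\gamma^3 \eps^2 / (50 W^2)\bigr) d |B_0(c)|$ for some corner $c$. Set $\ell_1 = \gamma \eps / W$ and define iteratively
\[
B_k(c) = \sett{u \in V'}{\exists p \in B_0(c),\ \|z'_u - z'_p\|_2 \leq \ell_k}, \qquad \ell_{k+1} = \ell_k + \sqrt{|B_0(c)|/|B_k(c)|}\, \ell_1.
\]
So long as $\ell_k \leq \eps$, one has $B_k(c) \subseteq \bigcup_{i:\, c \in P_i,\, P_i \text{ light}} P_i^{\uparrow}$, hence ${\sf mass}(B_k(c)) \leq \eps n$ and small-set expansion applies. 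The same dichotomy as in Lemma~\ref{lem:round_to_boundary} then forces $|B_{k+1}(c)| \geq (1 + \gamma/2) |B_k(c)|$: otherwise more than $(\gamma/2) d |B_k(c)|$ edges leave $B_k(c)$ across the annulus of width $\ell_{k+1} - \ell_k$, contributing at least $(\gamma/2) \ell_1^2 d |B_0(c)|$ to ${\sf cost}(c)$, contradicting its upper bound. A telescoping estimate gives $\ell_k \leq O(\ell_1/\gamma) = O(\eps/W) \leq \eps$ throughout, so the iteration continues indefinitely and $|B_k(c)|$ grows geometrically, eventually exceeding $\eps n$ -- the sought contradiction.

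The main obstacle I anticipate is the accounting of the ``mixed'' edges: an edge $(u,v)$ with $u \in B_0(c)$ and $v \in B_0(c')$ where $c,c'$ are endpoints of a shared arc may have small original cost yet new cost $\Theta(\eps^2/K^2)$, and these are exactly the edges driving the additive $O(d|B_0(c)| \eps^2/K^2)$ term. The iterative SSE argument is tailored to bound their density against ${\sf cost}(c)$, and calibrating the constants $W$ and $K$ so that the contradiction ${\sf cost}(c) < (\gamma^3 \eps^2 / 50 W^2) d |B_0(c)|$ cannot occur is the core technical step.
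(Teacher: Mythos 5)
Your overall strategy is the paper's own (soft rounding of active arcs to their endpoints, followed by the SSE ball-growing argument showing that the local $\ell_2^2$ cost dominates the additive increase), merely organized per corner rather than per arc; however, as written there is a genuine accounting gap in the contradiction step. You define ${\sf cost}(c)=\sum_{u\in B_0(c),\,(u,v)\in E'}\norm{z'_u-z'_v}_2^2$, i.e.\ only edges incident to $B_0(c)$, but the edges your iteration charges are edges leaving $B_k(c)$ for $k\geq 1$, and these are incident to vertices of $B_k(c)\setminus B_0(c)$ --- vertices whose vectors merely lie within $\ell_k$ of the vectors of $B_0(c)$, possibly on inactive arcs or inside heavy cells. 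Their cost is not part of ${\sf cost}(c)$, so the assumption ${\sf cost}(c)<\frac{\gamma^3\eps^2}{50W^2}d\card{B_0(c)}$ is never contradicted. In fact the intermediate claim ``${\sf cost}(c)\geq\Omega(\gamma^3\eps^2/W^2)\,d\card{B_0(c)}$ for every corner $c$'' is false as stated: take $B_0(c)$ to consist of many vertices with $z'_u=c$ exactly, all of whose outgoing edges go to vertices at distance $\eta\ll\gamma\eps/W$ from $c$ lying on arcs bordering a heavy cell (such vertices are not rounded, hence not in $B_0(c)$). Then ${\sf cost}(c)\approx \eta^2 d\card{B_0(c)}$ is arbitrarily smaller than $\gamma^3\eps^2 d\card{B_0(c)}$, with no violation of small-set expansion; the lemma is not endangered only because in this configuration nothing actually moves, which is exactly the sign that the quantity you must lower bound is not ${\sf cost}(c)$ but the cost of a thickened region.

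The paper avoids precisely this mismatch: in Lemma~\ref{lem:round_to_boundary} the local cost is taken over the whole $\eps$-neighbourhood $P_i^{\uparrow}$, and in the proof of Lemma~\ref{lem:round_to_corners} the growing sets $B_j$ are restricted to vertices whose vectors lie on the arc $\mathcal{L}$ itself, so every charged edge touches a vertex counted in ${\sf cost}_{\mathcal{L}}$ (vectors off the arc are automatically at distance $\Omega(L)$ from the midpoint region, so they do not spoil the annulus estimate). To repair your version you must either do the same, or redefine the local cost of $c$ as the cost of all edges incident to the $O(\eps)$-thickening of $B_0(c)$ and add the (easy but necessary) observation that each edge is counted in at most $O(K^2)=O(1)$ such thickened corner costs, so that summing over corners still gives $O(\delta'''\card{E'})$; without one of these repairs the bound $\delta''''=O(\delta'''/\gamma^3)$ does not follow from your argument. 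The remaining ingredients of your write-up --- tightening the lightness threshold by a constant so that ${\sf mass}(B_k(c))\leq\eps n$, the telescoping bound $\ell_k=O(\ell_1/\gamma)\leq\eps$, and the per-edge additive increase $O(\eps^2/K^2)$ --- match the paper and are fine.
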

\begin{proof}
  Consider a single arc on which we perform the operation, denote it by $\mathcal{L}$, and let
  \[
  {\sf cost}_{\mathcal{L}} = \sum\limits_{\substack{(u,v)\in E: z'_u\in \mathcal{L}}}{\norm{z'_u - z'_v}_2^2}.
  \]
  We argue that as a result of the above operation, ${\sf cost}_{\mathcal{L}}$ increases by factor at most $O(1/\gamma^{3})$. This quickly finishes
  the proof since the only other effect of the operation we need to take note of, is the effect of other arcs on ${\sf cost}_{\mathcal{L}}$.
  For that, it suffices to observe that for each edge $(u,v)\in E$ such that $u\in\mathcal{L}$ and $v\in\mathcal{L}'$, the quantity
  $\norm{z'_u - z'_v}_2^2$ increases by factor at most $O(1)$ due to the operation on $\mathcal{L}'$.

  Let $L$ be the length of the arc $\mathcal{L}$ (noting it is of the order of $\eps$), $p$ be its midpoint and let $W>0$ be a large absolute
  constant. Let $\ell_0 = \frac{\gamma L}{W}$, and consider
  $B_0 = \sett{u}{z'_u\in\mathcal{L}, \norm{z'_u - p}\leq \ell_0}$. As a result of our operation, ${\sf cost}_{\mathcal{L}}$ changes as follows: the contribution of
  edges that have both endpoints outside $B_0$ grows by factor $O(1)$ at most, whereas the contribution of edges that have an endpoint inside $B_0$
  increases additively by at most $L^2 d\card{B_0}$. Hence, if ${\sf cost}_{\mathcal{L}}\geq \frac{\gamma^3 L^2}{50 W^2} d\card{B_0}$ we are done,
  so assume otherwise.

  Let $\ell_1 = 2\ell_0$, and note that $m(B_0)\leq m(\mathcal{L})\leq \eps n$, so by small set expansion at least $\gamma d\card{B_0}$ of the edges
  touching $B_0$ go outside it. At most $d\card{B_1\setminus B_0}$ of these edges go to $B_1$, hence if $\card{B_1\setminus B_0}\leq \gamma/2 \card{B_0}$
  we get that
  \[
  {\sf cost}_{\mathcal{L}}\geq \frac{1}{2}\gamma d\card{B_0}(\ell_1 - \ell_0)^2 > \frac{\gamma^3 L^2}{50 W^2} d\card{B_0},
  \]
  and contradiction. Thus, $\card{B_1\setminus B_0}> \gamma/2 \card{B_0}$, and so $\card{B_1}\geq (1+\gamma/2)\card{B_0}$.
  Continuing this way, iteratively define
  \[
  \ell_{j+1} = \ell_j + \sqrt{\frac{\card{B_{0}}}{\card{B_{j}}}}\ell_1,
  \qquad
  B_{j+1} = \sett{u}{z'_u\in\mathcal{L}, \norm{z'_u - p}\leq \ell_{j+1}}.
  \]
  If $\ell_{j+1}\leq L$ we can apply the small set expansion argument again to argue that $\card{B_{j+1}}\geq (1+\gamma/2)\card{B_j}$.
  Thus, provided that $\ell_{j}\leq L$ we may continue with the argument indefinitely and keep on increasing $B_j$, which is clearly impossible, giving
  us the desired contradiction. We note that if $W$ is large enough, then indeed we always have $\ell_j\leq L$, as we may bound
  \[
  \ell_{j}\leq \sum\limits_{r=0}^{j}\sqrt{\frac{\card{B_{0}}}{\card{B_{r}}}}\ell_1\leq \sum\limits_{r=0}^{j} (1+\gamma/2)^{-r} \ell_1 = O(\ell_1/\gamma) = O(L/W) < L
  \]
  for large enough $W$.
\end{proof}

\subsection{The Integral Rounding Procedure}\label{sec:round}
We next describe the integral rounding procedure using the vector solution $z''_u$ from Lemma~\ref{lem:round_to_corners}.
Let ${\sf Heavy}$ be the collection of all parts $P_i$ that are not light, denote by ${\sf Heavy}'$ be the collection of
all $P_j$'s adjacent to a face in ${\sf Heavy}$, and let $\mathcal{H}$ be the union of boundaries of faces in ${\sf Heavy}\cup{\sf Heavy}'$.

Consider choice of a hyperplane $H\subseteq \mathbb{R}^3$ uniformly at random.

\begin{claim}\label{claim:prob_to_intersect_skel}
  There exists $K>0$, such that in the above set-up,
  \[
  \Prob{H}{\exists \mathcal{L}\in \mathcal{H} \text{ such that }H\cap \mathcal{L}\neq \emptyset} \leq 1/2.
  \]
\end{claim}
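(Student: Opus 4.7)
\textbf{Proof plan for Claim~\ref{claim:prob_to_intersect_skel}.}
The plan is a clean union bound: I will argue that $\mathcal{H}$ is the disjoint union of only $O(1/\eps)$ arcs, each of length at most $O(\eps/K)$, and that the probability a uniformly random hyperplane through the origin cuts a single arc of length $\ell$ on $S^2$ is $O(\ell)$. Combining these three estimates gives an overall probability of $O(1/K)$, which is $\leq 1/2$ once $K$ is a sufficiently large absolute constant.

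\textbf{Counting the arcs.}
First I would count the faces contributing to $\mathcal{H}$. Recall from the discussion after Claim~\ref{claim:mass_sum} that at most $2/\eps$ of the faces $P_i$ have ${\sf mass}(P_i^{\uparrow}) > \eps n/2$, so $\card{{\sf Heavy}} \leq 2/\eps$. Since $\mathcal{T}_t$ is a cubical subdivision of the sphere in which every face is a square with $O(1)$ neighbours, we also have $\card{{\sf Heavy}'} \leq O(\card{{\sf Heavy}}) = O(1/\eps)$. Each face in ${\sf Heavy}\cup{\sf Heavy}'$ contributes exactly four boundary arcs, so $\mathcal{H}$ is a union of at most $O(1/\eps)$ arcs. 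By Fact~\ref{fact:triangulation}, each of these arcs has length at most $\eps/K$.

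\textbf{Hitting probability for one arc.}
Next I would bound, for a fixed geodesic arc $\mathcal{L}$ of length $\ell \leq \eps/K$, the probability that a uniformly random hyperplane $H \subseteq \mathbb{R}^3$ passing through the origin intersects $\mathcal{L}$. Parameterising $H$ by its unit normal $v \in S^2$ (sampled uniformly), the event $H \cap \mathcal{L} \neq \emptyset$ is exactly $v \in \bigcup_{x \in \mathcal{L}} x^\perp$, where $x^\perp$ is the great circle of unit vectors orthogonal to $x$. As $x$ traces the arc $\mathcal{L}$ the great circles $x^\perp$ sweep out a spherical lune bounded by the two great circles perpendicular to the endpoints of $\mathcal{L}$, and the angular opening of this lune equals $\ell$. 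A lune of opening $\theta$ on $S^2$ has area $2\theta$ (out of a total $4\pi$), so the probability that $H$ hits $\mathcal{L}$ equals $\ell/(2\pi)$. For non-geodesic pieces of the partition boundary the same bound $O(\ell)$ follows by a standard integral-geometric argument (or by subdividing into $O(1)$ geodesic sub-arcs).

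\textbf{Union bound.}
Combining the two estimates, a union bound over the $O(1/\eps)$ arcs in $\mathcal{H}$ yields
\[
\Prob{H}{\exists \mathcal{L}\in \mathcal{H} \text{ such that }H\cap \mathcal{L}\neq \emptyset}
\;\leq\; O(1/\eps) \cdot O(\eps/K) \;=\; O(1/K),
\]
which is at most $1/2$ provided $K$ is chosen larger than some absolute constant. I do not anticipate any serious obstacle: the only step that requires care is the spherical measure computation, which is a standard fact from integral geometry on $S^2$, and the bookkeeping on the number of boundary arcs, which is immediate from Fact~\ref{fact:triangulation} and the bound $\card{{\sf Heavy}} \leq 2/\eps$.
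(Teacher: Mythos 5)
Your proposal is correct and follows exactly the paper's argument: count the $O(1/\eps)$ arcs in $\mathcal{H}$ via Claim~\ref{claim:mass_sum}, use that each has length at most $\eps/K$ so a random hyperplane hits it with probability $O(\eps/K)$, and union bound to get $O(1/K)\leq 1/2$. The only difference is that you spell out the spherical-lune computation for the single-arc hitting probability, which the paper simply asserts.
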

\begin{proof}
By Claim~\ref{claim:mass_sum}, the number of heavy $P_i$'s is $O(1/\eps)$, and as each one of them has $O(1)$ many
arcs in its boundary, we deduce that $\card{{\sf Heavy}\cup {\sf Heavy}'} = O(1/\eps)$. As each face has
$O(1)$ arcs in the boundary, we conclude that $\card{\mathcal{H}} = O(1/\eps)$. The length of each $\mathcal{L}\in\mathcal{H}$ is
at most $\eps/K$, so we get that the probability that $H$ intersects a given arc is at most $O(\eps/K)$. Thus, by
the Union bound the probability that $H$ intersects some arc in $\mathcal{H}$ is at most $O(1/K)\leq 1/2$, provided that $K$ is large enough.
\end{proof}

We now sample $H$ conditioned on it not intersecting any face or arc from ${\sf Heavy}$. We claim that, for any edge $(u,v)$
such that $H$ intersects the line between $z''_u$ and $z''_v$, it must be the case that $\norm{z''_u-z''_v}_2\geq \Omega(\eps)$.
There are a few cases to consider:
\begin{enumerate}
  \item Either $z''_u$ or $z''_v$ is not a corner of $\mathcal{P}$, say $z''_u$. In this case, $z_u$ is in a face $P_j$ that is either
  heavy or adjacent to a heavy $P_i$. In either case, to cross the line between $z''_u$ and $z''_v$ the hyperplane $H$ must cross the
  boundary of $P_j$, but this does occur due to the conditioning on $H$.

  \item Else, $z''_u$ and $z''_v$ are two corners of $\mathcal{P}$, so either they are the same corner, or at least $\Omega(\eps)$ apart in $\ell_2$-distance.
\end{enumerate}

We consider the cut $S = \sett{v\in V}{\inner{z''_{v,1}}{x_H}\geq 0}$, and prove in the following lemma that the expected size of the
cut is at least $\left(1-O\left(\frac{\delta''''}{\eps}\right)\right)\card{E}$. Given such guarantee,
standard techniques show that such cut may be found in polynomial time by a randomized algorithm, thereby finishing the proof of Theorem~\ref{thm:main_MC}.
\begin{lemma}\label{lem:compute_rounding}
  $\cExpect{H}{H\text{ does not hit $\mathcal{H}$}}{{\sf Cut}(S)}\geq \left(1-O\left(\frac{\delta''''}{\eps}\right)\right)\card{E}$.
\end{lemma}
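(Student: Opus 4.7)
The plan is to write the conditional expectation as $\cExpect{H}{\mathcal{E}}{{\sf Cut}(S)} = |E| - \sum_{(u,v)\in E}\alpha_{u,v}$, where $\mathcal{E}$ denotes the event ``$H$ does not hit $\mathcal{H}$'' and $\alpha_{u,v} := \cProb{H}{\mathcal{E}}{(u,v)\text{ is not cut by }S}$, and then to bound each $\alpha_{u,v}$ using the structural claim just established. Setting $z_v := z''_{v,1}$ for brevity, the edge $(u,v)\in E$ fails to be cut precisely when $\mathrm{sign}(\inner{z_u}{x_H}) = \mathrm{sign}(\inner{z_v}{x_H})$, which is the same event as $H$ separating the antipodal pair $(z_u,-z_v) = (z''_{u,1}, z''_{v,-1})$. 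This reinterpretation is the crucial bridge to the structural claim, which is stated for separation of two SDP-solution vectors by $H$: since $z''_{v,-1}$ is itself part of the SDP solution by the constraint $z''_{v,-1} = -z''_{v,1}$, and since $\mathcal{T}_t$ is centrally symmetric (Fact~\ref{fact:triangulation}), the claim applies verbatim to the pair $(z''_{u,1}, z''_{v,-1})$.

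Next I would bound $\alpha_{u,v}$ from above via the standard hyperplane rounding analysis. Unconditionally, $\Prob{H}{H\text{ separates }z_u,-z_v} = \theta/\pi$, where $\theta = \theta(z_u,-z_v)$ is the angle; combining with $\norm{z_u+z_v}_2 = 2\sin(\theta/2) \geq 2\theta/\pi$ for $\theta\in[0,\pi]$ gives $\Prob{H}{(u,v)\text{ not cut}} \leq \norm{z_u+z_v}_2/2$. Using $\Prob{H}{\mathcal{E}} \geq 1/2$ from Claim~\ref{claim:prob_to_intersect_skel} then yields $\alpha_{u,v} \leq \norm{z_u+z_v}_2$.

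Now I would invoke the structural claim. Fix a sufficiently small absolute constant $c>0$. If $z_u = -z_v$, the edge is always cut and $\alpha_{u,v}=0$. If $0 < \norm{z_u+z_v}_2 < c\eps$, the claim forces any $H$ that separates $z_u$ from $-z_v$ to hit $\mathcal{H}$; hence conditioned on $\mathcal{E}$ no such $H$ exists, and again $\alpha_{u,v}=0$. In the remaining case $\norm{z_u+z_v}_2 \geq c\eps$, we rewrite $\alpha_{u,v} \leq \norm{z_u+z_v}_2 \leq \norm{z_u+z_v}_2^2/(c\eps)$. Summing over edges of $E$ and using the identity $\sum_{(u,v)\in E}\norm{z_u+z_v}_2^2 = |E|\,\delta''''$ (each edge of $E$ gives rise to two edges of $E'$ of identical value $\norm{z_u+z_v}_2^2$, while $|E'|=2|E|$) produces $\sum_{(u,v)\in E}\alpha_{u,v} \leq O(|E|\delta''''/\eps)$, which is exactly the claimed bound.

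The delicate point is the translation of the structural claim into the ``antipodal'' form on $(z_u,-z_v)$ instead of $(z_u,z_v)$: this is the step that exploits the central symmetry of both the SDP solution and the partition $\mathcal{T}_t$, and it is precisely what lets us replace the Goemans--Williamson square-root loss with the linear $\eps^{-1}$ loss. Everything else amounts to routine estimates combining the standard hyperplane rounding with the upper bound on $\Prob{H}{\mathcal{E}}^{-1}$.
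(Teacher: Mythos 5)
Your proposal is correct and follows essentially the same route as the paper: express the deficit as a sum over edges of the conditional probability of separating the antipodal pair $(z''_{u,1},z''_{v,-1})$, pay a factor $2$ for conditioning via Claim~\ref{claim:prob_to_intersect_skel}, use the structural case analysis to restrict to pairs at distance $\Omega(\eps)$, and convert the angle bound $\theta/\pi = O(\norm{z''_{u,1}-z''_{v,-1}}_2)$ into $O(\norm{\cdot}_2^2/\eps)$, summing to $O(\delta''''\card{E}/\eps)$. The only nitpick is that $\sum_{(u,v)\in E}\norm{z''_{u,1}-z''_{v,-1}}_2^2 \leq \card{E}\delta''''$ is an inequality (since $\delta''''$ is an upper bound on the solution value), not an identity, but this is exactly the direction you need.
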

\begin{proof}
  The expected size of the cut is
  \[
  \sum\limits_{v,u} \cExpect{H}{H\text{ does not hit $\mathcal{H}$}}{1_{H\text{ separates } z_{u,1}''\text{ and }z_{v,1}''}},
  \]
  so the deficit in the cut size (i.e. $\card{E}$ minus its size) has expectation
  \[
  \sum\limits_{v,u} \cExpect{H}{H\text{ does not hit $\mathcal{H}$}}{1_{H\text{ separates } z_{u,1}''\text{ and }z_{v,-1}''}}.
  \]
  For $H$ to separate $z_{u,1}''$ and $z_{v,-1}''$, by the discussion proceeding the lemma, we must have
  $\norm{z_{u,1}'' - z_{v,-1}''}_2\geq \Omega(\eps)$, so we get that the last sum is at most
  \[
  2\sum\limits_{v,u} \Expect{H}{1_{H\text{ separates } z_{u,1}''\text{ and }z_{v,-1}''}} 1_{\norm{z_{u,1}'' - z_{v,-1}''}_2\geq \Omega(\eps)}.
  \]
  Here, we used the fact that the probability that $H$ does not hit $\mathcal{H}$ is at least $1/2$. Letting
  $\theta(z_{u,1},z_{v,-1})$ be the angle between the two vectors, we have
  \[
  \Expect{H}{1_{H\text{ separates } z_{u,1}''\text{ and }z_{v,-1}''}}
  =\frac{1}{\pi} \theta(z_{u,1}'',z_{v,-1}'')
  = O(\norm{z_{u,1}'' - z_{v,-1}''}_2),
  \]
  hence the expected deficit in the size of the cut defined by $S$ is at most
  \[
  O\left(\sum\limits_{v,u} \norm{z_{u,1}'' - z_{v,-1}''}_2 1_{\norm{z_{u,1} - z_{v,-1}}_2\geq \Omega(\eps)}\right)
  =\frac{1}{\eps} O\left(\sum\limits_{v,u} \norm{z_{u,1}'' - z_{v,-1}''}_2^2\right)
  =O\left(\frac{\delta''''}{\eps}\card{E}\right).
  \]
\end{proof}

\section{Spectral Partitioning}
In this section, we use our techniques to establish stronger algorithmic spectral partitioning results for the second largest eigenvalue of a graph $G$, proving
Theorem~\ref{thm:main_cheeger}. We also state an analogous result for the smallest eigenvalue of a small set expander $G$ (but omit the mostly-identical proof).

\subsection{Cheeger's Inequality on SSE's}
As explained in the introduction, to avoid the square root loss in Cheeger's inequality, we must make sure that we perform the rounding phase of it on
vectors in which the ratio between any two coordinates is either $1$, or is bounded away from $1$. Below, we formalize this notion via a notion we refer
to as $\eps$-quantization.
\begin{definition}
  Let $\eps>0$. We say a vector $x\in\mathbb{R}^n$ is $\eps$-quantized if
  for each $i,j\in [n]$:
  \begin{enumerate}
    \item $x(i) = 0$ or $x(j) = 0$;
    \item otherwise, $\card{x(i)} = \card{x(j)}$, $\card{x(i)}\geq (1+\eps)\card{x(j)}$, or $\card{x(i)}\leq (1-\eps)\card{x(j)}$.
  \end{enumerate}
\end{definition}

Next, we prove that given a SSE graph $G$ and a vector $x$ such that $\Expect{(u,v)\sim E}{(x(u) - x(v))^2}$ is small, one can transform $x$ to a vector
$y$ that is quantized while not increasing the square distances by much.
\begin{lemma}\label{lem:cheeger_quantized}
  Suppose that $G = (V,E)$ is a $d$-regular, $(\eps,\gamma)$ small set expander.
  If $x\in \mathbb{R}^n$ is a vector such that $\Expect{(u,v)\sim E}{(x(u) - x(v))^2}\leq \delta$, then
  there is an $\eps$-quantized vector $y\in \mathbb{R}^n$ such that:
  \begin{enumerate}
    \item $\Expect{(u,v)\sim E}{(y(u) - y(v))^2}\leq O\left(\frac{\delta}{\gamma^3\eps^2}\right)$.
    \item For each $v\in V$, $y(v) = 0$ if and only if $x(v) = 0$.
    \item For each $v\in V$, $(1-\eps)x(v)\leq y(v)\leq (1+\eps) \leq x(v)$ if $x(v)\geq 0$, and $(1+\eps)x(v)\leq y(v)\leq (1-\eps)x(v)$ otherwise.
  \end{enumerate}
\end{lemma}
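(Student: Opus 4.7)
The plan is to restrict to the positive entries of $x$ (negative entries are handled symmetrically; zero entries are kept as zero) and partition the positive reals into dyadic intervals $I_k = [(1+\eps)^k, (1+\eps)^{k+1})$ for $k\in\mathbb{Z}$. Setting $B_k = \{v : x(v)\in I_k\}$, each $v\in B_k$ will be rounded to one of the two endpoints of $I_k$ via a threshold $t_k\in I_k$: put $y(v)=(1+\eps)^k$ if $x(v)<t_k$, and $y(v)=(1+\eps)^{k+1}$ otherwise. Any such choice automatically satisfies conditions~(2) and~(3) of the lemma (the multiplicative width of each $I_k$ is $(1+\eps)$), so the task reduces to choosing the thresholds $(t_k)$ so that condition~(1) holds.

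To pick $t_k$, I would first subdivide $I_k$ into $\Theta(1/\eps)$ equal sub-intervals. Since $\sum_k |B_k|\le n$, in every $I_k$ some sub-interval has fractional mass at most $\eps$; take $t_k$ to be the midpoint of such a sparse sub-interval. This choice guarantees that the initial ball $T_0 = \{v\in B_k : |x(v)-t_k|\le \ell_0\}$, for an appropriate $\ell_0 = \Theta(\gamma\eps^2 (1+\eps)^k/W)$ with $W$ a large constant, has fractional size at most $\eps$, and that $t_k$ sits strictly interior to $I_k$ at distance $\Omega(\eps\cdot|I_k|)$ from either endpoint. I then run the iterative ball-growing argument of Lemma~\ref{lem:round_to_corners} in the one-dimensional setting: set $T_{j+1} = \{v\in B_k : |x(v)-t_k|\le \ell_{j+1}\}$ with $\ell_{j+1} = \ell_j + \sqrt{|T_0|/|T_j|}\,\ell_0$. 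So long as $|T_j|/n\le \eps$, small-set expansion forces $|T_{j+1}|\ge (1+\gamma/2)|T_j|$, since otherwise $\gamma d|T_j|/2$ edges exit $T_{j+1}$ and each contributes at least $(\ell_{j+1}-\ell_j)^2$ to the original cost. By the same balancing as in Lemma~\ref{lem:round_to_corners}, if the new within-$B_k$ cost exceeds $O(1/(\gamma^3\eps^2))$ times the localized original cost, then $|T_j|$ must grow geometrically past $\eps n$, a contradiction.

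Cross-interval edges are handled by the consistency this threshold placement forces. Because each $t_k$ is interior to $I_k$, every vertex in $B_k$ whose $x$-value lies near the shared boundary $(1+\eps)^{k+1}$ is rounded up to $(1+\eps)^{k+1}$, and every vertex in $B_{k+1}$ near that same boundary is rounded down to $(1+\eps)^{k+1}$, so both receive the same $y$-value and contribute zero new cost across this dyadic boundary. For non-adjacent intervals ($u\in B_k$, $v\in B_{k+j}$ with $j\ge 2$) the original gap $|x(u)-x(v)|$ is already at least $\eps(1+\eps)^{k+1}$, while the new gap $|y(u)-y(v)|$ is at most $O((1+\eps)^{k+j+1})$, so $(y(u)-y(v))^2/(x(u)-x(v))^2$ is bounded by a universal constant.

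The main obstacle I anticipate is the bookkeeping required to extract the exact $\gamma^{-3}\eps^{-2}$ factor: edges whose endpoints straddle both a threshold $t_k$ and a dyadic boundary $(1+\eps)^{k+1}$ (so that, e.g., $u$ rounds down to $(1+\eps)^k$ while $v$ rounds up to $(1+\eps)^{k+2}$) must be carefully assigned to the ball-growth budget of a specific interval, and the constant-factor losses from neighbouring intervals' growths must not compound. A further technical point is that the sparse sub-interval argument forces $t_k$ to sit in a region of width only $\Theta(\eps\cdot|I_k|)$, which is what ultimately introduces the $\eps^{-2}$ factor on top of the $\gamma^{-3}$ factor already present in Lemma~\ref{lem:round_to_corners}.
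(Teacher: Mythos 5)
Your route is essentially the paper's: restrict to positive (and symmetrically negative) entries, partition into multiplicative intervals $I_k=[(1+\eps)^k,(1+\eps)^{k+1})$, round each entry to an endpoint of its interval via a threshold, and justify the threshold by the SSE ball-growing argument, with conditions (2) and (3) automatic. The genuine difference is the threshold selection: the paper splits into light intervals (midpoint threshold, loss $O(\gamma^{-3})$) and heavy intervals ($r+2$ candidate points in the middle third with $\ell_1=\gamma/(rK)$ and a pigeonhole over at most $r+1$ failed candidates), which keeps every threshold a constant fraction of $\card{I_k}$ away from both endpoints; you instead pigeonhole a sparse sub-interval of width $\Theta(\eps\card{I_k})$ and mass at most $\eps n$ inside every interval and put $t_k$ at its midpoint, uniformizing the two cases. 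This can be made to work, but the crucial step is mis-stated: ``$\card{T_j}$ must grow geometrically past $\eps n$, a contradiction'' is not a contradiction --- once $\card{T_j}>\eps n$ small-set expansion no longer applies and the growth simply stops, which is exactly the difficulty the paper's heavy-interval machinery is built for. What rescues your construction is a containment you never state: with $\ell_0=\Theta(\gamma\eps^2(1+\eps)^k/W)$ the radii obey $\ell_j\le O(\ell_0/\gamma)$, which for large $W$ is below half the sub-interval width $\Theta(\eps^2(1+\eps)^k)$, so \emph{every} ball $T_j$ (not only $T_0$) stays inside the sparse sub-interval and hence has mass at most $\eps n$ at every step; then SSE applies at every step, the geometric growth can never terminate, and that impossibility is the contradiction. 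You must make this containment and the induction (radius cap $\Rightarrow$ mass bound $\Rightarrow$ growth $\Rightarrow$ radius cap) explicit, since it is your sole substitute for the paper's heavy-interval pigeonhole.

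Second, your claim about adjacent intervals is false as written: the pigeonhole gives no control on where the sparse sub-interval lies, and if it abuts the top of $I_k$, a vertex within $\Theta(\eps^2(1+\eps)^k)$ of the shared boundary but below $t_k$ rounds \emph{down}, so ``zero new cost near the boundary'' fails (likewise the non-adjacent bound you quote does not by itself give a universal constant, though a slightly sharper comparison does). The repair is easy and stays within budget: whenever an endpoint rounds away from the shared boundary, the original gap is at least the distance from its threshold to that boundary, which your placement guarantees is $\Omega(\eps\card{I_k})=\Omega(\eps^2(1+\eps)^k)$, while the new gap is $O(\eps(1+\eps)^k)$, so every cross-interval edge blows up by at most $O(\eps^{-2})$ --- within the claimed $O(\delta\gamma^{-3}\eps^{-2})$, and this is the price your scheme pays on cross edges that the paper's middle-third placement avoids (the paper gets $O(1)$ there). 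With these two points fixed, the straddling-edge bookkeeping you flag as an obstacle is routine, since each edge's new cost is compared directly to its original cost and nothing compounds.
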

\begin{proof}
  Let $V_{+} = \sett{v}{x(v) > 0}$, $V_{-} = \sett{v}{x(v) < 0}$, and
  denote $M = \lceil \max_{u}\card{x(u)}\rceil$.
  We partition $V_{+}$ and $V_{-}$ as
  \[
  V_{+} = \bigcup_{j=0}^{\infty} V_{+,j},
  \qquad\qquad
  V_{-} = \bigcup_{j=0}^{\infty} V_{-,j},
  \]
  where
  \begin{align*}
  &V_{+,j} = \sett{v\in V_{+}}{M(1+\eps)^{- j-1}\leq x(v)< M (1+\eps)^{- j}},\\
  &V_{-,j} = \sett{v\in V_{-}}{-M (1+\eps)^{- j}\leq x(v)< - M (1+\eps)^{- j-1}}.
  \end{align*}
  \paragraph{The construction of the vector $y$.}
  To construct the vector $y$, for each vertex $V\in V$ we find the part $V_{+,j}$ (if $x(v)$ is positive) or $V_{-,j}$ (if $x(v)$ is negative) that $v$ belongs to,
  and set the value of $y(v)$ to be one of the endpoints of the interval defining that part. Towards this end, for each $j$ we will choose points $p_{j,+}$ and $p_{j,-}$
  (in a way that we describe shortly) in the intervals
  $[M(1+\eps)^{- j-1},M(1+\eps)^{- j}]$ and $[-M(1+\eps)^{- j},-M(1+\eps)^{- j-1}]$
  respectively, and use them to define the vector $y$ that satisfies the assertion of the lemma.
  More specifically: we set $y(u) = 0$ if $x(u) = 0$; otherwise -- without loss of generality say that
  $x(u)>0$ -- we take $j$ such that $u\in V_{+,j}$, and define $y(u) = M (1+\eps)^{- j}$
  if $x(u) > p_{j,+}$ and else $y(u) = M (1+\eps)^{- j-1}$.

  \paragraph{Choosing the cut-off points $p_{j,+}$ and $p_{j,-}$.}
  Next, we describe the choice of the points $p_{j,+}$ and $p_{j,-}$. As the argument is analogous
  in both cases, we focus on $p_{j,+}$. We define the mass of $V_{+,j}$ as the cardinality of it.
  We say $V_{+,j}$ is heavy if its mass is at least $\eps n$, and otherwise we say it is light.
  The choice of $p_{+,j}$ is done differently in light intervals and in heavy intervals, and below
  we elaborate on these two cases. To do so, we consider the cost associated with interval $j$,
  \[
    {\sf cost}_j = \frac{1}{n}\sum\limits_{u\in V_{+,j}, v\text{ neighbour}}{(x(u) - x(v))^2},
  \]
  and show that the cost of $V_{+,j}$ only increases by factor $O(1/\gamma^3)$ as a result of
  the operation on $V_{+,j}$.

  \paragraph{{\bf Light intervals.}}
  In this case, we take $p_{j,+}$ to be the midpoint of its respective interval,
  that is
  \[
  p_{j,+} = \frac{1}{2}\left(M (1+\eps)^{- j-1}+M (1+\eps)^{- j}\right).
  \]
  We denote by $L_j$ the length of the interval, that is
  \[
  L_j = M (1+\eps)^{- j} - M (1+\eps)^{- j-1}
  =M\eps (1+\eps)^{- j-1}.
  \]
  Let ${\sf cost}_j$ be the contribution of points from $V_{+,j}$ to the average square distances i.e.
  \[
  {\sf cost}_j = \frac{1}{n}\sum\limits_{u\in V_{+,j}, v\text{ neighbour}}{(x(u) - x(v))^2}.
  \]
  We show that doing the rounding at $p_{j,+}$ as described above only increases ${\sf cost}_j$ by factor $O\left(\frac{1}{\gamma^3}\right)$.

  Let $\ell_1 = \frac{\gamma}{K}$, where $K$ is an absolute constant to be determined later.
  Consider the interval
  $B_1 = [p_{j,+}-\ell_1 L_j, p_{j,+}+\ell_1 L_j]$.
  We will show that ${\sf cost}_j > \frac{\gamma \ell_1^2 L_j^2}{100}d\card{B_1}$, and then noting that rounding at $p_{j,+}$
  increases ${\sf cost}_j$ additively by at most $d\card{B_1} L_j^2$ it follows that the cost of $V_{+,j}$ increases by
  at most factor $O(1/\gamma^3)$ as a result of the rounding.

  We now prove that ${\sf cost}_j > \frac{\gamma \ell_1^2 L_j^2}{100}d\card{B_1}$, and to do so we assume towards contradiction the contrary inequality holds.
  Let $\ell_2 = 2\ell_1$, and set $B_2 = [p_{j,+} - \ell_2L_j, p_{j,+} + \ell_2L_j]$.
  We claim that $\card{B_2}\geq \left(1+\frac{\gamma}{2}\right)\card{B_1}$. Indeed, otherwise we consider outgoing edges from $B_1$,
  and note that as $\card{B_1}\leq \card{V_{j,+}}\leq \eps n$, the small set expansion property implies that at least $d\gamma \card{B_1}$ of them escape outside $B_1$, where $d$
  is the degree of $G'$. As $\card{B_2}\leq \left(1+\frac{\gamma}{2}\right)\card{B_1}$, at most $\frac{d\gamma}{2} \card{B_1}$ of them can
  go to $B_2\setminus B_1$, and hence at least $\frac{d\gamma}{2}\card{B_1}$ of them go outside $B_2\setminus B_1$. Hence, we conclude that
  \[
  {\sf cost}_j \geq \frac{d\gamma}{2}\card{B_1} (\ell_2-\ell_1)^2 L_j^2 = \frac{\gamma\ell_1^2}{2}L_j^2 d\card{B_1},
  \]
  and contradiction. We thus have that $\card{B_2}\geq \left(1+\frac{\gamma}{2}\right)\card{B_1}$.

  Next, we set $\ell_3 = \ell_2 + \sqrt{\frac{\card{B_1}}{\card{B_2}}}\ell_1$ and
  consider $B_3 = [p_{j,+} - \ell_3 L_j, p_{j,+} - \ell_3 L_j]$. Similarly to before, if
  $\card{B_3}\leq \left(1+\frac{\gamma}{2}\right)\card{B_2}$ we get a contradiction to the upper bound on ${\sf cost}_j$.
  Continuing this argument, we define iteratively $\ell_{r+1} = \ell_{r} + \sqrt{\frac{B_{1}}{\card{B_r}}} \ell_1$ and
  then $B_{r+1} = [p_{j,+} - \ell_{r+1}L_j, p_{j,+} + \ell_{r+1}L_j]$, and prove that
  $\card{B_{r+1}}\geq \left(1+\frac{\gamma}{2}\right)\card{B_r}$ if $\ell_{r+1}\leq 1/10$. Note that
  \[
  \ell_{j+1}\leq \ell_1\sum\limits_{k=0}^{\infty}\left(1+\frac{\gamma}{2}\right)^{-k/2}
  =O(\ell_1/\gamma)
  =O(1/K)
  \leq 1/10
  \]
  provided $K$ is sufficiently large, so we way we keep on increasing $\card{B_r}$ which is clearly impossible.

  This contradiction implies that ${\sf cost}_j \geq \frac{\gamma \ell_1^2 L^2}{100}d\card{B_1}$, hence the cost of $V_{+,j}$ increases
  by at most factor $1/(\gamma \ell_1^2) = O(1/\gamma^3)$.
  In total, we get that throughout the process the cost of each light interval increases by factor at most $O(1/\gamma^3)$.

  Additionally, the cost of each edge (between intervals) may increase by a factor of $4$. Therefore, after rounding all light intervals,
  the cost of each light interval increases by factor at most $O(1/\gamma^3)$, and of any other interval by factor $O(1)$.

  \paragraph{{\bf Heavy intervals.}} If $V_{+,j}$ is heavy,
  say it contains between $r\eps n$ and $(r+1)\eps n$ points for $r\in\mathbb{N}$,
  we partition $J = [M(1+\eps)^{- j-1}, M (1+\eps)^{- j})$ into $3$ equal thirds, consider the middle third,
  and take equally spaced points $v_1,\ldots,v_{r+2}$ in it such that
  $v_1$ and $v_{r+2}$ are the endpoints of the middle third. In a formula,
  \[
  v_i = M(1+\eps)^{- j-1} + L_j \frac{r+1+i}{3r+5},
  \]
  where again $L_j$ is the length of $J$, $L_j = M\eps (1+\eps)^{- j-1}$.

  We could repeat the previous argument with each one of the points
  $v_i$, but this time since $V_{j,+}$ is heavy, the argument could terminate
  because small-set expansion no longer holds, in which case we have
  enlarged the interval around
  $v_i$ to contain at least $\eps n$ vertices. We change the definition of
  $\ell_1$ to be $\frac{\gamma}{r K}$, so that if the argument fails
  for $v_i$ it means we have found an interval around $v_i$ of length at most
  $L_j/(100 r)$ that contains at least $\eps n$ vertices. Note that these intervals
  must be disjoint (as the distance between two $v_i$'s is greater than $L_j/(50 r)$), and
  thus in total these interval cover no more than $\card{V_{+,j}} \leq (r+1) \eps n$
  vertices, meaning the argument could fail on at most $(r+1)$ many of the $v_i$'s.
  Thus, there is a $v_i$ in which the argument succeeds, so we choose it to be $p_{j,+}$, so that
  running the previous argument using $p_{j,+}$ gives that ${\sf cost}_j$ increases by at most factor
  $O\left(1/(\gamma \ell_1^2)\right) = O(\gamma^{-3}\eps^{-2})$.

  Besides that, the cost of edges that have one endpoint in $V_{+,j}$ and one outside $V_{+,j}$
  may increase by factor $O(1)$.
  Thus, in total the effect of rounding the heavy intervals is that it may increase the cost of each
  heavy interval by at most $O(\gamma^{-3}\eps^{-2})$, and each light interval by factor at most $O(1)$.
\end{proof}

With Lemma~\ref{lem:cheeger_quantized} in hand we can prove Theorem~\ref{thm:main_cheeger}, restated below.
\begin{thm}
  There exists an absolute constant $C>0$, such that the following holds for all $\eps,\gamma,\delta>0$.
  Suppose that $G$ is a $d$-regular graph whose second normalized eigenvalue is at least $1-\delta$,
  and suppose $G$ is an $(\eps,\gamma)$ small set expander. Then there exists a set $S$ of vertices
  of fractional size at most $n/2$, such that
  \[
  \Phi(S)\leq C\cdot \frac{\delta}{\gamma^3\eps^3}.
  \]
  Furthermore, such $S$ can be found efficiently.
\end{thm}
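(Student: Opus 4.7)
\noindent\textbf{Proof plan for Theorem~\ref{thm:main_cheeger}.}
The plan is to reduce the problem to the quantized setting via Lemma~\ref{lem:cheeger_quantized}, and then argue that on quantized vectors the standard threshold rounding avoids the usual Cauchy--Schwarz loss, which is precisely the idea sketched in the introduction. First I would let $x$ be a second eigenvector of the normalized adjacency matrix of $G$, normalized so that $\sum_v x(v)=0$; then one has the Rayleigh identity $\sum_{(u,v)\in E}(x(u)-x(v))^2\leq \delta\cdot d\cdot \|x\|_2^2$. Shifting $x$ by its median yields a vector $x'$ (with the same edge differences and $\|x'\|_2\leq\|x\|_2$) whose positive part $x_+=\max(x',0)$ and negative part $x_-=\max(-x',0)$ are both supported on at most $n/2$ vertices. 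By a standard disjoint-support argument, $\sum_{(u,v)}(x_+(u)-x_+(v))^2+\sum_{(u,v)}(x_-(u)-x_-(v))^2\leq \sum_{(u,v)}(x'(u)-x'(v))^2$ and $\|x_+\|_2^2+\|x_-\|_2^2=\|x'\|_2^2$, so at least one of $x_\pm$, call it $\tilde x$, is a nonnegative vector with $|\mathrm{supp}(\tilde x)|\leq n/2$ and $\Expect{(u,v)\sim E}{(\tilde x(u)-\tilde x(v))^2}\leq \frac{2\delta}{n}\|\tilde x\|_2^2$.

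Next I would apply Lemma~\ref{lem:cheeger_quantized} to $\tilde x$ to obtain an $\eps$-quantized vector $y$ with $\mathrm{supp}(y)=\mathrm{supp}(\tilde x)$ (so $|\mathrm{supp}(y)|\leq n/2$), entry-wise satisfying $(1-\eps)\tilde x(v)\leq y(v)\leq (1+\eps)\tilde x(v)$ (hence $\|y\|_2^2\geq \tfrac14 \|\tilde x\|_2^2$), and with $\Expect{(u,v)\sim E}{(y(u)-y(v))^2}\leq O\!\left(\tfrac{\delta}{n\gamma^3\eps^2}\right)\|\tilde x\|_2^2$. I would then perform the usual threshold rounding: sample $t$ uniformly in $[0,\max_v y(v)^2]$ and take $S_t=\{v:y(v)^2\geq t\}$, which is a subset of $\mathrm{supp}(y)$ and hence of size at most $n/2$. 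In expectation,
\[
\frac{\E[\,|E(S_t,\overline{S_t})|\,]}{\E[\,d|S_t|\,]}=\frac{\sum_{(u,v)\in E}|y(u)^2-y(v)^2|}{d\,\|y\|_2^2},
\]
so by averaging there exists a threshold $t$ with $\Phi(S_t)$ no larger than this ratio, and the best such $t$ can be found by sorting the values $y(v)^2$ and trying the polynomially many candidate cutoffs.

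The key step, which is where the usual square-root loss is avoided, is the bound on the numerator. Writing $|y(u)^2-y(v)^2|=|y(u)-y(v)|\,(y(u)+y(v))$ and using $\eps$-quantization: whenever $y(u)\neq y(v)$ (both nonnegative), either one of them is $0$ in which case $y(u)+y(v)=|y(u)-y(v)|$, or else $\min(y(u),y(v))\leq \tfrac{1}{\eps}|y(u)-y(v)|$, giving $y(u)+y(v)\leq (1+\tfrac{2}{\eps})|y(u)-y(v)|$. Therefore
\[
\sum_{(u,v)\in E}|y(u)^2-y(v)^2|\leq O(1/\eps)\sum_{(u,v)\in E}(y(u)-y(v))^2\leq O\!\left(\tfrac{d\,\delta}{\gamma^3\eps^3}\right)\|\tilde x\|_2^2,
\]
while the denominator is $d\|y\|_2^2\geq \tfrac{d}{4}\|\tilde x\|_2^2$. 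Dividing, $\Phi(S_t)\leq C\delta/(\gamma^3\eps^3)$ as claimed.

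The main obstacle is of course already handled: it is the construction of the quantized vector $y$ with only an $O(1/(\gamma^3\eps^2))$ loss in the Laplacian quadratic form, which is Lemma~\ref{lem:cheeger_quantized} and uses the small-set expansion property through the iterative ball-growing argument that appears in the proofs of Lemmas~\ref{lem:round_to_boundary} and~\ref{lem:round_to_corners}. Once quantization is granted, the remaining work is the elementary algebraic identity $|a^2-b^2|=|a-b|(a+b)$ combined with the multiplicative gap of $\eps$ between distinct nonzero quantized values, which converts the $\ell_1$-style numerator into an $\ell_2^2$-style expression at the cost of a single factor of $1/\eps$, precisely the savings promised in the introduction.
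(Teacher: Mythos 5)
Your proposal is correct and takes essentially the same route as the paper: the crux is Lemma~\ref{lem:cheeger_quantized}, followed by the standard Cheeger sweep cut in which the $\eps$-quantization converts $\card{y(u)-y(v)}(\card{y(u)}+\card{y(v)})$ into $O(\eps^{-1})\card{y(u)-y(v)}^2$ (the paper keeps the median-centered two-sided vector and samples the threshold with density $2\card{t}$, whereas you pass to the nonnegative part supported on at most $n/2$ vertices and threshold $y^2$ uniformly --- an equivalent variant of the same rounding). One minor slip: since $x$ is mean-zero, the median shift gives $\|x'\|_2\geq\|x\|_2$ rather than $\leq$, and this is exactly the direction your mediant/averaging step needs, so the argument goes through as intended.
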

\begin{proof}
  Let $x$ be an eigenvector of $G$ with eigenvalue $\lambda_2\geq 1- \delta$ and suppose that
  $\Expect{}{x(v)^2} = 1$; hence $\Expect{(u,v)\in E}{(x(u) - x(v))^2}\leq 2\delta$.
  We assume without loss of generality that $x_1\leq x_2\leq\ldots\leq x_n$.

  Let $z = x+c \vec{1}$ an appropriate constant $c$ so that $z_{n/2} = 0$. We note that this constant is at most $3$ in
  absolute value, otherwise the absolute value of $x_{n/2}$ is greater than $3$, say $x_{n/2} > 3$, and then
  \[
  \Expect{v}{x(v)^2}\geq \frac{1}{2} 3^2 > 1.
  \]
  We also note that $\Expect{v}{z(v)^2} = \Expect{v}{x(v)^2} + c^2$ (we used the fact that $x$ is perpendicular to
  $\vec{1}$), which is at constant between $1$ and $10$, so we may divide $z$ by an appropriate constant
  and get its $2$-norm back to $1$. We note that performing all of these operations
  changes $\delta$ by at most a constant factor.

  Using Lemma~\ref{lem:cheeger_quantized} on $z$ we find an $\eps$-quantized $y$ satisfying the properties of the lemma.
  We divide the entries of $y$ by a suitable factor $A$ so that $y_1^2 + y_n^2 = 1$, and note that the operation of Lemma~\ref{lem:cheeger_quantized}
  preserves orders and $0$-entries so that $y_1\leq\ldots\leq y_n$ and $y_{n/2} = 0$.

  We now perform the standard Cheeger's inequality rounding scheme. Namely, choose $t$ in $[y_1,y_n]$ according to the density function $2\card{t}$. We note that
  \begin{enumerate}
    \item If $a,b$ have the same signs, the probability that $t\in [a,b]$ is $\card{b^2 - a^2}$;
    \item If $a,b$ have different signs, the probability that $t\in [a,b]$ is $a^2 + b^2$.
    \item Therefore, in either case the probability that $t\in [a,b]$ is at most $\card{a-b}(\card{a} + \card{b})$.
  \end{enumerate}
  Let $S = \sett{v}{y(v)\leq t}$. We compute the expectation of the number of edges that escape $S$,
  as well as the expectation of $\min(\card{S},\card{V\setminus S})$.

  By the above observation, the probability that a given edge $(u,v)$ crosses the cut of $S$ is
  at most $\card{y(u) - y(v)}(\card{y(u)} + \card{y(v)})$. If $y(u) = y(v)$ then this is $0$
  and in particular equal to $\card{y(u) - y(v)}^2$; if $y(u), y(v)$ are negated in signs this is same as
  $\card{y(u) - y(v)}^2$; finally, if either $y(v) = 0$ or $y(u) = 0$ this is again the same as $\card{y(u) - y(v)}^2$.
  Otherwise, as $y$ is $\eps$-quantized we have that either $\card{y(u)}\geq (1+\eps)\card{y(v)}$ or $\card{y(u)}\leq (1-\eps)\card{y(v)}$,
  and then
  \[
  \card{y(u) - y(v)}\card{y(v)}
  = \card{\frac{y(u)}{y(v)} - 1} \card{y(v)}^2
  \leq \eps^{-1} \card{\frac{y(u)}{y(v)} - 1}^2 \card{y(v)}^2
  =\eps^{-1}\card{y(u) - y(v)}^2.
  \]
  Similarly, flipping the roles of $u$ and $v$ we have that either $\card{y(v)}\geq (1+\eps)\card{y(u)}$ or $\card{y(v)}\leq (1-\eps)\card{y(u)}$, and then
  \[
  \card{y(u) - y(v)}\card{y(u)}
  = \card{\frac{y(v)}{y(u)} - 1} \card{y(u)}^2
  \leq \eps^{-1} \card{\frac{y(v)}{y(u)} - 1}^2 \card{y(u)}^2
  =\eps^{-1}\card{y(u) - y(v)}^2.
  \]

  Combining everything, we get that the probability that a given edge $(u,v)$ crosses the cut of $S$ is at most $2\eps^{-1}\card{y(u) - y(v)}^2$,
  so the expected number of edges between $S$ and $V\setminus S$ is at most
  \[
  \sum\limits_{(u,v)\in E}{2\eps^{-1}\card{y(u) - y(v)}^2}
  \leq O\left(\frac{\delta}{\gamma^3\eps^3}\frac{1}{A^2}\card{E}\right).
  \]

  As for $\Expect{}{\min({\card{S},\card{V\setminus S}})}$, if $t\leq 0$ the set $S$ is smaller and otherwise $V\setminus S$
  is smaller. We show that for each $v$, the probability that $v$ is in the smaller among $S$ and $V\setminus S$ is at least
  $y(v)^2$, from which it follows that
  \[
  \Expect{}{\min({\card{S},\card{V\setminus S}})}\geq \sum\limits_{v\in V}{y(v)^2}\geq \frac{1}{A^2}(1-\eps)n.
  \]
  This would imply Theorem~\ref{thm:main_cheeger}, as then we get that
  \[
  \frac{\Expect{}{\card{{\sf Edges}(S, V\setminus S)}}}{d\Expect{}{\min({\card{S},\card{V\setminus S}})}}
  =O\left(\frac{\delta}{\gamma^3\eps^3}\right),
  \]
  hence there is a choice of $t$ for which
  \[
  \frac{\card{{\sf Edges}(S, V\setminus S)}}{d\min({\card{S},\card{V\setminus S}})}=O\left(\frac{\delta}{\gamma^3\eps^3}\right).
  \]
  Furthermore, standard techniques show that such $t$ can be found in polynomial time, giving us a set $S$ as desired.

  To finish the proof, we argue that for any vertex $v$, the probability that $v$ is in the smaller among $S$ and $V\setminus S$ is at least
  $y(v)^2$. We consider the case that $y(v)\leq 0$ and the case that $y(v)> 0$ separately. In the first case, we have
  \[
  \Prob{}{v\text{ in smaller among $S$, $V\setminus S$}}
  =\Prob{}{t\leq 0}\cProb{}{t\leq 0}{v\in S}
  =\Prob{}{t\leq 0}\cProb{}{t\leq 0}{y(v)\leq t},
  \]
  which is equal to $\Prob{}{t\in [y(v),0]} = y(v)^2$. If $y(v)> 0$,
  \[
  \Prob{}{v\text{ in smaller among $S$, $V\setminus S$}}
  =\Prob{}{t> 0}\cProb{}{t> 0}{v\in V\setminus S}
  =\Prob{}{t> 0}\cProb{}{t\leq 0}{y(v)> t},
  \]
  which is equal to $\Prob{}{t\in [0,y(v)]} = y(v)^2$.
  \end{proof}

\subsection{Dense cuts on SSE's}
Using the same proof strategy as above, one may establish the following improved spectral partitioning result based on the smallest
eigenvalue of $G$ when the graph is a small-set expander~\cite{Trevisan}.
\begin{thm}\label{thm:dense_cut}
  There exists an absolute constant $C>0$, such that the following holds for all $\eps,\gamma,\delta>0$.
  Suppose that $G$ is a $d$-regular graph with $\lambda_n(G)\leq -1+\delta$,
  and suppose $G$ is an $(\eps,\gamma)$ small set expander. Then there is $y\in\{-1,0,1\}^n$ such that
  \[
  \frac{\sum\limits_{(u,v)\in E}\card{y(u)+y(v)}}{d\sum\limits_{u\in V}{\card{y(u)}}}\leq C\cdot \frac{\delta}{\gamma^3\eps^3}.
  \]
  Furthermore, such $S$ can be found efficiently.
\end{thm}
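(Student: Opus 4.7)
The plan is to reduce to Theorem~\ref{thm:main_cheeger} applied to the bipartite double cover of $G$, exploiting the fact that the smallest eigenvalue of $G$ corresponds to the second-largest eigenvalue of the cover. Define $\tilde G = (\tilde V, \tilde E)$ by $\tilde V = V \times \set{-1, 1}$ and $\tilde E = \sett{((u, a), (v, b))}{(u,v)\in E,\ a\neq b}$. This is exactly the auxiliary graph $G'$ used in the proof of Theorem~\ref{thm:main_MC}, and the same reasoning as in the opening claim of that proof shows $\tilde G$ is $d$-regular and $(\eps/2, \gamma/2)$ small-set expanding. Let $x$ be an eigenvector of $G$ with eigenvalue $\lambda_n(G) \leq -1+\delta$ and $\Expect{v}{x(v)^2} = 1$, and define the anti-symmetric lift $\hat x(v, b) = b \cdot x(v)$. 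A short calculation shows $\hat x$ is an eigenvector of the normalized adjacency matrix of $\tilde G$ with eigenvalue $-\lambda_n(G) \geq 1-\delta$, orthogonal to the all-ones vector; hence $\lambda_2(\tilde G) \geq 1-\delta$, and Theorem~\ref{thm:main_cheeger} becomes applicable to $\tilde G$ with eigenvector $\hat x$.

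We run the proof of Theorem~\ref{thm:main_cheeger} on $\tilde G$ with $\hat x$, modified to ensure the output set is \emph{anti-symmetric} under the involution $\sigma: (v, b) \mapsto (v, -b)$, meaning that for each $v \in V$ at most one of $(v, 1), (v, -1)$ lies in it. In the application of Lemma~\ref{lem:cheeger_quantized} to $\hat x$, couple the cut-off points by the rule $p_{j, -} = -p_{j, +}$. This is valid because $\sigma$ is a graph automorphism of $\tilde G$ that exchanges $V_{+, j}$ with $V_{-, j}$ and preserves the $\eps$-quantization structure, so the small-set expansion based argument that validates $p_{j, +}$ also validates $p_{j, -} = -p_{j, +}$. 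The resulting quantized vector $\hat y$ satisfies $\hat y \circ \sigma = -\hat y$. The centering step of Theorem~\ref{thm:main_cheeger} is unnecessary since anti-symmetry places the median of $\hat y$ at $0$, and for any real threshold $t$ the smaller of $\set{\hat y \leq t}$ and $\set{\hat y > t}$ is automatically anti-symmetric: $\hat y(v, 1)$ and $\hat y(v, -1)$ have opposite signs, so at most one of them can have magnitude exceeding $\card{t}$. The remainder of the analysis is unchanged, producing an anti-symmetric set $\hat{\tilde S} \subseteq \tilde V$ with $\card{\hat{\tilde S}} \leq n = \card{\tilde V}/2$ and $\Phi_{\tilde G}(\hat{\tilde S}) \leq O(\delta/(\gamma^3 \eps^3))$.

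Finally, convert $\hat{\tilde S}$ into $y \in \set{-1, 0, 1}^n$ by setting $y(v) = b$ when $(v, b) \in \hat{\tilde S}$ and $y(v) = 0$ otherwise; anti-symmetry makes $y$ well-defined. A short case analysis over the sign patterns of $(y(u), y(v))$ verifies that for each edge $(u, v) \in E$ the two corresponding edges of $\tilde G$ contribute exactly $\card{y(u) + y(v)}$ to the edge boundary of $\hat{\tilde S}$, while $\card{\hat{\tilde S}} = \sum_{v} \card{y(v)}$. The ratio in Theorem~\ref{thm:dense_cut} therefore equals $\Phi_{\tilde G}(\hat{\tilde S})$, yielding the claimed bound. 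The main obstacle is the anti-symmetric variant of Lemma~\ref{lem:cheeger_quantized}: one must check that the cut-off point selection in each light/heavy interval can be coupled between $V_{+, j}$ and $V_{-, j}$ without losing in the $\gamma, \eps$ dependence. This goes through because every step of that proof (the dyadic partition, the small-set expansion growth argument, and the final rounding to interval endpoints) is sign-agnostic, and $\sigma$ preserves all of the relevant quantities.
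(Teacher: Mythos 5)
Your proposal is correct, but it takes a genuinely different route from the one the paper intends: the paper omits the proof and indicates that it is a direct adaptation of the argument for Theorem~\ref{thm:main_cheeger} to the least eigenvalue, in the spirit of Trevisan (quantize an eigenvector of $\lambda_n$ and do a two-sided threshold rounding, with $\card{x(u)+x(v)}$ playing the role of $\card{x(u)-x(v)}$). You instead reduce to Theorem~\ref{thm:main_cheeger} via the bipartite double cover: the anti-symmetric lift $\hat x(v,b)=b\,x(v)$ of a bottom eigenvector is indeed an eigenvector of the cover with eigenvalue $-\lambda_n(G)\ge 1-\delta$ orthogonal to the all-ones vector, the cover is $(\eps/2,\gamma/2)$ small-set expanding exactly as in the claim at the start of the proof of Theorem~\ref{thm:main_MC}, and your case analysis converting an anti-symmetric set $\hat{\tilde S}$ into $y\in\{-1,0,1\}^n$ with the dense-cut ratio equal (up to the edge-orientation convention, a factor $2$ at most) to $\Phi_{\tilde G}(\hat{\tilde S})$ checks out. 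The one place where your reduction is not black-box is the one you correctly flag: you must re-enter Lemma~\ref{lem:cheeger_quantized} to couple $p_{j,-}=-p_{j,+}$, and your equivariance justification is sound, since $\sigma(v,b)=(v,-b)$ is an automorphism of the cover with $\hat x\circ\sigma=-\hat x$, so the light/heavy classification, masses and costs of $V_{+,j}$ and $V_{-,j}$ coincide, the midpoint choice for light intervals is automatically mirrored, and in heavy intervals the candidate $v_i$ succeeds if and only if $-v_i$ does (one should also mirror the half-open interval convention and the tie-breaking at $p_{j,\pm}$, a cosmetic point). Your observations that anti-symmetry makes the centering step unnecessary, and that for $t<0$ the set $\set{\hat y\le t}$ (and for $t>0$ its complement) is simultaneously the smaller side and anti-symmetric, are also correct, so the bound $\Pr[v\text{ in the smaller side}]\ge \hat y(v)^2$ goes through unchanged. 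In comparison, your reduction buys reuse of the already-proved quantization and rounding machinery at the cost of constant factors from the $(\eps/2,\gamma/2)$ parameters of the cover and the equivariance check inside the lemma, while the paper's direct route avoids the cover but requires redoing the quantization and rounding analysis in its signed form.
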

As the proof is very close in spirit, we omit the details. We remark that Trevisan~\cite{Trevisan} proves a version of Theorem~\ref{thm:dense_cut}
for general graphs and uses it to get a spectral approximation algorithm for Max-cut with better approximation ratio than $1/2$.
In that context, a solution as in Theorem~\ref{thm:dense_cut} is to be interpreted as a partial cut, and after finding $y$ Trevisan considers
the set of vertices which receive value $0$ in $y$ and applies the algorithm on them recursively. One may expect such approach to also work in our case to get an algorithm for Max-cut,
however the recursive nature of Trevisan's approach may not preserve the small-set expansion of the induced graph. We were therefore led to seeking a more direct
approach for applying the soft-rounding idea to the Max-cut problem, and indeed our proof of Theorem~\ref{thm:main_MC} was found in this way.

\section{Strong Parallel Repetition for Unique-Games on SSE's}
In this section, we prove a more general version of Theorem~\ref{thm:main_UG} for the class of projection games.
Our proof goes through the information theoretic approach to parallel repetition~\cite{Raz,Holenstein,Rao,BravermanGarg}, and
we will closely follow the argument in~\cite{BravermanGarg}. Our argument differs only towards the end of the argument, but
we sketch for completeness. We will prove that for sufficiently large absolute constant $K>0$,
setting $T = \lceil{\frac{K}{\eps^3\gamma^3\delta}\rceil}$, for $t\geq T$ it holds that
${\sf val}(\Psi^{t})\leq \left(1-\Omega(\delta \eps^3\gamma^3)\right)^t$.
The result for any other $t$ then follows automatically: for any $t\leq T$, letting $\ell = {\lfloor T/t\rfloor}$
we have that $T\geq \ell t$ so
\[
1-\Omega(1)\geq {\sf val}(\Psi^{T})\geq {\sf val}(\Psi^{\ell t}) \geq {\sf val}(\Psi^{t})^{\ell},
\]
so
\[
{\sf val}(\Psi^{t})\leq (1-\Omega(1))^{1/\ell} = e^{-\Omega(t/T)} = (e^{-\Omega(1/T)})^t = (1-\Omega(1/T))^{t},
\]
as desired. From now on, we assume that $t\geq T$.

\paragraph{Notations.} We will denote random variables by boldface capital letters, and instantiations of
them by small letters. We use standard information theoretic tools presented in the appendix for completeness
(see also~\cite{elements}). The KL-divergence from $Q$ to $P$ is denoted by $\DKL{P}{Q}$, however for space considerations, we will sometimes denote
it by
\[
\mathrm{D}_{\text{KL}}\begin{pmatrix}
    P\\
    \vspace{-2ex}\\
    \overline{\overline{\phantom{P}}}
    \vspace{-2ex}\\
    Q
    \end{pmatrix}.
\]

\subsection{Parallel Repetition: the Information Theoretic Set-up}
We will refer to the players in the game as Alice and Bob. The challenges of Alice are in
the $t$-fold repeated game $\Psi^{\otimes n}$ are denoted by ${\bf X}_1,\ldots, {\bf X}_t$ and her
answers are denoted by $({\bf A}_1,\ldots,{\bf A}_t) = f({\bf X}_1,\ldots,{\bf X}_t)$. Similarly, the challenges
of Bob are denoted by ${\bf Y}_1,\ldots,{\bf Y}_t$, and his answers are denoted by
$({\bf B}_1,\ldots,{\bf B}_t) = g({\bf Y}_1,\ldots,{\bf Y}_t)$. Finally, we denote by $W$ the event that Alice and Bob
win on all of the $t$ challenges. We will prove the counter positive statement, namely that if
$\Prob{}{W} > (1-c\delta\eps^3\gamma^3)^t$ for sufficiently small absolute constant $c$, then ${\sf val}(\Psi) > 1-\delta$
(we recall that $t\geq T$).
\subsubsection{The information theoretic approach}
Let ${\bf s_g}, {\bf s_h}$ be uniformly chosen integers in $\{3/4 t+1,\ldots,t\}$, and let $\bm{\sigma}\in S_t$ be a uniformly chosen permutation.
Denote
\[
{\bf H} = \bm{\sigma}([{\bf s_h}]) = \sett{\bm{\sigma}(i)}{i\in [{\bf s_h}]}, \qquad
{\bf G} = \sigma(\{t-{\bf s_g}+1,\ldots,t\}) = \sett{\sigma(i)}{i\in\{t-{\bf s_g}+1,\ldots,t\}}.
\]
Let ${\bf I}\in {\bf G}\cap {\bf H}$ be chosen uniformly, and let $\bm{\ell}\in [t/4]$ be chosen uniformly.
Let ${\bf S}\subseteq ({\bf G}\cap {\bf H})\setminus\{{\bf I}\}$ be chosen uniformly of size $\bm{\ell}$. Denote
\[
{\bf L}_{{\bf S},{\bf G},{\bf H},{\bf I}} = ({\bf X}_{{\bf G}\setminus\{{\bf I}\}}, {\bf Y}_{{\bf H}\setminus\{{\bf I}\}}, {\bf B}_{\bf S}).
\]

Below we state Lemmas 5.2, 5.3, 5.4 and 5.5 from~\cite{BravermanGarg}. We omit the proofs as they are identical to the proofs therein.
\begin{claim}\label{claim:MI1}
  $\Expect{{\bf S},{\bf G},{\bf H},{\bf I}}{\MI({\bf A}_{\bf I}; {\bf Y}_{\bf I} | {\bf X}_{\bf I}, {\bf L}_{{\bf S},{\bf G},{\bf H},{\bf I}}, W)}\leq \frac{4}{t}\log(1/\Prob{}{W})$.
\end{claim}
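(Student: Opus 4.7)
My plan is to follow the information-theoretic blueprint for parallel repetition, essentially reproducing Lemma~5.2 of~\cite{BravermanGarg}. The starting point is the elementary identity
\[
\DKL{\mathrm{Law}({\bf X}, {\bf Y}, {\bf A}, {\bf B}\mid W)}{\mathrm{Law}({\bf X}, {\bf Y}, {\bf A}, {\bf B})}
= \log\frac{1}{\Prob{}{W}},
\]
which says that conditioning on the winning event $W$ costs exactly $\log(1/\Prob{}{W})$ bits of KL-divergence relative to the unconditioned distribution. The plan is to spread this total ``budget'' across the $t$ coordinates using the chain rule of KL-divergence, and to argue that each coordinate absorbs at most a $\Theta(1/t)$ share on average.

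First I would exploit the product structure of the game: unconditionally, the pairs $({\bf X}_i, {\bf Y}_i)$ are independent across $i\in [t]$, while ${\bf A}_{[t]}$ and ${\bf B}_{[t]}$ are deterministic functions of ${\bf X}_{[t]}$ and ${\bf Y}_{[t]}$ respectively. Expanding the left-hand side above via the chain rule in an ordering that places coordinate $i$ last, each summand takes the form $\DKL{\mathrm{Law}({\bf Y}_i, {\bf A}_i\mid {\bf X}_i, \text{prefix}, W)}{\mathrm{Law}({\bf Y}_i, {\bf A}_i\mid {\bf X}_i, \text{prefix})}$, which in turn upper-bounds $\MI({\bf A}_i;\ {\bf Y}_i \mid {\bf X}_i, \text{prefix}, W)$ via non-negativity of the remaining pieces.

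Next I would symmetrize over the random choice of $({\bf S}, {\bf G}, {\bf H}, {\bf I})$, which is engineered so that after the symmetrization the ``prefix'' variables line up precisely with ${\bf L}_{{\bf S}, {\bf G}, {\bf H}, {\bf I}} = ({\bf X}_{{\bf G}\setminus\{{\bf I}\}}, {\bf Y}_{{\bf H}\setminus\{{\bf I}\}}, {\bf B}_{{\bf S}})$. This is why ${\bf G}, {\bf H}$ are chosen as random subsets of size at least $3t/4$: their intersection has size at least $t/2$, so averaging a uniform ${\bf I}\in {\bf G}\cap {\bf H}$ yields a factor $\Theta(1/t)$, producing the constant $4$ in the final bound once we also average over ${\bf S}$. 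The role of the revealed ${\bf B}_{\bf S}$ inside the leak is to make the chain-rule expansion order-invariant across Alice's and Bob's coordinates in the intersection.

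The main obstacle in executing this plan is the combinatorial bookkeeping needed to verify that the conditioning set in each chain-rule term matches $({\bf X}_{\bf I}, {\bf L}_{{\bf S}, {\bf G}, {\bf H}, {\bf I}})$ \emph{exactly} after taking expectations. This is the reason for the particular construction of $({\bf G}, {\bf H}, {\bf S}, {\bf I})$: under the expectation over ${\bf S}$, the contributions of the ${\bf B}$'s inside ${\bf G}\cap {\bf H}$ that are \emph{not} revealed integrate out correctly, so that the per-coordinate chain-rule summand becomes precisely $\MI({\bf A}_{\bf I};\ {\bf Y}_{\bf I}\mid {\bf X}_{\bf I}, {\bf L}_{{\bf S}, {\bf G}, {\bf H}, {\bf I}}, W)$ on average. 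Once this identification is carried out, summing the chain-rule bounds and dividing by $|{\bf G}\cap {\bf H}|\geq t/2$ delivers the claimed $(4/t)\log(1/\Prob{}{W})$ inequality.
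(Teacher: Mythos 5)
Your outline is essentially the paper's own route: the paper gives no standalone proof of this claim at all (it imports it verbatim as Lemma~5.2 of~\cite{BravermanGarg}, explicitly omitting the proof as identical), and your plan --- charging the total KL budget $\log(1/\Prob{}{W})$ from conditioning on $W$, spreading it over coordinates by the chain rule, and symmetrizing over the random $({\bf S},{\bf G},{\bf H},{\bf I})$ so that the conditioning aligns with $({\bf X}_{\bf I}, {\bf L}_{{\bf S},{\bf G},{\bf H},{\bf I}})$ and the averaging over ${\bf I}\in{\bf G}\cap{\bf H}$ with $\card{{\bf G}\cap{\bf H}}\geq t/2$ yields the $O(1/t)$ factor --- is exactly that information-theoretic argument. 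The ``bookkeeping'' you defer is precisely the substance of the cited proof (in particular, replacing ${\bf A}_{\bf I}$ by ${\bf X}_{[t]\setminus {\bf G}}$ via data processing since ${\bf A}$ is a function of ${\bf X}$ and ${\bf X}_{\bf G}$ sits inside the conditioning, and verifying that ${\bf A}_{\bf I}$ and ${\bf Y}_{\bf I}$ are conditionally independent given $({\bf X}_{\bf I},{\bf L})$ under the \emph{unconditioned} law even with ${\bf B}_{\bf S}$ revealed, which is what makes each KL summand dominate the corresponding mutual information), so your proposal is consistent with, rather than an alternative to, the paper's proof-by-reference.
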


\begin{claim}\label{claim:MI2}
  $\Expect{{\bf S},{\bf G},{\bf H},{\bf I}}{\MI({\bf L}_{{\bf S},{\bf G},{\bf H},{\bf I}} ; {\bf Y}_{\bf I} | {\bf X}_{\bf I}, W)}\leq \frac{8}{t}\log(1/\Prob{}{W})$.
\end{claim}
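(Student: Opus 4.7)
The plan is to follow the information-theoretic strategy of~\cite{BravermanGarg}. The central fact I use throughout is the elementary inequality
$$\DKL{P(\cdot\mid W)}{P}\leq \log(1/\Prob{}{W}),$$
which quantifies the cost (in KL-divergence) of conditioning on an event of positive probability. The unconditional joint distribution is the product measure, under which different rounds are independent.

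The first step is to apply the chain rule for mutual information, decomposing the quantity of interest into three pieces corresponding to the three blocks of ${\bf L}_{{\bf S},{\bf G},{\bf H},{\bf I}} = ({\bf X}_{{\bf G}\setminus\{{\bf I}\}}, {\bf Y}_{{\bf H}\setminus\{{\bf I}\}}, {\bf B}_{\bf S})$:
\begin{align*}
\MI({\bf L}_{{\bf S},{\bf G},{\bf H},{\bf I}}; {\bf Y}_{\bf I}\mid {\bf X}_{\bf I}, W)
&= \MI({\bf X}_{{\bf G}\setminus\{{\bf I}\}}; {\bf Y}_{\bf I}\mid {\bf X}_{\bf I}, W) \\
&\quad+ \MI({\bf Y}_{{\bf H}\setminus\{{\bf I}\}}; {\bf Y}_{\bf I}\mid {\bf X}_{\bf G}, W) \\
&\quad+ \MI({\bf B}_{\bf S}; {\bf Y}_{\bf I}\mid {\bf X}_{\bf G}, {\bf Y}_{\bf H}, W).
\end{align*}
The third term vanishes for free: since ${\bf I}\in{\bf H}$ by construction, the variable ${\bf Y}_{\bf I}$ is a deterministic function of ${\bf Y}_{\bf H}$, hence the conditional entropy (and therefore the mutual information) is zero.

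For the first two terms, the plan is to bound each by $O(1/t)\log(1/\Prob{}{W})$ in expectation over the random choice of subsets. Under the unconditional product measure, the pairs $({\bf X}_i,{\bf Y}_i)$ are independent across $i$, so both mutual informations would be zero without the conditioning on $W$; the cost of that conditioning is controlled by applying the chain rule for KL-divergence, which yields
$$\sum_{i=1}^t \MI({\bf X}_{[t]\setminus\{i\}}; {\bf Y}_i\mid {\bf X}_i, W) \leq \log(1/\Prob{}{W})$$
and an analogous inequality for the ${\bf Y}$-term. Replacing ${\bf G}\setminus\{{\bf I}\}$ with the larger set $[t]\setminus\{{\bf I}\}$ via data processing only increases each term, and then averaging over ${\bf I}$---whose marginal distribution on $[t]$ has density $O(1/t)$, since ${\bf I}$ is uniform on ${\bf G}\cap{\bf H}$ and $|{\bf G}\cap{\bf H}|\geq t/2$ by the constraints on ${\bf s_g},{\bf s_h}$---converts the sum into its average and produces the desired $O(1/t)\log(1/\Prob{}{W})$ bound.

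The main subtlety I anticipate is the bookkeeping needed to arrive at the stated constant $8$: one must track carefully the density of the marginal of ${\bf I}$ on $[t]$ (which costs a factor of $2$ from $|{\bf G}\cap{\bf H}|\geq t/2$) and handle the ${\bf X}$- and ${\bf Y}$-contributions in sequence, making sure the chain-of-data-processing step costs only a constant factor. Apart from this, the argument is mechanical and essentially identical to that of~\cite[Lemma 5.3]{BravermanGarg}, which is precisely why the authors omit the details.
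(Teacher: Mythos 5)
The paper does not actually prove this claim: it is quoted as Lemma 5.3 of~\cite{BravermanGarg} and the proof is explicitly omitted as identical to the one there, so your sketch has to stand on its own merits against that argument. It does not, because of a concrete error in your chain-rule decomposition. Since ${\bf L}_{{\bf S},{\bf G},{\bf H},{\bf I}} = ({\bf X}_{{\bf G}\setminus\{{\bf I}\}}, {\bf Y}_{{\bf H}\setminus\{{\bf I}\}}, {\bf B}_{\bf S})$, the chain rule gives as its third term $\MI({\bf B}_{\bf S}; {\bf Y}_{\bf I}\mid {\bf X}_{\bf G}, {\bf Y}_{{\bf H}\setminus\{{\bf I}\}}, W)$, i.e.\ the conditioning contains only ${\bf Y}_{{\bf H}\setminus\{{\bf I}\}}$, not ${\bf Y}_{\bf H}$. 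You replaced it by $\MI({\bf B}_{\bf S}; {\bf Y}_{\bf I}\mid {\bf X}_{\bf G}, {\bf Y}_{\bf H}, W)$ and declared it zero because ${\bf Y}_{\bf I}$ is determined by ${\bf Y}_{\bf H}$; but ${\bf Y}_{\bf I}$ is precisely the variable that is \emph{not} in the conditioning, so the term does not vanish ``for free.'' Indeed ${\bf B}_{\bf S} = g({\bf Y}_1,\ldots,{\bf Y}_t)_{\bf S}$ is a function of \emph{all} of Bob's questions, including ${\bf Y}_{\bf I}$, so this mutual information is generically positive even without conditioning on $W$, and conditioning on $W$ entangles it further. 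Controlling the information that the answers ${\bf B}_{\bf S}$ reveal is exactly the delicate part of the Braverman--Garg argument (it is the reason the random set ${\bf S}$ of random size $\bm{\ell}$ appears in the construction at all; compare Claim~\ref{claim:MI4}, where the random size is what produces the $\frac{4}{t}\HH(1_W)$ bound), so dismissing it is a genuine gap rather than bookkeeping.

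A secondary issue: for the two question terms you assert $\sum_{i=1}^t \MI({\bf X}_{[t]\setminus\{i\}}; {\bf Y}_i\mid {\bf X}_i, W)\leq \log(1/\Prob{}{W})$ ``by the chain rule for KL-divergence.'' The standard KL chain-rule argument for independent coordinates only controls prefix-style conditioning, i.e.\ sums of terms like $\MI({\bf Y}_i;{\bf X}_{<i},{\bf Y}_{<i}\mid {\bf X}_i, W)$; upgrading this to the full complement $[t]\setminus\{i\}$ (and to the second term, where the conditioning is on all of ${\bf X}_{\bf G}$) is not automatic and is precisely what the randomization over the permutation $\bm{\sigma}$ and the sizes ${\bf s_g},{\bf s_h}$ defining ${\bf G},{\bf H}$ is designed to deliver in~\cite{BravermanGarg}. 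So even for those terms your sketch replaces the actual mechanism by an unproven inequality. To repair the argument you would need to keep the correct third term and bound it (together with the first two) by the Braverman--Garg-style averaging over ${\bf G},{\bf H},{\bf S},{\bf I}$, rather than by the decomposition as you wrote it.
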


\begin{claim}\label{claim:MI3}
  $\Expect{{\bf S},{\bf G},{\bf H},{\bf I}}{\MI({\bf L}_{{\bf S},{\bf G},{\bf H},{\bf I}} ; {\bf X}_{\bf I} | {\bf Y}_{\bf I}, W)}\leq \frac{8}{t}\log(1/\Prob{}{W})$.
\end{claim}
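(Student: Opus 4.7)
The plan is to prove Claim~\ref{claim:MI3} by mirroring the proof of Claim~\ref{claim:MI2} with the roles of Alice and Bob (and correspondingly ${\bf X}$ versus ${\bf Y}$, ${\bf A}$ versus ${\bf B}$) interchanged. The information-theoretic parallel repetition machinery of~\cite{BravermanGarg} treats the two players symmetrically up to this swap, so the argument should carry over almost verbatim, as the author already indicates by deferring to the BG proof. The only nontrivial asymmetry is that ${\bf L}_{{\bf S},{\bf G},{\bf H},{\bf I}}$ contains Bob's answers ${\bf B}_{\bf S}$ but no answers of Alice; however, as I explain below, the construction of ${\bf L}$ is tailored precisely so that this asymmetry is harmless when one bounds information about ${\bf X}_{\bf I}$.

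The key ingredient is the ``KL budget'' $\DKL{P_{\cdot \mid W}}{P_{\cdot}} \leq \log(1/\Prob{}{W})$, which holds because $W$ has probability $\Prob{}{W}$. Under the unconditional law, the coordinates $({\bf X}_i, {\bf Y}_i)$ are independent across $i \in [t]$, so conditioning on $W$ is the only source of inter-coordinate correlation, and this budget controls how much correlation can appear. The next step is to expand
\[
\MI({\bf L}_{{\bf S},{\bf G},{\bf H},{\bf I}}; {\bf X}_{\bf I} \mid {\bf Y}_{\bf I}, W)
\]
via the chain rule, peeling off the components ${\bf X}_{{\bf G}\setminus\{{\bf I}\}}$, ${\bf Y}_{{\bf H}\setminus\{{\bf I}\}}$, and ${\bf B}_{\bf S}$ one at a time. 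Since ${\bf B}_{\bf S}$ is a deterministic function of ${\bf Y}_{\bf S}$ and since ${\bf S} \subseteq ({\bf G}\cap{\bf H})\setminus\{{\bf I}\} \subseteq {\bf H}\setminus\{{\bf I}\}$, the ${\bf B}_{\bf S}$ component is already determined once ${\bf Y}_{{\bf H}\setminus\{{\bf I}\}}$ has been placed in the conditioning, and contributes no additional mutual information about ${\bf X}_{\bf I}$. The remaining cross-coordinate terms are then bounded by appealing to the chain rule for KL-divergence under the product structure of the unconditional law.

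The final step is the averaging: with ${\bf I}$ chosen uniformly in ${\bf G}\cap {\bf H}$, which has size $\Theta(t)$ with overwhelming probability over the choice of ${\bf G}, {\bf H}$, the expected value of the mutual information is at most $O(1/t) \cdot \log(1/\Prob{}{W})$, and carefully tracking constants (the $3/4$ thresholds in the definition of ${\bf s_g}, {\bf s_h}$) gives the stated factor $\tfrac{8}{t}$. The main subtle point---and the step that requires the careful bookkeeping of~\cite{BravermanGarg}---is choosing the order in which the chain rule is applied, so that ${\bf B}_{\bf S}$ can be absorbed into the conditioning via ${\bf Y}_{{\bf H}\setminus\{{\bf I}\}}$ without introducing spurious terms. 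This is exactly why the definition of ${\bf L}$ forces ${\bf S} \subseteq ({\bf G} \cap {\bf H}) \setminus \{{\bf I}\}$ rather than allowing arbitrary subsets of $[t]$, and this specific structural choice is what makes the argument go through identically to the BG proof with the Alice/Bob roles swapped.
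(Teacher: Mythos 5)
The paper itself does not give a proof of Claim~\ref{claim:MI3}: it is quoted as one of Lemmas 5.2--5.5 of~\cite{BravermanGarg} and the proof is explicitly omitted as identical to the one there. So your sketch has to stand on its own, and it has a genuine gap at its central step: you assert that ${\bf B}_{\bf S}$ is a deterministic function of ${\bf Y}_{\bf S}$, hence already determined once ${\bf Y}_{{\bf H}\setminus\{{\bf I}\}}$ (and ${\bf Y}_{\bf I}$) are in the conditioning, so that its chain-rule term contributes nothing to $\MI({\bf L}_{{\bf S},{\bf G},{\bf H},{\bf I}};{\bf X}_{\bf I}\,|\,{\bf Y}_{\bf I},W)$. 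This is false. In the repeated game Bob plays a general strategy $({\bf B}_1,\ldots,{\bf B}_t)=g({\bf Y}_1,\ldots,{\bf Y}_t)$, so ${\bf B}_{\bf S}$ depends on Bob's \emph{entire} challenge tuple, in particular on ${\bf Y}_{[t]\setminus{\bf H}}$, which is neither in ${\bf L}$ nor in the conditioning. Without the conditioning on $W$ this would still be harmless (given ${\bf Y}_{\bf I}$, all variables on other coordinates are independent of ${\bf X}_{\bf I}$, and the whole mutual information is $0$), but the quantity in the claim is conditioned on $W$, and that conditioning is exactly what correlates ${\bf Y}_{[t]\setminus{\bf H}}$ --- and hence ${\bf B}_{\bf S}$ --- with ${\bf X}_{\bf I}$. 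So the ${\bf B}_{\bf S}$ term in your decomposition does not vanish; bounding it is the delicate part of the Raz/Holenstein/Braverman--Garg argument, and it is precisely where the random size $\bm{\ell}\in[t/4]$ and the random subset ${\bf S}\subseteq({\bf G}\cap{\bf H})\setminus\{{\bf I}\}$ (together with $\card{{\bf G}\cap{\bf H}}\geq t/2$) are used. Your proposal offers no substitute for this step.

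A secondary inaccuracy: the proof is not ``Claim~\ref{claim:MI2} with Alice and Bob swapped,'' because ${\bf L}_{{\bf S},{\bf G},{\bf H},{\bf I}}$ contains Bob's answers and none of Alice's, so the two claims are not mirror images. In Claim~\ref{claim:MI2} the variable ${\bf B}_{\bf S}$ is correlated with the target ${\bf Y}_{\bf I}$ even without conditioning on $W$, while in Claim~\ref{claim:MI3} the correlation arises only through $W$; the corresponding lemmas in~\cite{BravermanGarg} treat this part differently rather than by a formal role swap. The remaining skeleton of your outline --- the budget $\DKL{\,\cdot\,|W}{\,\cdot\,}\leq\log(1/\Prob{}{W})$, the chain rule peeling off the components of ${\bf L}$, averaging over ${\bf I}$ uniform in ${\bf G}\cap{\bf H}$ (whose size is at least $t/2$ deterministically, not merely with high probability) --- is the right shape, but without a correct treatment of the ${\bf B}_{\bf S}$ term it does not yield the stated bound of $\frac{8}{t}\log(1/\Prob{}{W})$.
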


\begin{claim}\label{claim:MI4}
  $\Expect{{\bf S},{\bf G},{\bf H}, {\bf I}}{\MI({\bf B}_{\bf I} ; 1_W | {\bf X}_{\bf I}, {\bf Y}_{\bf I}, {\bf L}_{{\bf S},{\bf G},{\bf H},{\bf I}}, {\bf A}_{\bf I})}
  \leq \frac{4}{t}H(1_W)$.
\end{claim}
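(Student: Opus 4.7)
}
The plan is to adapt the chain-rule / random-coordinate argument from~\cite{BravermanGarg}. Let $M = {\bf G}\cap {\bf H}$. Unpacking the notation, conditioning on $({\bf X}_{\bf I}, {\bf Y}_{\bf I}, {\bf L}_{{\bf S},{\bf G},{\bf H},{\bf I}}, {\bf A}_{\bf I})$ is the same as conditioning on $({\bf X}_{\bf G}, {\bf Y}_{\bf H}, {\bf B}_{\bf S}, {\bf A}_{\bf I})$, and since ${\bf s_g}, {\bf s_h} \geq 3t/4+1$ we have $|M| = {\bf s_g} + {\bf s_h} - t \geq t/2$ deterministically.

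The starting point is that $1_W$ is a deterministic function of $({\bf X}_{[t]}, {\bf Y}_{[t]}, {\bf A}_{[t]}, {\bf B}_{[t]})$, so by the data-processing inequality $H(1_W) \geq \MI(1_W\,;\,{\bf B}_M \mid {\bf X}_{\bf G}, {\bf Y}_{\bf H}, {\bf A}_M)$. Expanding the right-hand side by the chain rule along a uniformly random ordering of $M$ and choosing the split point uniformly in $[|M|]$, the decomposition telescopes to
\[
  \Expect{{\bf I},{\bf S}'}{\MI({\bf B}_{\bf I}\,;\,1_W \mid {\bf X}_{\bf G}, {\bf Y}_{\bf H}, {\bf A}_M, {\bf B}_{{\bf S}'})} \leq \frac{1}{|M|} H(1_W) \leq \frac{2}{t} H(1_W),
\]
where ${\bf I}$ is uniform in $M$ and ${\bf S}'$ is a uniform random subset of $M \setminus \{{\bf I}\}$ whose size is uniform in $\{0,1,\ldots,|M|-1\}$. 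This is the ``backbone'' estimate.

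Two adjustments are needed to match the statement of the claim. First, the claim conditions on ${\bf A}_{\bf I}$ rather than ${\bf A}_M$; to pass from one to the other, use $\MI({\bf B}_{\bf I}\,;\,1_W\mid \cdots, {\bf A}_{\bf I}) \leq \MI({\bf B}_{\bf I}\,;\,1_W, {\bf A}_{M\setminus{\bf I}}\mid \cdots, {\bf A}_{\bf I})$ and expand by the chain rule, splitting into a term of the desired form with ${\bf A}_M$ in the conditioning plus a correction $\MI({\bf B}_{\bf I}\,;\,{\bf A}_{M\setminus{\bf I}}\mid \cdots)$, which is bounded by an analogous chain-rule argument applied on the ${\bf A}$-side. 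Second, the size of ${\bf S}$ in the claim (uniform in $[t/4]$) is not the natural prefix distribution; however, since $t/4 \leq |M|/2$, conditioning the uniform-prefix distribution to this range only inflates the average by a constant factor, absorbed into the $4/t$ bound. The main obstacle is the first adjustment: one needs to arrange the bookkeeping so that the correction term for changing ${\bf A}_M$ to ${\bf A}_{\bf I}$ can itself be telescoped against $H(1_W)$ with a comparable $1/|M|$ gain rather than accumulating a factor of $|M|$; this is precisely the delicate point of the BG framework, and the proof closely follows theirs.
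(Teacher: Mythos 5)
First, a point of reference: the paper gives no proof of this claim at all --- it is quoted verbatim from~\cite{BravermanGarg} (``we omit the proofs as they are identical to the proofs therein''), so your attempt has to be judged as a reconstruction of that argument. Your backbone is fine: the conditioning in the claim is indeed $({\bf X}_{\bf G},{\bf Y}_{\bf H},{\bf B}_{\bf S},{\bf A}_{\bf I})$, $\card{{\bf G}\cap{\bf H}}\geq t/2$, the chain rule over a uniformly random ordering of $M={\bf G}\cap{\bf H}$ telescopes correctly, and the passage from a uniform prefix length to $\bm{\ell}$ uniform in $[t/4]$ costs exactly the factor giving $\frac{4}{t}$. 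The genuine gap is the step you yourself flag and leave unresolved: replacing ${\bf A}_M$ in the conditioning by ${\bf A}_{\bf I}$. Your proposed remedy --- bounding the correction term $\MI({\bf B}_{\bf I};{\bf A}_{M\setminus\{{\bf I}\}}\mid {\bf X}_{\bf G},{\bf Y}_{\bf H},{\bf B}_{\bf S},{\bf A}_{\bf I})$ by ``an analogous chain-rule argument on the ${\bf A}$-side'' --- does not work as described: the averaging in the claim is over a ${\bf B}$-side prefix $({\bf S},{\bf I})$, not over a random prefix of ${\bf A}$-indices, and even if one telescoped over the ${\bf A}$'s, the total one would need to control is $\MI({\bf B}_M;{\bf A}_M\mid{\bf X}_{\bf G},{\bf Y}_{\bf H})$, for which there is no analogue of $H(1_W)$ to charge against. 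So the plan, as written, does not close (and if the correction really were a separate nonzero term, your accounting would also overshoot the stated constant $\frac{4}{t}$).

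The missing idea is the conditional-independence fact underlying Claim~\ref{claim:break_dependency}: since ${\bf s_g},{\bf s_h}>3t/4$ force ${\bf G}\cup{\bf H}=[t]$, conditioned on $({\bf X}_{\bf G},{\bf Y}_{\bf H})$ the residual randomness splits into an Alice part and a Bob part, so ${\bf A}_{[t]}$ and ${\bf B}_{[t]}$ are conditionally independent, and this persists under further conditioning on variables measurable with respect to either side (such as ${\bf B}_{\bf S}$ and ${\bf A}_{\bf I}$). Consequently your correction term is identically $0$; equivalently, one can argue directly that
$\MI({\bf B}_{\bf I};1_W\mid{\bf X}_{\bf G},{\bf Y}_{\bf H},{\bf B}_{\bf S},{\bf A}_{\bf I})\leq \MI({\bf B}_{\bf I};{\bf A}_M,1_W\mid{\bf X}_{\bf G},{\bf Y}_{\bf H},{\bf B}_{\bf S})$,
telescope the right-hand side over a random ${\bf B}$-prefix of $M$ to get
$\MI({\bf B}_M;{\bf A}_M,1_W\mid{\bf X}_{\bf G},{\bf Y}_{\bf H})=\MI({\bf B}_M;{\bf A}_M\mid{\bf X}_{\bf G},{\bf Y}_{\bf H})+\MI({\bf B}_M;1_W\mid{\bf X}_{\bf G},{\bf Y}_{\bf H},{\bf A}_M)\leq 0+H(1_W)$,
and then average over $\bm{\ell}\in[t/4]$ to obtain exactly $\frac{4}{t}H(1_W)$. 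With this independence input the ``delicate bookkeeping'' you worry about disappears; without it, the argument is not complete. (Minor: your appeal to data processing for $\MI(1_W;{\bf B}_M\mid\cdots)\leq H(1_W)$ is unnecessary --- mutual information with $1_W$ is always at most $H(1_W)$.)
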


The next claim is Lemma 4.6 from~\cite{BravermanGarg}.
\begin{claim}\label{claim:MI5}
  $\Expect{{\bf I}}{
  \DKL{{\bf X}_{\bf I}, {\bf Y}_{\bf I}~|~W}{{\bf X}_{\bf I}, {\bf Y}_{\bf I}}}
  \leq \frac{1}{t}\log\left(\frac{1}{\Prob{}{W}}\right)$.
\end{claim}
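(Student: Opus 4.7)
The proof plan is to reduce this to a standard subadditivity-of-KL-divergence computation, exploiting the fact that in the $t$-fold repeated game, the query pairs $({\bf X}_i,{\bf Y}_i)$ are independent across coordinates under the prior (unconditioned) distribution.

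\textbf{Step 1: bounding the joint KL by $\log(1/\Pr[W])$.} For any event $W$ with $\Pr[W]>0$ and any random variable $Z$, Bayes' rule gives
\[
\DKL{Z \mid W}{Z}
= \Expect{z \sim Z\mid W}{\log \frac{\Pr[W \mid Z=z]}{\Pr[W]}}
\leq \log\frac{1}{\Pr[W]},
\]
since $\Pr[W\mid Z=z]\leq 1$. Apply this with $Z = ({\bf X}_{[t]},{\bf Y}_{[t]})$ to obtain
\[
\DKL{{\bf X}_{[t]},{\bf Y}_{[t]} \mid W}{{\bf X}_{[t]},{\bf Y}_{[t]}}
\leq \log\frac{1}{\Pr[W]}.
\]

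\textbf{Step 2: subadditivity over product marginals.} In the $t$-fold game $\Psi^{\otimes t}$, the prior distribution of $({\bf X}_{[t]},{\bf Y}_{[t]})$ is by definition the product $\prod_{i=1}^t \mu_{{\bf X}_i, {\bf Y}_i}$, where each factor is the distribution of the base game's query pair. By the standard subadditivity of KL divergence against a product measure (equivalently, the chain rule combined with the fact that the prior's conditionals equal its marginals),
\[
\DKL{{\bf X}_{[t]},{\bf Y}_{[t]} \mid W}{{\bf X}_{[t]},{\bf Y}_{[t]}}
\geq \sum_{i=1}^t \DKL{{\bf X}_i,{\bf Y}_i \mid W}{{\bf X}_i,{\bf Y}_i}.
\]
Combining with Step 1 yields $\sum_{i=1}^t \DKL{{\bf X}_i,{\bf Y}_i \mid W}{{\bf X}_i,{\bf Y}_i} \leq \log(1/\Pr[W])$.

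\textbf{Step 3: symmetrizing the choice of ${\bf I}$.} It remains to relate the average over $i\in[t]$ to the expectation over ${\bf I}$. Although ${\bf I}$ is drawn as a uniform element of the random set ${\bf G}\cap{\bf H}$, these sets are defined using the uniformly random permutation $\bm{\sigma}$ that acts symmetrically on $[t]$. Thus for every $i\in[t]$, $\Pr[{\bf I}=i]$ is the same, so the marginal law of ${\bf I}$ is uniform on $[t]$. Consequently
\[
\Expect{\bf I}{\DKL{{\bf X}_{\bf I},{\bf Y}_{\bf I}\mid W}{{\bf X}_{\bf I},{\bf Y}_{\bf I}}}
= \frac{1}{t}\sum_{i=1}^t \DKL{{\bf X}_i,{\bf Y}_i \mid W}{{\bf X}_i,{\bf Y}_i}
\leq \frac{1}{t}\log\frac{1}{\Pr[W]},
\]
as required.

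The only genuinely delicate point is Step 2, and it is entirely standard: it uses that the prior on $({\bf X}_{[t]},{\bf Y}_{[t]})$ is a product, which is a defining feature of parallel repetition. Steps 1 and 3 are both essentially bookkeeping, so no real obstacle is expected; this is the same calculation underlying Lemma~4.6 of~\cite{BravermanGarg}.
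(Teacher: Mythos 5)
Your proof is correct and is exactly the standard argument behind the cited Lemma~4.6 of~\cite{BravermanGarg}, which this paper invokes without reproducing: bound the joint divergence by $\log(1/\Pr[W])$ via Bayes' rule, use that the unconditioned queries are a product across coordinates to get superadditivity over the coordinate marginals, and note that ${\bf I}$ is marginally uniform on $[t]$ by the symmetry of $\bm{\sigma}$. No gaps; all three steps, including the uniformity of ${\bf I}$ and the direction of the subadditivity inequality, check out.
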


The next claim is Lemma 4.9 from~\cite{BravermanGarg}.
\begin{claim}\label{claim:break_dependency}
  Suppose $G,H,S_a,S_b\subseteq [t]$ and $i\in [t]$ such that $G\cup H = [t]\setminus\set{i}$.
  Then for any $\bar{x}, \bar{y}, \bar{a}, \bar{b}, x, y$ in the support
  of ${\bf X}_G, {\bf Y}_H, {\bf A}_{S_a}, {\bf B}_{S_b}, {\bf X}_i$ and ${\bf Y}_i$,
  for all $a,b$ it holds that
  \begin{align*}
  &\cProb{}
  {({\bf X}_G, {\bf Y}_H, {\bf A}_{S_a}, {\bf B}_{S_b}, {\bf X}_i, {\bf Y}_i) = (\bar{x}, \bar{y}, \bar{a}, \bar{b}, x, y)}
  {{\bf A}_i = a, {\bf B}_i = b}\\
  &
  =\cProb{}
  {({\bf X}_G, {\bf Y}_H, {\bf A}_{S_a}, {\bf X}_i) = (\bar{x}, \bar{y}, \bar{a}, x)}
  {{\bf A}_i = a}
  \cdot
  \cProb{}
  {({\bf X}_G, {\bf Y}_H, {\bf B}_{S_b}, {\bf Y}_i) = (\bar{x}, \bar{y}, \bar{b}, x)}
  {{\bf B}_i = b}.
  \end{align*}
\end{claim}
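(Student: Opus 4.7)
The plan is to establish this identity by direct computation, exploiting two structural facts about the parallel-repetition setup: the joint distribution of $({\bf X}_{[t]}, {\bf Y}_{[t]})$ factorizes as $\prod_{j=1}^t p_j({\bf X}_j, {\bf Y}_j)$ over coordinates, and the full transcripts $({\bf A}_1,\ldots,{\bf A}_t) = f({\bf X}_{[t]})$ and $({\bf B}_1,\ldots,{\bf B}_t) = g({\bf Y}_{[t]})$ are deterministic functions of Alice's and Bob's inputs respectively. Introduce $C = G \cap H$, $G' = G \setminus H$, $H' = H \setminus G$, so that the hypothesis $G \cup H = [t]\setminus\{i\}$ gives the partition $[t] = C \sqcup G' \sqcup H' \sqcup \{i\}$.

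First I would expand the LHS numerator, which is the joint probability of the full event together with $\{{\bf A}_i = a, {\bf B}_i = b\}$. The only coordinates left unpinned by the event are ${\bf X}_{H'}$ (since ${\bf X}_G$ and ${\bf X}_i$ are fixed) and ${\bf Y}_{G'}$ (since ${\bf Y}_H$ and ${\bf Y}_i$ are fixed). Summing $\prod_j p_j$ over these, subject to the ${\bf A}$-constraint (a function of ${\bf X}_{H'}$ only) and the ${\bf B}$-constraint (a function of ${\bf Y}_{G'}$ only), the double sum factorizes into
\[
S_A \;=\; \sum_{x' \,:\, ({\bf A}_i, {\bf A}_{S_a})(\bar x, x, x') \,=\, (a, \bar a)} \,\prod_{j \in H'} p_j(x'_j, \bar y_j)
\]
times
\[
S_B \;=\; \sum_{y' \,:\, ({\bf B}_i, {\bf B}_{S_b})(\bar y, y, y') \,=\, (b, \bar b)} \,\prod_{j \in G'} p_j(\bar x_j, y'_j),
\]
all multiplied by the pinned factor $p_i(x,y) \prod_{j \in C} p_j(\bar x_j, \bar y_j)$. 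This factorization is the heart of the argument and relies precisely on the fact that the ${\bf A}$- and ${\bf B}$-constraints act on disjoint sets of free variables.

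Next, I would expand each of the two RHS numerators in the same fashion. The Alice term does not fix ${\bf Y}_i$ and imposes no ${\bf B}$-constraint, so summing these additional degrees of freedom against $p_i(x, \cdot)$ and against $p_j(\bar x_j, \cdot)$ for $j \in G'$ leaves $S_A$ intact and introduces the extra marginal factors $\Pr[{\bf X}_i = x]$ and $\prod_{j \in G'} \Pr[{\bf X}_j = \bar x_j]$; symmetrically, the Bob numerator retains $S_B$ and produces $\Pr[{\bf Y}_i = y]$ and $\prod_{j \in H'} \Pr[{\bf Y}_j = \bar y_j]$ (each numerator also carries a $\prod_{j \in C} p_j(\bar x_j, \bar y_j)$). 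Finally, dividing by the conditioning probabilities $\Pr[{\bf A}_i = a, {\bf B}_i = b]$ on the LHS and $\Pr[{\bf A}_i = a] \cdot \Pr[{\bf B}_i = b]$ on the RHS, I would verify that the marginal factors match up to yield the claimed equality. The main obstacle is the bookkeeping — in particular, checking that the ratio of conditioning probabilities absorbs the excess marginal factors arising from marginalising out ${\bf Y}_i, {\bf Y}_{G'}$ on the Alice side and ${\bf X}_i, {\bf X}_{H'}$ on the Bob side. This is a detailed but routine algebraic verification once the factorization of the LHS numerator into $S_A \cdot S_B$ is in place.
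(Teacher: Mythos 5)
You have misparsed the statement. In the paper's \verb|\cProb| macro the \emph{second} argument is the conditioning event and the \emph{third} is the event whose probability is taken, so the claim actually reads
$\Pr[{\bf A}_i=a,{\bf B}_i=b \mid ({\bf X}_G,{\bf Y}_H,{\bf A}_{S_a},{\bf B}_{S_b},{\bf X}_i,{\bf Y}_i)=(\bar x,\bar y,\bar a,\bar b,x,y)]
=\Pr[{\bf A}_i=a\mid ({\bf X}_G,{\bf Y}_H,{\bf A}_{S_a},{\bf X}_i)=(\bar x,\bar y,\bar a,x)]\cdot\Pr[{\bf B}_i=b\mid ({\bf X}_G,{\bf Y}_H,{\bf B}_{S_b},{\bf Y}_i)=(\bar x,\bar y,\bar b,y)]$
(the $x$ in the last factor is a typo for $y$). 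This is the standard dependency-breaking lemma, imported verbatim as Lemma~4.9 of Braverman--Garg (the paper gives no proof of its own), and it is exactly what is used later when the paper ``sums over $a$'' to identify the two conditional laws of ${\bf B}_I$. The identity you set out to prove instead, $\Pr[\text{tuple}\mid {\bf A}_i=a,{\bf B}_i=b]=\Pr[\text{Alice tuple}\mid {\bf A}_i=a]\cdot\Pr[\text{Bob tuple}\mid {\bf B}_i=b]$, is false in general, and your own expansion shows why: the left side equals $\bigl(\prod_{j\in G\cap H}p(\bar x_j,\bar y_j)\bigr)p(x,y)\,S_A S_B\big/\Pr[{\bf A}_i=a,{\bf B}_i=b]$, while the right side equals $\bigl(\prod_{j\in G\cap H}p(\bar x_j,\bar y_j)\bigr)^2 p_X(x)p_Y(y)\prod_{j\in G\setminus H}p_X(\bar x_j)\prod_{j\in H\setminus G}p_Y(\bar y_j)\,S_A S_B\big/\bigl(\Pr[{\bf A}_i=a]\Pr[{\bf B}_i=b]\bigr)$; the surplus factors depend on $\bar x,\bar y,x,y$ and cannot be absorbed by the fixed ratio of conditioning probabilities. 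Concretely, take $t=2$, $i=2$, $G=H=\{1\}$, $S_a=S_b=\emptyset$ and constant strategies: your identity would force $p(\bar x_1,\bar y_1)p(x,y)=p(\bar x_1,\bar y_1)^2p_X(x)p_Y(y)$, which already fails for a uniform question distribution on more than one pair. So the ``routine bookkeeping'' you defer at the end is precisely where the argument breaks.

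The good news is that the factorization you set up is the right ingredient for the correctly parsed claim; the fix is to divide by the probability of the conditioning \emph{tuple} rather than by $\Pr[{\bf A}_i=a,{\bf B}_i=b]$. Writing $\tilde S_A,\tilde S_B$ for your sums with the constraints ${\bf A}_i=a$, ${\bf B}_i=b$ dropped (keeping ${\bf A}_{S_a}=\bar a$, ${\bf B}_{S_b}=\bar b$), the same computation gives $\Pr[\text{tuple}]=\bigl(\prod_{j\in G\cap H}p(\bar x_j,\bar y_j)\bigr)p(x,y)\,\tilde S_A\tilde S_B$, so the left side of the claim equals $(S_A/\tilde S_A)(S_B/\tilde S_B)$; and in your Alice-side and Bob-side numerators the extra marginal factors cancel against the corresponding denominators, giving $\Pr[{\bf A}_i=a\mid\text{Alice tuple}]=S_A/\tilde S_A$ and $\Pr[{\bf B}_i=b\mid\text{Bob tuple}]=S_B/\tilde S_B$. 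Equivalently and more transparently: conditioned on the tuple, ${\bf X}_{H\setminus G}$ and ${\bf Y}_{G\setminus H}$ are independent, each with a conditional law unaffected by the other player's data, and ${\bf A}_i$, ${\bf B}_i$ are deterministic functions of these two independent blocks (together with pinned values), which yields the product form with the stated marginals. Redirected this way, your approach is the standard proof of the cited lemma.
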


Next, we state and prove an analog to Lemma 5.6 from~\cite{BravermanGarg}. The Lemma therein applies to the case that
the probability of $W$ is small, and below we make a small adjustment to it (using the fact that the number of repetitions
is assumed to be large in our case) to remove that assumption.
\begin{lemma}\label{lemma:final_BG}
  Suppose $\zeta = \delta\eps^3\gamma^3$ and that $\Prob{}{W}\geq 2^{-\zeta t/K}$.
  There exist fixings ${\bf G} = G$, ${\bf H} = H$, ${\bf S} = S$ and ${\bf I} = I$ such that the following properties hold:
  \begin{enumerate}
    \item
    \[
    \Expect{(x,y)\sim ({\bf X}_I, {\bf Y}_I)|W}{\DKL{{\bf L}_{S,G,H,I}~|~{\bf X}_{I} = x, {\bf Y}_I = y, W}{{\bf L}_{S,G,H,I}~|~{\bf X}_{I} = x, W}}\leq
    O\left(\frac{\zeta}{K}\right).
    \]

    \item
    \[
    \Expect{(x,y)\sim ({\bf X}_I, {\bf Y}_I)|W}{\DKL{{\bf L}_{S,G,H,I}~|~{\bf X}_{I} = x, {\bf Y}_I = y, W}{{\bf L}_{S,G,H,I}~|~{\bf Y}_{I} = y, W}}\leq O\left(\frac{\zeta}{K}\right).
    \]

    \item
    \[
    \DKL{{\bf X}_I, {\bf Y}_I~|~W}{{\bf X}_I, {\bf Y}_I}\leq
    O\left(\frac{\zeta}{K}\right).
    \]

    \item
    \begin{align*}
    \hspace{-7ex}\E_{\substack{(x,y)\sim ({\bf X}_I, {\bf Y}_I)|W \\ L\sim {\bf L}_{S,G,H,I}|{\bf X}_{I} = x, {\bf Y}_I = y, W}}
    \Big[&\DKL{{\bf A}_I| {\bf X}_{I} = x, {\bf Y}_I = y, {\bf L}_{S,G,H,I} = L, W}{
    {\bf A}_I| {\bf X}_{I} = x, {\bf L}_{S,G,H,I} = L, W}\Big]\\
    &\leq
    O\left(\frac{\zeta}{K}\right).
    \end{align*}

    \item
    \begin{align*}
    \hspace{-3ex}\E_{\substack{(x,y)\sim ({\bf X}_I, {\bf Y}_I)|W \\ L\sim {\bf L}_{S,G,H,I}|X_{I} = x, Y_I = y, W}}
    &\left[
    \mathrm{D}_{\text{KL}}
    \begin{pmatrix}
    {\bf A}_I,{\bf B}_I| {\bf X}_{I} = x, {\bf Y}_I = y, {\bf L}_{S,G,H,I} = L, W\\
    \vspace{-2ex}\\
    \overline{\overline{\phantom{{\bf A}_I| {\bf X}_{I} = x, {\bf L}_{S,G,H,I} = L, W\otimes {\bf B}_I| {\bf Y}_{I} = y, {\bf L}_{S,G,H,I} = L, W}}}
    \vspace{-2ex}\\
    {\bf A}_I| {\bf X}_{I} = x, {\bf L}_{S,G,H,I} = L, W\otimes {\bf B}_I| {\bf Y}_{I} = y, {\bf L}_{S,G,H,I} = L, W
    \end{pmatrix}\right]\\
    &~\leq O\left(\frac{\zeta}{K}\right).
    \end{align*}
  \end{enumerate}
\end{lemma}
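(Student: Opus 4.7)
The plan is to observe that each of the five quantities appearing in the lemma is precisely (or is upper bounded by) one of the information-theoretic quantities already estimated in Claims~\ref{claim:MI1}--\ref{claim:MI5}, and then obtain the desired fixings via an averaging argument.

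First I would identify each quantity as an appropriate (conditional) mutual information or divergence. Item~(1) is exactly the conditional mutual information $\MI({\bf L}_{{\bf S},{\bf G},{\bf H},{\bf I}};{\bf Y}_{\bf I}\mid {\bf X}_{\bf I}, W)$ written out via the standard identity $\MI(U;V\mid Z)=\Expect{(v,z)}{\DKL{U\mid V=v,Z=z}{U\mid Z=z}}$; Claim~\ref{claim:MI2} bounds its average over $({\bf S},{\bf G},{\bf H},{\bf I})$ by $\tfrac{8}{t}\log(1/\Prob{}{W})$. Item~(2) is the symmetric quantity handled by Claim~\ref{claim:MI3}. Item~(3) is directly Claim~\ref{claim:MI5}. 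Item~(4) is $\MI({\bf A}_{\bf I};{\bf Y}_{\bf I}\mid {\bf X}_{\bf I},{\bf L}_{{\bf S},{\bf G},{\bf H},{\bf I}},W)$, bounded by Claim~\ref{claim:MI1}. Item~(5) is the only one requiring a little work: I would apply the chain rule for KL, writing the divergence to the product distribution as the sum of the divergence on the ${\bf A}_I$ marginal (controlled by Claim~\ref{claim:MI1}) plus, conditionally on ${\bf A}_I$, the divergence on the ${\bf B}_I$ marginal, then use Claim~\ref{claim:break_dependency} to argue that under the product distribution the appropriate conditional independence holds, so the remaining term is bounded by $\MI({\bf B}_I;1_W\mid {\bf X}_I,{\bf Y}_I,{\bf L}_{{\bf S},{\bf G},{\bf H},{\bf I}},{\bf A}_I)$ up to lower-order terms, which is handled by Claim~\ref{claim:MI4} together with $H(1_W)\leq O(\log(1/\Prob{}{W}))$.

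Next, I would use the hypothesis $\Prob{}{W}\geq 2^{-\zeta t/K}$, which gives $\log(1/\Prob{}{W})\leq \zeta t/K$, so that each of the five expectations above is at most $O(\zeta/K)$. Summing the five nonnegative random quantities (as functions of the fixings ${\bf G},{\bf H},{\bf S},{\bf I}$) yields a single nonnegative random variable with expectation $O(\zeta/K)$. By Markov's inequality there exists a fixing $({\bf G},{\bf H},{\bf S},{\bf I})=(G,H,S,I)$ for which all five terms are simultaneously $O(\zeta/K)$, as required.

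The main (only) real obstacle is item~(5), where one must be careful about which variables the KL divergence is conditioned on and that the product structure inside the $\mathrm{D}_{\text{KL}}$ matches what Claim~\ref{claim:break_dependency} gives; beyond that, every step is a direct application of the previously stated claims plus the choice of $T$ and the hypothesis on $\Prob{}{W}$, so no further quantitative work is needed.
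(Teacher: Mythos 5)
Your overall route coincides with the paper's: items (1)--(4) are identified, via Fact~\ref{fact:mutual_div_KL}, with the mutual informations bounded in Claims~\ref{claim:MI2}, \ref{claim:MI3}, \ref{claim:MI5} and \ref{claim:MI1} respectively, the hypothesis $\Prob{}{W}\geq 2^{-\zeta t/K}$ converts $\frac{1}{t}\log(1/\Prob{}{W})$ into $O(\zeta/K)$, and Markov (on the sum, equivalently Markov plus a union bound) over the choice of $({\bf G},{\bf H},{\bf S},{\bf I})$ produces the simultaneous fixing. For item (5) you also follow the paper: chain rule, first term equals item (4), and Claim~\ref{claim:break_dependency} is used to rewrite the reference distribution of the conditional term so that Claim~\ref{claim:MI4} applies.

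The gap is in your quantitative finish of item (5), which is exactly where the paper deviates from~\cite{BravermanGarg}. The second chain-rule term is an expectation over $(x,y,L,a)$ drawn \emph{conditioned on $W$}, while Claim~\ref{claim:MI4} bounds an expectation in which $1_W$ is itself sampled; passing between the two costs a factor $1/\Prob{}{W}$, which is not a ``lower-order term'' --- under the hypothesis it can be as large as $2^{\zeta t/K}$. The bound one actually obtains is $4H[1_W]/(t\,\Prob{}{W})$. Moreover, your auxiliary inequality $H[1_W]\leq O(\log(1/\Prob{}{W}))$ is false when $\Prob{}{W}$ is close to $1$ (there $\log(1/\Prob{}{W})\to 0$ while $H[1_W]\approx (1-q)\log\frac{1}{1-q}$ decays much more slowly), and $\Prob{}{W}$ close to $1$ is precisely the relevant regime in the contrapositive argument. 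The correct finish is the paper's two-case analysis: if $\Prob{}{W}\leq 1/2$ then $H[1_W]\leq O(\Prob{}{W}\log(1/\Prob{}{W}))$, the $\Prob{}{W}$ in the denominator cancels, and one gets $O(\log(1/\Prob{}{W})/t)\leq O(\zeta/K)$; if $\Prob{}{W}>1/2$ one only has $H[1_W]\leq 1$, giving $O(1/t)$, and here the standing assumption $t\geq T\geq \Omega(K/\zeta)$ is genuinely needed to conclude $O(\zeta/K)$. You mention ``the choice of $T$'' in passing, but this bookkeeping of the $1/\Prob{}{W}$ factor and the case split is the real content of the step and must be made explicit for the proof to close.
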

\begin{proof}
  We calculate the expected value of each one of these over the choice of ${\sf S}, {\bf G}, {\bf H}, {\bf I}$, show that each one of these expectations
  is at most $O\left(\frac{1}{t}\log\left(\frac{1}{\Prob{}{W}}\right)\right)$, and then the result follows from Markov's inequality together with the
  union bound.

  The expectation of the first item is, by Fact~\ref{fact:mutual_div_KL}, equal to the mutual information in Claim~\ref{claim:MI2}, so the bound follows
  from there. The expectation of the second item is, by Fact~\ref{fact:mutual_div_KL}, equal to the mutual information in Claim~\ref{claim:MI3}, so the bound
  follows from there. The expectation of the third item is computed in Claim~\ref{claim:MI5}. The expectation of the fourth item is equal to the mutual information in Claim~\ref{claim:MI1}, and the bound follows from there.

  For the fifth expectation, using the chain rule we can write it as
  \begin{align*}
  &\E_{\substack{{\bf G}, {\bf H}, {\bf S}, {\bf I}\\ (x,y)\sim ({\bf X}_{\bf I}, {\bf Y}_{\bf I})|W \\ L\sim {\bf L}_{{\bf S},{\bf G},{\bf H},{\bf I}}|
  X_{{\bf I}} = x, Y_{\bf I} = y, W}}
  \Big[
  \underbrace{\DKL{{\bf A}_{\bf I}| X_{{\bf I}} = x, Y_{\bf I} = y, {\bf L}_{{\bf S},{\bf G},{\bf H},{\bf I}} = L, W}{
    {\bf A}_{\bf I}| X_{{\bf I}} = x, {\bf L}_{{\bf S},{\bf G},{\bf H},{\bf I}} = L, W}}_{(\rom{1})}\\
  &+\hspace{-3ex}
  \underbrace{\E_{a\sim {\bf A}_{\bf I}\left|\substack{X_{{\bf I}} = x, Y_{\bf I} = y\\ {\bf L}_{{\bf S},{\bf G},{\bf H},{\bf I}} = L, W}\right.}
  \DKL{{\bf B}_{\bf I}| X_{{\bf I}} = x, {\bf Y}_I = y, {\bf L}_{{\bf S},{\bf G},{\bf H},{\bf I}} = L,
  {\bf A}_{\bf I} = a, W}
  {{\bf B}_{\bf I}| {\bf Y}_{\bf I} = y, {\bf L}_{{\bf S},{\bf G},{\bf H},{\bf I}} = L}}_{(\rom{2})}\Big]
  \end{align*}
  The expectation of $(\rom{1})$ is the expectation of the fourth item in the lemma, hence at most
  $O\left(\frac{1}{t}\log\left(\frac{1}{\Prob{}{W}}\right)\right)$. For $(\rom{2})$, first looking at Claim~\ref{claim:break_dependency},
  summing over $a$ and setting $S_a = \{{\bf I}\}$, $S_b = {\bf S}$ we get that the distributions
  ${\bf B}_{\bf I}| {\bf Y}_{\bf I} = y, {\bf L}_{{\bf S},{\bf G},{\bf H},{\bf I}} = L$
  and
  ${\bf B}_{\bf I}| {\bf X}_{\bf I} = x, {\bf Y}_{\bf I} = y, {\bf L}_{{\bf S},{\bf G},{\bf H},{\bf I}} = L, {\bf A}_{{\bf I}} = a$
  are identical, so our goal is to upper bound the expectation of
  \[
  \DKL{{\bf B}_{\bf I}| {\bf X}_{{\bf I}} = x, {\bf Y}_{\bf I} = y, {\bf L}_{{\bf S},{\bf G},{\bf H},{\bf I}} = L, W,
  {\bf A}_{\bf I} = a, W}
  {{\bf B}_{\bf I}| {\bf X}_{\bf I} = x, {\bf Y}_{\bf I} = y, {\bf L}_{{\bf S},{\bf G},{\bf H},{\bf I}} = L, {\bf A}_{{\bf I}} = a}.
  \]
  Letting $w\sim 1_W$ and then sampling the rest of the random variables conditioned on it, we see that the expectation
  of the above expression is at most $\frac{1}{\Prob{}{W}}$ times the expectation of
    \[
  \mathrm{D}_{\text{KL}}
    \begin{pmatrix}
   {\bf B}_{\bf I}| X_{{\bf I}} = x, {\bf Y}_I = y, {\bf L}_{{\bf S},{\bf G},{\bf H},{\bf I}} = L, 1_W=w,
  {\bf A}_{\bf I} = a\\
    \vspace{-2ex}\\
    \overline{\overline{\phantom{{\bf B}_{\bf I}| X_{{\bf I}} = x, Y_I = y, {\bf L}_{{\bf S},{\bf G},{\bf H},{\bf I}} = L, 1_W=w,
  {\bf A}_{\bf I} = a}}}
    \vspace{-2ex}\\
    {{\bf B}_{\bf I}| {\bf X}_{\bf I} = x, {\bf Y}_{\bf I} = y, {\bf L}_{{\bf S},{\bf G},{\bf H},{\bf I}} = L, {\bf A}_{{\bf I}} = a}
    \end{pmatrix}
  \]
  since the probability that $w = 1$ is $\Prob{}{W}$. By Fact~\ref{fact:mutual_div_KL}, the expectation of the last KL-divergence
  is the mutual information in Claim~\ref{claim:MI4}, hence at most $\frac{4}{t} H[1_W]$.

  All in all, we get that the expectation of $(\rom{2})$ is upper bounded by
  \[
  \frac{4 H[1_W]}{t\Prob{}{W}}.
  \]
  Denote $q = \Prob{}{W}$. If $q\leq 1/2$, then we may bound
  $H[1_W]\leq O(q\log(1/q))$, and then we get the bound $O\left(\frac{\log(1/\Prob{}{W})}{t}\right) \leq O\left(\frac{\zeta}{K}\right)$
  on $(\rom{2})$. Otherwise, $q > 1/2$ and we have the bound
  $H[1_W]\leq 1$, and we get the bound $O(1/t)$. By the assumption in the beginning of this section, $t\geq T\geq \Omega(K/\zeta)$,
  so $O(1/t)\leq O\left(\frac{\zeta}{K}\right)$.
\end{proof}

\subsubsection{Departing from~\cite{BravermanGarg}}
In the next part of the proof, we insert an additional ingredient on top of~\cite{BravermanGarg},
and before that we quickly explain how the proof there proceeds using Lemma~\ref{lemma:final_BG}.

First, a protocol is designed so that if the players received challenges from
the distribution ${\bf X}_I, {\bf Y}_I ~ | W$, then the players succeed with probability close to
$1$; given that, the third item in Lemma~\ref{lemma:final_BG} shows that the players succeed with
probability close to $1$ given challenges distributed as ${\bf X}_I, {\bf Y}_I$.

The correctness of the protocol is argued by appealing to Pinsker's inequality on
the third item of Lemma~\ref{lemma:final_BG}, to get that Alice and Bob can jointly
sample from distributions that are $O(\sqrt{\zeta/K})$ close to ${\bf L}_{S,G,H,I}~|~X_{I} = x, Y_I = y, W$,
i.e. with probability $1-O(\sqrt{\zeta/K})$ they get a joint sample from that distribution.
Assuming the joint sampling was successful, the players can sample the answer $A_I, B_I$ conditioned on the information they have so far.
Note that in the distribution ${\bf A}_I,{\bf B}_I| X_{i} = x, Y_I = y, {\bf L}_{S,G,H,I} = L, W$ the players
win the coordinate $I$ with probability $1$ (since $W$ was conditioned on), and by the fifth item in
Lemma~\ref{lemma:final_BG} the KL-divergence between that distribution and the joint distribution
of the answers of the players in the designed protocol is at most $O(\zeta/K)$, and one conclude
that the winning probability of the players in the protocol (conditioned on the joint sampling being successful)
is at least $2^{-O(\zeta/K)}\geq 1-O(\zeta/K)$. Overall, the designed protocol wins with probability
$1-O(\sqrt{\zeta/K})$.

The source of the square loss above is thus due to the application of Pinsker's inequality, and we will circumvent that
by appealing to the small set expansion property. More precisely, we will use the first and second item in Lemma~\ref{lemma:final_BG}
in order to come up with distributions $\tilde{{\bf L}}_{S,G,H,I}~|~X_{I} = x, W$, $\tilde{{\bf L}}_{S,G,H,I}~|~X_{I} = y, W$
that are very close to the original distribution (the probability of each atom changes by at most factor $(1\pm \eps)$, but from
which the players can jointly sample without losing the square root in Pinsker's inequality (and instead loses
some factors depending on the small set expansion parameters of the graph). From that point, the rest of the proof proceeds
in the same way.
\subsection{Qunatizing the Random Variables}
Throughout this section, we have a graph $G = (V,E)$, and we associate a distribution $\mathcal{D}_v$ with each
vertex $v$. We will also need to consider pseudo-distributions, which we define next.
\begin{definition}
  A pseudo-distribution $\mathcal{D}$ over a finite domain $\Omega$ is a map
  $\mathcal{D}\colon \Omega\to[0,\infty)$.
\end{definition}
Sometimes we will want to sample from a pseudo-distribution; by that, we mean that we first normalize
$\mathcal{D}$ so that the sum of its values is $1$, and then sample from it.
One may define Hellinger distance as well as statistical distance for pseudo-distributions as well
(though one has to be careful with using properties of them that only hold for distributions).

\begin{definition}
  A collection of pseudo-distributions $\mathcal{D}_{v\in V}$ over a finite domain $\Omega$ is called $\eps$-quantized if for
  any $w\in \Omega$ and $u,v\in V$ we either have that $\mathcal{D}_v(w) = \mathcal{D}_u(w)$, or
  $\mathcal{D}_v(w)\leq (1-\eps)\mathcal{D}_u(w)$, or $\mathcal{D}_v(w)\geq (1+\eps)\mathcal{D}_u(w)$.
\end{definition}

The following lemma explains the benefit of quantized pseudo-distributions: it allows us to move from Hellinger distance to
statistical distance without square root loss.
\begin{lemma}\label{lem:from_hell_to_sd}
  If $P,Q$ are pseudo-distributions such that the collection $\{P,Q\}$ is $\eps$-quantized.
  Then ${\sf SD}(P,Q) = O\left(\frac{1}{\eps}{\sf Hellinger}(P,Q)^2\right)$.
\end{lemma}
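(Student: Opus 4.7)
The plan is to reduce the claim to a pointwise inequality for each $w \in \Omega$, since both statistical distance and squared Hellinger distance are sums of per-coordinate terms and pseudo-distributions don't interact across coordinates. Specifically, I want to show that for each $w$,
\[
\card{P(w) - Q(w)} \leq O\!\left(\tfrac{1}{\eps}\right)\left(\sqrt{P(w)} - \sqrt{Q(w)}\right)^2,
\]
and then summing over $w$ and dividing by $2$ gives the stated bound.

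The key identity is the factorization $P(w) - Q(w) = (\sqrt{P(w)} - \sqrt{Q(w)})(\sqrt{P(w)} + \sqrt{Q(w)})$. Since $\{P, Q\}$ is $\eps$-quantized, for each $w$ we fall into one of three cases: $P(w) = Q(w)$ (contribution $0$, trivially fine); $P(w) \geq (1+\eps)Q(w)$; or $P(w) \leq (1-\eps)Q(w)$. By symmetry between $P$ and $Q$ the last two cases are identical, so I'll focus on the first. Setting $r = \sqrt{P(w)/Q(w)}$, we have $r \geq \sqrt{1+\eps} \geq 1 + \eps/3$ for small $\eps$, and the pointwise inequality rearranges to
\[
\frac{\sqrt{P(w)} + \sqrt{Q(w)}}{\sqrt{P(w)} - \sqrt{Q(w)}} = \frac{r+1}{r-1} = 1 + \frac{2}{r-1} \leq 1 + \frac{6}{\eps} = O\!\left(\tfrac{1}{\eps}\right),
\]
which gives exactly what we need. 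The symmetric case $Q(w) \geq (1+\eps)P(w)$ is handled identically with $P$ and $Q$ swapped.

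I don't anticipate any real obstacle here; the proof is just a two-line calculation once the pointwise factorization is written down. The only subtle point is that we are working with pseudo-distributions rather than distributions, but this is irrelevant since the inequality is a per-coordinate consequence of the quantization hypothesis and does not use normalization. The standard $\sqrt{\cdot}$ loss in Cauchy-Schwarz derivations of ${\sf SD} \lesssim {\sf Hellinger}$ comes precisely from the factor $\sqrt{P(w)} + \sqrt{Q(w)}$ not being controllable by $\sqrt{P(w)} - \sqrt{Q(w)}$ when $P(w) \approx Q(w)$; the quantization hypothesis is tailor-made to rule out this degenerate regime and lets the factor be absorbed into a $1/\eps$.
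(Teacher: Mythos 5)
Your proof is correct and follows essentially the same route as the paper: reduce to the pointwise inequality $\card{P(w)-Q(w)}\leq O(1/\eps)\left(\sqrt{P(w)}-\sqrt{Q(w)}\right)^2$ and use the quantization hypothesis to bound the ratio $\sqrt{P(w)/Q(w)}$ away from $1$ (the paper writes this as solving $t-1\leq \frac{C}{\eps}(\sqrt t-1)^2$ for $t=P(w)/Q(w)$, which is the same calculation as your $\frac{r+1}{r-1}$ bound). The only nit is the degenerate case $Q(w)=0$, where your $r$ is undefined; there the inequality holds trivially with constant $1$, as the paper notes explicitly.
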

\begin{proof}
  By definitions, the inequality we wish to show is that for large enough absolute constant $C>0$ it holds that
  \[
  \sum\limits_{w\in \Omega}\card{P(w) - Q(w)}\leq \frac{C}{\eps}\sum\limits_{w\in \Omega}\card{\sqrt{P(w)} - \sqrt{Q(w)}}^2.
  \]
  We show that the inequality in fact holds term by term. Without loss of generality, $P(w)\geq Q(w)$.
  If $Q(w) = 0$ it is clear. Otherwise, letting $t = P(w)/Q(w)$, the inequality we wish to show is that
  $t-1\leq \frac{C}{\eps}(\sqrt{t} - 1)^2$. If $t=1$ the inequality is clear, and otherwise it is equivalent to
  $\sqrt{t}+1\leq \frac{C}{\eps}(\sqrt{t}-1)$.
  Solving for $t$, we get that
  \[
  \sqrt{t}\geq \frac{C/\eps+1}{C/\eps-1} = 1+\frac{2}{C/\eps-1}=1+\frac{2\eps}{C - \eps}
  \]
  By the fact that $\{P,Q\}$ is $\eps$-quantized and $t\neq 1$, it follows that $t\geq 1+\eps$ and hence $\sqrt{t}\geq 1+\eps/4$,
  and for large enough $C$ ($C=10$ will do) we get that this is indeed more than $1+\frac{2\eps}{C-\eps}$.
\end{proof}

The next lemma shows that one may modify distributions associated with vertices of small set expanders
by small multiplicative modifications to make them quantized, and only incur a constant multiplicative increase in Hellinger distance.
\begin{lemma}\label{lem:find_pseudo_quantized}
  There exists an absolute constant $C>0$, such that the following holds.
  Suppose that $G = (V,E)$ is a regular graph as well as a $(\eps,\gamma)$ small set expander, and $(\mathcal{D}_v)_{v\in V}$ is
  a collection of distributions over domain $\Omega$ such that
  \[
  \Expect{(u,v)\in E}{{\sf Hellinger}(\mathcal{D}_u,\mathcal{D}_v)^2}\leq \eta.
  \]
  Then there exists a collection of pseudo-distributions $(\widetilde{\mathcal{D}}_v)_{v\in V}$ over $\Omega$
  satisfying the following properties:
  \begin{enumerate}
    \item $\Expect{(u,v)\in E}{{\sf Hellinger}(\tilde{\mathcal{D}}_u,\tilde{\mathcal{D}}_v)^2}\leq C\frac{\eta}{\gamma^3\eps^2}$.
    \item For all $w\in \Omega$ and $v\in V$ it holds that $(1-\eps)\mathcal{D}_v(w)\leq \tilde{\mathcal{D}}_v(w)\leq (1+\eps)\mathcal{D}_v(w)$.
    \item The collection $(\tilde{\mathcal{D}}_v)_{v\in V}$ is $\eps/2$-quantized.
  \end{enumerate}
\end{lemma}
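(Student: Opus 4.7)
The plan is to reduce this lemma to $\card{\Omega}$ independent applications of Lemma~\ref{lem:cheeger_quantized}, by exploiting the fact that the squared Hellinger distance decomposes as a sum of squared $\ell_2$ differences of the square-root densities. Concretely, for each symbol $w\in\Omega$ I would define the nonnegative vector $x^{(w)}\in\R^{V}$ by $x^{(w)}_v = \sqrt{\mathcal{D}_v(w)}$. Then
\[
\Expect{(u,v)\in E}{\Hell{\mathcal{D}_u}{\mathcal{D}_v}^2}
=\sum_{w\in\Omega}\Expect{(u,v)\in E}{(x^{(w)}_u - x^{(w)}_v)^2},
\]
so writing $\eta_w$ for the inner expectation we have $\sum_{w}\eta_w\leq \eta$.

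Next, I would apply Lemma~\ref{lem:cheeger_quantized} separately to each vector $x^{(w)}$ on the very same graph $G$ with the very same parameters $(\eps,\gamma)$, but with quantization parameter $\eps':= \eps/3$. This produces, for each $w$, an $\eps'$-quantized vector $\tilde{x}^{(w)}$ with $(1-\eps')x^{(w)}_v\leq \tilde{x}^{(w)}_v\leq (1+\eps')x^{(w)}_v$, preserving zero entries, and satisfying
\[
\Expect{(u,v)\in E}{(\tilde{x}^{(w)}_u-\tilde{x}^{(w)}_v)^2}
\leq O\left(\frac{\eta_w}{\gamma^{3}\eps^{2}}\right).
\]
I would then set $\tilde{\mathcal{D}}_v(w):=\left(\tilde{x}^{(w)}_v\right)^{2}$, which is automatically a nonnegative pseudo-distribution.

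The three required properties will then follow from simple manipulations. Item 1 follows from summing the per-$w$ estimate, using that $\Hell{\tilde{\mathcal{D}}_u}{\tilde{\mathcal{D}}_v}^{2}=\sum_w (\tilde{x}^{(w)}_u-\tilde{x}^{(w)}_v)^{2}$. Item 2 follows by squaring the multiplicative guarantee: $(1-\eps')^{2}\geq 1-\eps$ and $(1+\eps')^{2}\leq 1+\eps$ for $\eps'=\eps/3$ and $\eps\leq 1$. Item 3 is verified pointwise in $w$: when both $\tilde{x}^{(w)}_u,\tilde{x}^{(w)}_v\neq 0$, the $\eps'$-quantization of the roots gives that their ratio is either $1$ or outside $[1-\eps',1+\eps']$, and squaring shows that the ratio of the $\tilde{\mathcal{D}}$-values is either $1$ or outside $[1-\eps/2,1+\eps/2]$; the zero cases are trivial since zero entries of the roots are preserved.

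The entire argument is essentially a repackaging of Lemma~\ref{lem:cheeger_quantized}, and there is no real obstacle: since the graph does not depend on $w$, the quantization loss is incurred $\card{\Omega}$ times on disjoint ``coordinates'' of the Hellinger distance and does not cost any factor of $\card{\Omega}$ in the final bound. The only mildly delicate point is choosing $\eps'$ so that squaring the resulting multiplicative window both fits inside $[1-\eps,1+\eps]$ and still leaves a $\eps/2$ gap for the quantization; the choice $\eps'=\eps/3$ handles both constraints, which is the one arithmetic check that needs to be performed carefully.
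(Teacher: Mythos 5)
Your proposal is correct and follows essentially the same route as the paper's proof: decompose the squared Hellinger distance coordinate-wise over $w\in\Omega$, apply Lemma~\ref{lem:cheeger_quantized} to each vector of square-root densities, and square the resulting quantized vectors to obtain the pseudo-distributions. Your only deviation is the more careful choice of quantization parameter $\eps/3$ before squaring (the paper applies the lemma with $\eps$ directly and absorbs the resulting $(1\pm\eps)^2$ window into constants), which is a harmless refinement, and your use of the lemma with the smaller parameter is legitimate since an $(\eps,\gamma)$ small-set expander is also an $(\eps/3,\gamma)$ small-set expander.
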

\begin{proof}
  For each $w\in \Omega$, define the contribution of $w$ to the Hellinger distance squared as
  \[
  c_w = \Expect{(u,v)\in E}{\left(\sqrt{\mathcal{D}_u(w)} - \sqrt{\mathcal{D}_v(w)}\right)^2},
  \]
  and define the vector $x_w\in [0,1]^{V}$ by $x_w(v) = \sqrt{\mathcal{D}_v(w)}$. Then
  by Lemma~\ref{lem:cheeger_quantized} we may find a vector $y_w$ that is $\eps$-quantized
  and
  \[
  c'_w \defeq \Expect{(u,v)\in E}{(y_w(u) - y_w(v))^2}\leq C\frac{c_w}{\gamma^3\eps^2}.
  \]

  For each $v$, define the pseudo-distribution $\tilde{\mathcal{D}}_v(w) = y_w(v)^2$. Then as
  $y_w$ is $\eps$-quantized, it is easy to see that this collection of distributions is $\eps/2$ quantized.
  Also, the second item follows immediately from the corresponding properties of $y_w$. Finally,
  \[
  \Expect{(u,v)\in E}{{\sf Hellinger}(\tilde{\mathcal{D}}_u,\tilde{\mathcal{D}}_v)^2}
  =\sum\limits_{w}\Expect{(u,v)\in E}{(y_w(u) - y_w(v))^2}
  \leq
  \sum\limits_{w}C\frac{c_w}{\gamma^3\eps^2}
  =\frac{C}{\gamma^3\eps^2}\sum\limits_{w}c_w
  \leq C\frac{\eta}{\gamma^3\eps^2},
  \]
  as the sum of $c_w$ is exactly $\Expect{(u,v)\in E}{{\sf Hellinger}(\tilde{\mathcal{D}}_u,\tilde{\mathcal{D}}_v)^2}$.
\end{proof}

\subsection{Using the Random Variables to Derive a Strategy}
We now have all of the ingredients to prove Theorem~\ref{thm:main_UG}. We will show that assuming that
$\Prob{}{W}\geq 2^{-\zeta t/K}$ for $\zeta = \delta\eps^3\gamma^3$, we can design a protocol for the players to win the
game with probability $1-O(\delta/K)$, so taking $K$ large enough yields a contradiction.

Apply Lemma~\ref{lemma:final_BG} to find a fixing of $S, H, G, I$. In the protocol
we design, the input of the players will be a pair of challenges distribution $(x,y)\sim ({\bf X}_I, {\bf Y}_I)$.
We will analyze the protocol under a different distribution of challenges, namely under $(x,y)\sim ({\bf X}_I, {\bf Y}_I)~|~W$,
and prove that the players succeed with probability at least $1-O(\delta/K)$. From this, Lemma~\ref{lem:move_high_prob_events}
together with the third item in Lemma~\ref{lemma:final_BG} gives that the players win under the original distribution of challenges
with probability at least $1-O(\delta/K)$, as $\zeta\leq \delta$.

We set up some notations. Given a challenge $x$ to Alice and $y$ to Bob (which are vertices in $\Psi$),
they define the random variable
\[
\mathcal{D}_x = {\bf L}_{S,G,H,I}~|~{\bf X}_I = x, W
\qquad\qquad
\mathcal{D}_y = {\bf L}_{S,G,H,I}~|~{\bf Y}_I = y, W
\]
respectively. We also define $\mathcal{D}_{x,y} = {\bf L}_{S,G,H,I}~|~{\bf X}_I = x, {\bf X}_I = y, W$ for convenience
(though we stress no player has access to it). Then the first and second items in Lemma~\ref{lemma:final_BG},
\[
\Expect{(x,y)\sim ({\bf X}_I, {\bf Y}_I)~|~W}{\DKL{\mathcal{D}_{x,y}}{\mathcal{D}_x} + \DKL{\mathcal{D}_{x,y}}{\mathcal{D}_{y}}}\leq O(\zeta/K),
\]
and applying Lemma~\ref{lem:move_KL_hell} we get that
\[
\Expect{(x,y)\sim ({\bf X}_I, {\bf Y}_I)~|~W}{{\sf Hellinger}(\mathcal{D}_x,\mathcal{D}_y)^2} \leq O(\zeta/K).
\]
Thus, by Lemma~\ref{lem:trivial_move_expect2}
\[
\Expect{(x,y)\sim ({\bf X}_I, {\bf Y}_I)}{{\sf Hellinger}(\mathcal{D}_x,\mathcal{D}_y)^2}\leq O(\zeta/K),
\]
and from Lemma~\ref{lem:find_pseudo_quantized} there are pseudo-distributions $(\widetilde{\mathcal{D}}_x)$ as in
the lemma such that
\[
\Expect{(x,y)\sim ({\bf X}_I, {\bf Y}_I)}{{\sf Hellinger}(\widetilde{\mathcal{D}}_x,\widetilde{\mathcal{D}}_y)^2}\leq
O\left(\frac{\zeta}{K\eps^2\gamma^3}\right) = O\left(\frac{\delta\eps}{K}\right).
\]
We note that ${\sf Hellinger}(\widetilde{\mathcal{D}}_x,\widetilde{\mathcal{D}}_y)^2 \leq 2$ for all $x,y$ (it can be more than
$1$ as these are pseudo-distributions, but by the second item in Lemma~\ref{lem:find_pseudo_quantized} their sum of values is
at most $(1+\eps)$). Thus we may apply Lemma~\ref{lem:trivial_move_expect2} in the other direction to conclude that
\[
\Expect{(x,y)\sim ({\bf X}_I, {\bf Y}_I)~|~W}{{\sf Hellinger}(\widetilde{\mathcal{D}}_x,\widetilde{\mathcal{D}}_y)^2}\leq O\left(\frac{\delta\eps}{K}\right).
\]
Thus, from Lemma~\ref{lem:from_hell_to_sd} we have that
\[
\Expect{(x,y)\sim ({\bf X}_I, {\bf Y}_I)~|~W}{{\sf SD}(\widetilde{\mathcal{D}}_x,\widetilde{\mathcal{D}}_y)}\leq O\left(\frac{\delta}{K}\right)
\]
From Lemma~\ref{lemma:correlated sampling}, it thus follows that the players can jointly sample $d_x\sim \widetilde{\mathcal{D}}_x$,
$d_y\sim \widetilde{\mathcal{D}}_y$ such that
\[
\Prob{\substack{(x,y)\sim ({\bf X}_I, {\bf Y}_I)~|~W \\ d_x,d_y}}{d_x\neq d_y} = O(\delta/K).
\]

We can now state the protocol for the players.

\noindent{{\bf Input}:} Alice is given $x$, Bob is given $y$ such that $(x,y)\sim ({\bf X}_I, {\bf Y}_I)~|~W$.

\noindent{{\bf Protocol:}}
\begin{enumerate}
  \item Alice and Bob use correlated sampling to jointly sample $d_x\sim \widetilde{\mathcal{D}}_x$, $d_y\sim \widetilde{\mathcal{D}}_y$.
  \item Alice samples $a\sim {\bf A}_I| {\bf X}_{I} = x, \mathcal{D}_x = d_x, W$ and sends it to the referee.
  \item Bob samples $b\sim {\bf B}_I| {\bf Y}_{I} = x, \mathcal{D}_y = d_y, W$ and sends it to the referee.
\end{enumerate}

The following claim finishes the proof.
\begin{claim}
  The above protocol is a strategy for the players that succeeds with probability at least $1-O(\delta/K)$.
\end{claim}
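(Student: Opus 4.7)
The plan is to decompose the protocol's failure probability into two contributions and bound each by $O(\delta/K)$: first, the correlated sampling step fails to produce a common label, i.e.\ $d_x \neq d_y$; second, the sampling succeeds but the answers $(a,b)$ do not satisfy the constraint at $(x,y)$. The first contribution was already handled: under $(x,y) \sim ({\bf X}_I, {\bf Y}_I) \mid W$ we have $\Pr[d_x \neq d_y] = O(\delta/K)$ by the correlated sampling guarantee established just before the claim, which in turn rested on the expected statistical distance bound $\mathbb{E}[{\sf SD}(\widetilde{\mathcal{D}}_x,\widetilde{\mathcal{D}}_y)] = O(\delta/K)$.

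For the second contribution, I would condition on $d_x = d_y = L$ and compare the distribution of $(a,b)$ produced by the protocol to the ``genuine'' conditional ${\bf A}_I,{\bf B}_I \mid {\bf X}_I = x, {\bf Y}_I = y, {\bf L}_{S,G,H,I} = L, W$, which satisfies the edge constraint with probability exactly $1$ since it lies inside $W$. The protocol instead draws $(a,b)$ from the product
\[
({\bf A}_I \mid {\bf X}_I = x, {\bf L}_{S,G,H,I} = L, W)\otimes ({\bf B}_I \mid {\bf Y}_I = y, {\bf L}_{S,G,H,I} = L, W),
\]
and item~5 of Lemma~\ref{lemma:final_BG} bounds the expected KL divergence from the genuine conditional to this product by $O(\zeta/K)$. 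Using the data-processing inequality $\Pr_Q[E] \geq 2^{-\DKL{P}{Q}}$ (valid whenever $\Pr_P[E] = 1$) applied to $E = \{\text{players win on coordinate } I\}$, the pointwise winning probability of the protocol is at least $1 - O(\DKL{\text{genuine}}{\text{product}})$. Averaging over the relevant distribution on $(x,y,L)$ and invoking item~5 yields an $O(\zeta/K) = O(\delta \eps^3 \gamma^3/K) \leq O(\delta/K)$ bound, as required.

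The main obstacle, and the delicate bookkeeping point, is that in the protocol the label $L$ is not drawn from the genuine conditional ${\bf L}_{S,G,H,I} \mid {\bf X}_I = x, {\bf Y}_I = y, W$, but from the marginal of correlated sampling on the pseudo-distributions $\widetilde{\mathcal{D}}_x,\widetilde{\mathcal{D}}_y$ conditioned on the coupling succeeding. I would handle this by observing that, by the multiplicative $(1 \pm \eps)$-closeness from Lemma~\ref{lem:find_pseudo_quantized} together with items~1 and 2 of Lemma~\ref{lemma:final_BG}, the protocol's distribution on $L$ (given $x,y$ and a successful coupling) is within a bounded multiplicative factor of $({\bf L}_{S,G,H,I} \mid {\bf X}_I = x, {\bf Y}_I = y, W)$ at every atom, so the averaged pointwise KL loss grows at most by a constant factor and is absorbed in the $O(\cdot)$. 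The critical point is that this comparison is done multiplicatively rather than via a Pinsker-type step, which would re-introduce the square-root loss that the quantization machinery was designed precisely to avoid.
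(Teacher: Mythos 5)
Your decomposition (coupling failure plus answer failure) and your treatment of the second contribution via item~5 of Lemma~\ref{lemma:final_BG} and the bound $\Pr_Q[E]\geq 2^{-\DKL{P}{Q}}$ when $\Pr_P[E]=1$ are in substance the same as the paper's argument. The genuine gap is in your last paragraph, which is where the real weight of the proof lies. You claim that the law of the common label $L$ produced by the protocol, given $x,y$ and a successful coupling, is within a bounded multiplicative factor \emph{atom by atom} of the genuine conditional ${\bf L}_{S,G,H,I}\mid {\bf X}_I=x,{\bf Y}_I=y,W$, citing the $(1\pm\eps)$ closeness from Lemma~\ref{lem:find_pseudo_quantized} together with items~1 and~2 of Lemma~\ref{lemma:final_BG}. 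This does not follow: items~1 and~2 bound an \emph{expected KL divergence} between $\mathcal{D}_{x,y}$ and $\mathcal{D}_x$ (resp.\ $\mathcal{D}_y$), and small KL gives no pointwise control on ratios of atom probabilities (an atom of $\mathcal{D}_{x,y}$ can exceed the corresponding atom of $\mathcal{D}_x$ by a huge factor while the divergence stays tiny). The $(1\pm\eps)$ guarantee of Lemma~\ref{lem:find_pseudo_quantized} relates $\widetilde{\mathcal{D}}_x$ only to $\mathcal{D}_x$, never to $\mathcal{D}_{x,y}$, so it cannot bridge this. Conditioning on the coupling succeeding introduces a further (smaller) distortion of individual atoms that your sketch also does not control.

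The repair is to split the comparison into two steps of different natures, which is exactly what the paper does. First replace the real protocol by the one in which both players use Alice's sample $d_x\sim\widetilde{\mathcal{D}}_x$; this costs only $\Prob{}{d_x\neq d_y}=O(\delta/K)$ and removes any conditioning on coupling success. Writing $Z(d)$ for the failure probability of this protocol given the label $d$, the pointwise bound $\widetilde{\mathcal{D}}_x(d)\leq(1+\eps)\mathcal{D}_x(d)$ gives $\Expect{d\sim\widetilde{\mathcal{D}}_x}{Z(d)}\leq(1+\eps)\Expect{d\sim\mathcal{D}_x}{Z(d)}$, so the multiplicative comparison is used only where it is actually available. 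Then pass from $\mathcal{D}_x$ to $\mathcal{D}_{x,y}$ \emph{in expectation}, not pointwise, using item~1 of Lemma~\ref{lemma:final_BG} together with the bounded-function transfer lemmas (Lemmas~\ref{lem:move_high_prob_events} and~\ref{lem:trivial_move_expect2}); these are linear in the KL divergence, so no Pinsker-type square root appears and no quantization is needed at this point (the quantization machinery is needed only for the correlated-sampling step). With $d\sim\mathcal{D}_{x,y}$ the answers are distributed exactly as the product distribution appearing in item~5, and your argument for that step then goes through, giving the $O(\zeta/K)\leq O(\delta/K)$ bound.
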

\begin{proof}
  Consider the alternative protocol, where instead of the first step, Alice samples $d_x$ and then comminutes it to Bob.
  Clearly, as the probability that $d_x\neq d_y$ is at most $O(\zeta/K)$, this change increases the success probability
  of the players by at most $O(\zeta/K)$, so it suffices to show that this modified protocol has success probability at
  least $1-O(\delta/K)$.

  We first analyze the protocol where we modify the distribution of $d_x$, and then argue the implication to the protocol
  with the correct distribution over $d_x$. Suppose Alice sampled $d_x\sim \mathcal{D}_x$; we show that in this case,
  the success probability is at least $1-O(\zeta/K)$. To show that, by the first item of Lemma~\ref{lemma:final_BG} and
  Lemma~\ref{lem:move_high_prob_events}, it suffices to sow that the success probability of the protocol is at least
  $1-O(\zeta/K)$ if we sample $d_x\sim \mathcal{D}_{x,y}$. In that case, the answers of the players are distributed
  according to $({\bf A}_I| {\bf X}_{I} = x, \mathcal{D}_{x,y} = d_x, W)\otimes ({\bf B}_I| {\bf Y}_{I} = x, \mathcal{D}_y = d_y, W)$.
  Thus, by the fifth item in Lemma~\ref{lemma:final_BG} and Lemma~\ref{lem:move_high_prob_events} it suffices to
  show that the success probability of the players with $(a,b)\sim ({\bf A}_I,{\bf B}_I| {\bf X}_{I} = x, {\bf Y}_I = y, \mathcal{D}_{x,y} = d_x, W)$
  is at least $1-O(\zeta/K)$, which is clear since we have conditioned on the event $W$ so all coordinates are won.

  Thus, letting $Z(d)$ be a random variable denoting the failure probability of the modified protocol with $d_x = d$,
  we have that $\Expect{x,y}{\Expect{, d\sim \mathcal{D}_x}{Z(d)}}\leq O(\zeta/K)$, hence
  \[
  \Expect{x,y}{
  \Expect{d\sim \widetilde{\mathcal{D}}_x}{Z(d)}}
  =\Expect{x,y}{
  \sum_d\Prob{}{\widetilde{\mathcal{D}}_x=d} Z(d)}
  \leq
  \Expect{x,y}{(1+\eps)
  \sum_d\Prob{}{\mathcal{D}_x=d} Z(d)}
  =O\left(\frac{\zeta}{K}\right).
  \]
  Therefore the success probability of the protocol when $d_x\sim \widetilde{\mathcal{D}}_x$ is at least
  $1-O\left(\frac{\zeta}{K}\right)$, finishing the proof.
\end{proof}

\bibliographystyle{plain}
\bibliography{ref}

\appendix
\section{Information Theory}
In this section, we present a few basic notions from information theory. First, we define the notions of Shannon entropies and conditional Shannon entropies.
\begin{definition}
  Let $X,Y$ be random variables with a finite support.
  \begin{enumerate}
    \item The Shannon entropy of $X$
  is $\HH[X] = \sum\limits_{x}{\Prob{}{X = x}\log\left(\frac{1}{\Prob{}{X = x}}\right)}$.
    \item The Shannon entropy of $X$ conditioned on $Y$ is
    $\HH[X~|~Y] = \Expect{y\sim Y}{\HH[X~|~Y = y]}$, where
    $\HH[X~|~Y = y] = \sum\limits_{x}{\cProb{}{Y=y}{X = x}\log\left(\frac{1}{\cProb{}{Y=y}{X = x}}\right)}$.
  \end{enumerate}
\end{definition}

Second, we define mutual information between random variables as well as conditional mutual information.
\begin{definition}
Let $X,Y,Z$ be random variables with a finite support.
  \begin{enumerate}
    \item The mutual information between $X$ and $Y$ is
    $\MI[X;Y] = \HH[X] - \HH[X|Y]$.
    \item The mutual information between $X,Y$ conditioned on $Z$ is
    $\MI[X;Y~|~Z] = \HH[X~|~Z] - \HH[X~|~Y,Z]$.
  \end{enumerate}
\end{definition}

Third, we define the KL-divergence between random variables.
\begin{definition}
Let $X,Y$ be random variables with a finite support. The KL-divergence from $Y$ to $X$ is
    $\DKL{X}{Y} = \sum\limits_{x,y} \Prob{}{X = x}\log\left(\frac{\Prob{}{X = x}}{\Prob{}{Y = y}}\right)$.
\end{definition}

We will need the following standard facts from information theory (for proofs, see~\cite{elements} for example).
\begin{fact}\label{fact:mutual_div_KL}
  Let $X,Y,Z$ be random variables. Then
  \[
  \MI[ X,Y ; Z] = \Expect{(x,y)\sim (X,Y)}{\DKL{Z|_{X=x,Y=y}}{Z}}.
  \]
\end{fact}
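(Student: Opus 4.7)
The plan is to derive the identity by direct manipulation of the definitions, since the statement is a standard textbook identity whose proof reduces to expanding both sides and applying the law of total probability. I would begin from the definition $\MI[X,Y;Z] = \HH[Z] - \HH[Z \mid X,Y]$ and rewrite each of the two entropy terms in a form that makes the common outer expectation over $(x,y) \sim (X,Y)$ visible.

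First, I would rewrite the conditional entropy: by definition,
\[
\HH[Z \mid X,Y] = \Expect{(x,y)\sim (X,Y)}{\sum_{z} \cProb{}{X=x,Y=y}{Z=z}\log\frac{1}{\cProb{}{X=x,Y=y}{Z=z}}}.
\]
For the unconditional entropy $\HH[Z]$, the key trick is to express the marginal $\Prob{}{Z=z}$ using the law of total probability, $\Prob{}{Z=z} = \sum_{x,y}\Prob{}{X=x,Y=y}\cProb{}{X=x,Y=y}{Z=z}$, and substitute this inside one factor of the $\HH[Z]$ formula while leaving the $\log \Prob{}{Z=z}$ factor intact. This yields
\[
\HH[Z] = \Expect{(x,y)\sim(X,Y)}{\sum_{z}\cProb{}{X=x,Y=y}{Z=z}\log\frac{1}{\Prob{}{Z=z}}},
\]
which now has exactly the same outer expectation structure as the expression for $\HH[Z \mid X,Y]$.

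Subtracting the two expressions inside the expectation, the logarithms combine into $\log\bigl(\cProb{}{X=x,Y=y}{Z=z}/\Prob{}{Z=z}\bigr)$, and the inner sum over $z$ is exactly $\DKL{Z\mid_{X=x,Y=y}}{Z}$ by the definition of KL-divergence given earlier in this section. Thus $\MI[X,Y;Z] = \Expect{(x,y)\sim(X,Y)}{\DKL{Z\mid_{X=x,Y=y}}{Z}}$, as required. There is no real obstacle here: the only subtlety is the step of rewriting $\Prob{}{Z=z}$ as a conditional expectation in order to align the two terms under a common outer average, and one minor bookkeeping point of ignoring atoms where $\Prob{}{X=x,Y=y}=0$ (which contribute nothing to either side). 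The argument uses only the definitions stated in the appendix and is essentially a one-line computation once both entropies are written with the outer $\Expect{(x,y)}{\cdot}$ pulled out.
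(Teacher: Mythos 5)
Your proof is correct, and it is exactly the standard textbook argument; the paper itself gives no proof of this fact, simply citing~\cite{elements}, so there is nothing to diverge from. One small remark: with the paper's literal definition $\MI[A;B]=\HH[A]-\HH[A\mid B]$, the quantity $\MI[X,Y;Z]$ reads $\HH[X,Y]-\HH[X,Y\mid Z]$, so your starting point $\HH[Z]-\HH[Z\mid X,Y]$ implicitly invokes the symmetry of mutual information; this is itself immediate (both sides equal $\sum_{x,y,z}p(x,y,z)\log\frac{p(x,y,z)}{p(x,y)p(z)}$), but it deserves a sentence, and with that noted the rest of your computation — aligning $\HH[Z]$ and $\HH[Z\mid X,Y]$ under a common outer expectation over $(x,y)$ via total probability and recognizing the inner sum as $\DKL{Z|_{X=x,Y=y}}{Z}$ — is complete and handles the zero-probability atoms appropriately.
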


\begin{fact}\label{fact:MI_chainrule}
  Let $X,Y_1,\ldots,Y_n$ be random variables. Then
  \[
  \MI[ X; Y_1,\ldots,Y_n] = \sum\limits_{i=1}^{n} \MI[X; Y_i~|~Y_{<i}].
  \]
\end{fact}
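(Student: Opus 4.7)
The plan is to reduce the chain rule for mutual information to the chain rule for (conditional) Shannon entropy, which can be verified directly from the definitions. First, I would observe that mutual information admits the symmetric representation $\MI[X;Y] = \HH[Y] - \HH[Y \mid X]$, which can be established by writing both sides out and noting that $\HH[X] + \HH[Y \mid X] = \HH[X,Y] = \HH[Y] + \HH[X \mid Y]$; the analogous identity conditioned on a third random variable $Z$ gives $\MI[X;Y \mid Z] = \HH[Y \mid Z] - \HH[Y \mid X,Z]$. Using this symmetric form is convenient because the $Y$-side telescopes in the index $i$, while the $X$-side does not.

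Next, I would apply this to rewrite the left-hand side as
\[
\MI[X;Y_1,\ldots,Y_n] = \HH[Y_1,\ldots,Y_n] - \HH[Y_1,\ldots,Y_n \mid X].
\]
Then I would apply the chain rule for entropy, $\HH[Y_1,\ldots,Y_n] = \sum_{i=1}^n \HH[Y_i \mid Y_{<i}]$, and its version conditioned on $X$, namely $\HH[Y_1,\ldots,Y_n \mid X] = \sum_{i=1}^n \HH[Y_i \mid Y_{<i}, X]$. Subtracting term by term yields
\[
\MI[X;Y_1,\ldots,Y_n] = \sum_{i=1}^n \bigl(\HH[Y_i \mid Y_{<i}] - \HH[Y_i \mid Y_{<i}, X]\bigr),
\]
and by the symmetric form recalled above applied with $Z = Y_{<i}$, each summand equals $\MI[X;Y_i \mid Y_{<i}]$, which is exactly the right-hand side.

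The only step that needs to be actually checked from the definitions is the chain rule for entropy, which in turn follows from $\Pr[Y_1 = y_1,\ldots,Y_n = y_n] = \prod_i \Pr[Y_i = y_i \mid Y_{<i} = y_{<i}]$ by taking $\log$ and expectations; the conditional version is the same identity applied pointwise to each value of $X$ and then averaged. There is no real obstacle here, since everything is a short manipulation of Shannon entropies; the one place to be slightly careful is to use the $Y$-side form of $\MI$, so that the telescoping into $Y_{<i}$-conditioned terms is clean.
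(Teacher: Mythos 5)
Your proof is correct and complete: reducing the statement to the entropy chain rule via the symmetric form $\MI[X;Y]=\HH[Y]-\HH[Y\mid X]$ (and its conditional analogue, obtained by applying the symmetry pointwise on $Y_{<i}$ and averaging) is the standard argument, and your care in switching to the $Y$-side form so the terms telescope is exactly the right point to flag, given that the paper defines $\MI$ asymmetrically as $\HH[X]-\HH[X\mid Y]$. The paper itself gives no proof of this fact, deferring to a textbook reference, and your argument is precisely the standard one it implicitly relies on.
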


\begin{fact}\label{fact:MI_cond}
  Let $X,Y,Z$ be random variables. Then
  $\MI[ X; Y~|~Z]\leq \MI[X; Y,Z]$.
\end{fact}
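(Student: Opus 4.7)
The plan is to reduce this to two standard ingredients: the chain rule for mutual information (already stated as Fact~\ref{fact:MI_chainrule}) and the non-negativity of mutual information. First I would instantiate the chain rule with $n=2$, $Y_1 = Z$ and $Y_2 = Y$, which gives
\[
\MI[X; Y, Z] = \MI[X; Z] + \MI[X; Y \mid Z],
\]
after observing that the term $\MI[X; Y_1 \mid Y_{<1}]$ is by convention $\MI[X; Z]$ (conditioning on the empty tuple). Rearranging,
\[
\MI[X; Y, Z] - \MI[X; Y \mid Z] = \MI[X; Z],
\]
so the claim $\MI[X; Y \mid Z] \leq \MI[X; Y, Z]$ is equivalent to the non-negativity $\MI[X; Z] \geq 0$.

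For the latter, I would unfold $\MI[X; Z] = \HH[X] - \HH[X \mid Z]$ from the paper's own definition and verify that conditioning does not increase Shannon entropy. The cleanest route is to rewrite $\HH[X] - \HH[X \mid Z]$ as the KL-divergence $\Expect{z \sim Z}{\DKL{X\mid Z=z}{X}}$ by the calculation in Fact~\ref{fact:mutual_div_KL} (taking the tuple $(X,Y)$ there to be just $Z$), and then invoke non-negativity of KL-divergence, which itself follows from Jensen's inequality applied to $-\log$ on the ratio of the two pmfs.

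The only step that requires a small sanity check is the specialization of Fact~\ref{fact:MI_chainrule} to $n=2$ with the first ``coordinate'' being $Z$; this is purely a notational matter and not an obstacle. No part of the argument is hard: the entire proof is three lines once the chain rule and non-negativity of KL are in hand, and both are standard facts that the paper either states or can quote from~\cite{elements}.
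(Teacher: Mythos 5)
Your proof is correct. The paper does not actually prove Fact~\ref{fact:MI_cond} -- it is listed among the standard facts deferred to~\cite{elements} -- and your argument (chain rule with $Y_1=Z$, $Y_2=Y$, reducing the claim to $\MI[X;Z]\geq 0$, which follows from non-negativity of KL-divergence via Jensen) is precisely the standard textbook derivation, with the only implicit step being the harmless observation that $\MI[X;Y,Z]=\MI[X;Z,Y]$ since mutual information depends only on the joint distribution.
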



\section{Probability Theory}
We will use several standard distance measures between probability measures as well as relations between
them that we give in this section.

\begin{definition}
  Let $X, Y$ be random variables supported over a finite set $\Omega$. The statistical distance between
  $X$ and $Y$ is defined as
  \[
  {\sf SD}(X, Y) = \frac{1}{2}\sum\limits_{w\in \Omega}\card{\Prob{}{X = w} - \Prob{}{Y = w}}
  \]
\end{definition}

\begin{definition}
  Let $X, Y$ be random variables supported over a finite set $\Omega$. The Hellinger distance
  between $X$ and $Y$ is defined as
  \[
  {\sf Hellinger}(X, Y) = \sqrt{\frac{1}{2}\sum\limits_{w\in \Omega}\card{\sqrt{\Prob{}{X = w}} - \sqrt{\Prob{}{Y = w}}}^2}
  \]
\end{definition}

\subsection{A Relation Between KL-divergence and Hellinger Distance}
The KL divergence metric does not obey a triangle inequality, and the following lemma replaces it in our inteded
application. It asserts that if the KL divergence between $X$ and $Z$ is small, and the KL divergence between
$Y$ and $Z$ is small, then the Hellinger distance between $X$ and $Y$ is small.
\begin{lemma}\label{lem:move_KL_hell}
  Let $X, Y$ and $Z$ be random variables supported on a finite set $\Omega$. Then
  \[
   2{\sf Hellinger}(X, Y)^2\leq \DKL{Z}{X} + \DKL{Z}{Y}.
  \]
\end{lemma}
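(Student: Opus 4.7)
\textbf{Proof plan for Lemma~\ref{lem:move_KL_hell}.} The plan is to combine the two KL divergences into a single expression involving the geometric mean of the two densities, and then estimate this expression from below using Jensen's inequality. Write $p_X, p_Y, p_Z$ for the probability mass functions of $X, Y, Z$ on $\Omega$. The first step is the algebraic identity
\[
\DKL{Z}{X} + \DKL{Z}{Y} = 2\sum_{w\in\Omega} p_Z(w)\log\frac{p_Z(w)}{\sqrt{p_X(w)p_Y(w)}},
\]
obtained simply by combining the two logarithms. (If $p_X(w)p_Y(w) = 0$ while $p_Z(w) > 0$ then one of the two divergences is infinite and the inequality is trivial, so we may restrict attention to those $w$ for which both $p_X(w)$ and $p_Y(w)$ are positive.)

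Next, I would apply Jensen's inequality to the convex function $-\log$ with the probability distribution $p_Z$ and the values $\sqrt{p_X(w)p_Y(w)}/p_Z(w)$. This yields
\[
\sum_{w} p_Z(w)\log\frac{p_Z(w)}{\sqrt{p_X(w)p_Y(w)}}
\;\geq\; -\log \sum_{w} p_Z(w)\cdot \frac{\sqrt{p_X(w)p_Y(w)}}{p_Z(w)}
\;=\; -\log \mathrm{BC}(X,Y),
\]
where $\mathrm{BC}(X,Y) = \sum_{w}\sqrt{p_X(w)p_Y(w)}$ is the Bhattacharyya coefficient. The key point is that the right-hand side no longer depends on $Z$, which is how the triangle-type structure of the statement emerges.

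The third step is to relate $\mathrm{BC}(X,Y)$ to the Hellinger distance. Expanding the definition gives
\[
2\,{\sf Hellinger}(X,Y)^2 = \sum_w \bigl(\sqrt{p_X(w)} - \sqrt{p_Y(w)}\bigr)^2 = 2 - 2\,\mathrm{BC}(X,Y),
\]
so that $\mathrm{BC}(X,Y) = 1 - {\sf Hellinger}(X,Y)^2$. Combining this with the elementary inequality $-\log(1-t) \geq t$ for $t\in[0,1)$ applied at $t = {\sf Hellinger}(X,Y)^2$ gives
\[
\DKL{Z}{X} + \DKL{Z}{Y} \;\geq\; -2\log\bigl(1 - {\sf Hellinger}(X,Y)^2\bigr) \;\geq\; 2\,{\sf Hellinger}(X,Y)^2,
\]
which is exactly the desired bound. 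There is no real obstacle; the only thing to be slightly careful about is the corner case where some $p_X(w)$ or $p_Y(w)$ vanish while $p_Z(w)>0$, which as noted renders the inequality trivial and justifies restricting the sums to the common support in the Jensen step.
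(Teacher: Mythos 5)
Your proof is correct, and its skeleton matches the paper's: both arguments hinge on the geometric-mean density $\sqrt{p_X(w)p_Y(w)}$, the identity $\sum_w\sqrt{p_X(w)p_Y(w)}=1-{\sf Hellinger}(X,Y)^2$, and the elementary bound $-\log(1-t)\geq t$. The difference is in the middle step: the paper treats $\DKL{Z}{X}+\DKL{Z}{Y}$ as a functional of $Z$ and minimizes it via Lagrange multipliers, finding that the optimizer is the normalized geometric mean $p_Z(w)=c\sqrt{p_X(w)p_Y(w)}$ with $c=\bigl(1-{\sf Hellinger}(X,Y)^2\bigr)^{-1}$ and that the minimum value is $-2\log\bigl(1-{\sf Hellinger}(X,Y)^2\bigr)$; you instead bypass the optimization entirely by combining the two logarithms and applying Jensen's inequality to $-\log$, which gives the same lower bound $-2\log \mathrm{BC}(X,Y)$ directly. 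Your route buys a fully rigorous and shorter argument: the paper's variational step identifies a critical point but does not verify it is a global minimum or address support/boundary issues, whereas Jensen needs no such care (and you correctly note that vanishing $p_X(w)$ or $p_Y(w)$ on the support of $Z$ makes the statement trivial). One microscopic correction: after Jensen, the sum $\sum_w p_Z(w)\cdot\sqrt{p_X(w)p_Y(w)}/p_Z(w)$ runs only over the support of $Z$, so it is at most $\mathrm{BC}(X,Y)$ rather than equal to it; since $-\log$ is decreasing this only strengthens the bound, so write ``$\geq -\log\mathrm{BC}(X,Y)$'' there rather than an equality.
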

\begin{proof}
  Let us think of $X$ and $Y$ as fixed random variables, and attempt to minimize
  $\DKL{Z}{X} + \DKL{Z}{Y}$ over all random variables $Z$. In other words, we wish find
  non-negative numbers $(p_Z(w))_{w\in\Omega}$ summing up to $1$ minimizing the form
  \[
  \sum\limits_{w\in \Omega}{p_Z(w)\log\left(\frac{p_Z(w)}{p_X(w)}\right) + p_Z(w)\log\left(\frac{p_Z(w)}{p_Y(w)}\right)}
  =\sum\limits_{w\in \Omega}{p_Z(w)\log\left(\frac{p_Z(w)^2}{p_X(w)p_Y(w)}\right)}
  \]
  Using Lagrange multipliers, we get that there exists $\lambda\in\mathbb{R}$ such that the optimum satisfies the equations
  \[
  \log\left(\frac{p_Z(w)^2}{p_X(w)p_Y(w)}\right)-\frac{1}{\ln 2} - \lambda = 0\qquad\qquad\forall w\in\Omega.
  \]
  In other words, $p_Z(W) = c\sqrt{p_X(w)p_Y(w)}$ for some $c>0$, and we next compute the constant $c$. By the constraint that the
  sum is $1$, we get
  \[
  1
  = \frac{c}{2}\sum\limits_{w\in \Omega}2\sqrt{p_X(w)p_Y(w)}
  = \frac{c}{2}(2-\sum\limits_{w\in \Omega} (\sqrt{p_X(w)}-\sqrt{p_Y(w)})^2)
  =c(1 - {\sf Hellinger}(X,Y)^2),
  \]
  so $c = \frac{1}{1 - {\sf Hellinger}(X,Y)^2}$. Plugging that into $p_Z(w)$, we get that the minimum of $\DKL{Z}{X} + \DKL{Z}{Y}$
  is at least
  \[
  \sum\limits_{w\in \Omega}{p_Z(w)\log\left(\frac{p_Z(w)^2}{p_X(w)p_Y(w)}\right)}
  =\sum\limits_{w\in \Omega}{c\sqrt{p_X(w)p_Y(w)}\log\left(\frac{c^2 p_X(w)p_Y(w)}{p_X(w)p_Y(w)}\right)}.
  \]
  Simplifying, we get that this is equal to
  \[
  c\log c^2 \sum\limits_{w\in \Omega}{\sqrt{p_X(w)p_Y(w)}}
  =2\log c
  =-2\log\left(1 - {\sf Hellinger}(X,Y)^2\right)
  \geq 2{\sf Hellinger}(X,Y)^2,
  \]
  where the last inequality uses $\log(s)\leq s-1$ for all $s>0$.
\end{proof}

\subsection{Small KL-divergence and high probability events}
The next few lemmas are concerned with a pair of distributions that have a small KL-divergence.
They show that an event has probability close to $1$ with respect to one distribution if and only
if it has probability close to $1$ with respect to the other distributions; we then generalize
this facts to bounded functions.
\begin{lemma}\label{lem:move_high_prob_events}
  Suppose $P$ and $Q$ are distributions such that $\DKL{P}{Q}\leq \eta\leq 1/100$, and suppose
  that $E$ is an event.
  \begin{enumerate}
    \item If $P[E]\geq 1-\eta$, then $Q(E)\geq 1-10\eta$.
    \item If $Q(E)\geq 1-\eta$, then $P(E)\geq 1-10\eta$.
  \end{enumerate}
\end{lemma}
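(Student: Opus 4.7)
The plan is to reduce the lemma to a statement about two-point (Bernoulli) distributions via the data processing inequality. Applying the map $\omega \mapsto 1_E(\omega)$ to both $P$ and $Q$ collapses them to Bernoulli distributions with parameters $p = P(E)$ and $q = Q(E)$, and data processing gives $\DKL{\mathrm{Ber}(p)}{\mathrm{Ber}(q)} \leq \DKL{P}{Q} \leq \eta$. Both parts of the lemma thereby translate into numerical claims about the Bernoulli divergence: part~1 becomes ``$p \geq 1-\eta$ implies $q \geq 1-10\eta$'', and part~2 becomes ``$q \geq 1-\eta$ implies $p \geq 1-10\eta$''.

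For each implication I would argue by contrapositive and monotonicity. Writing
\[
\DKL{\mathrm{Ber}(p)}{\mathrm{Ber}(q)} = p\log\tfrac{p}{q} + (1-p)\log\tfrac{1-p}{1-q},
\]
the function is strictly convex in each variable with minimum $0$ attained at $p = q$. For part~1, if $p \geq 1 - \eta$ but $q < 1-10\eta$, then $q < p$, the KL is decreasing in $q$ on $(0,p)$, and increasing in $p$ on $(q,1)$, so the infimum of the KL over this region is attained at the corner $p = 1-\eta,\ q = 1-10\eta$; it therefore suffices to verify
\[
\DKL{\mathrm{Ber}(1-\eta)}{\mathrm{Ber}(1-10\eta)} > \eta \qquad \text{for all } \eta \leq 1/100.
\]
Using $\log(1-x) \leq -x$ on $(0,1)$ together with $\log(1-x) \geq -x - x^2$ on $(0, 1/2)$, one gets $\log\tfrac{1-\eta}{1-10\eta} \geq 9\eta - O(\eta^2)$, so the first summand contributes at least $9\eta - O(\eta^2)$, while the negative second summand $\eta \log(1/10)$ can subtract at most about $(\log 10)\,\eta$, leaving a net lower bound comfortably exceeding $\eta$. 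Part~2 is entirely symmetric: the same convexity/monotonicity argument reduces it to $\DKL{\mathrm{Ber}(1-10\eta)}{\mathrm{Ber}(1-\eta)} > \eta$, and here the positive summand $10\eta \log 10$ easily overcomes the negative first summand, which by the same Taylor bounds is at least $-9\eta - O(\eta^2)$.

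The only real subtlety is sign bookkeeping inside the Bernoulli KL, since exactly one of its two summands is negative whenever $p \neq q$, and the Taylor inequalities for $\log(1-x)$ have to be applied on the correct summand so that the net inequality points in the right direction. Because the constant $10$ in the lemma is very loose --- any multiplier $C$ for which $\log C$ exceeds a small absolute constant would suffice --- the slack in the argument is comfortably larger than $\eta$ for all $\eta \leq 1/100$, so no delicate second-order analysis is needed.
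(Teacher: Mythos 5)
Your proposal is correct and follows essentially the same route as the paper: apply the data processing inequality to reduce to $\DKL{\mathrm{Ber}(P(E))}{\mathrm{Ber}(Q(E))}\leq\eta$, then show by direct estimation that $\DKL{\mathrm{Ber}(1-\eta)}{\mathrm{Ber}(1-10\eta)}$ and $\DKL{\mathrm{Ber}(1-10\eta)}{\mathrm{Ber}(1-\eta)}$ both exceed $\eta$ when $\eta\leq 1/100$, the only cosmetic difference being which elementary logarithm inequalities you use and that you make explicit the monotonicity-in-$p$-and-$q$ reduction to the corner case, which the paper leaves implicit.
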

\begin{proof}
  By the data processing inequality, $\eta\geq \DKL{P}{Q}\geq \DKL{{\sf Ber}(P(E))}{{\sf Ber}(Q(E))}$.

  For the first item, if we assume $Q(E)\leq 1-10\eta$ then we get that
  \begin{align*}
  \eta \geq
  \DKL{{\sf Ber}(1-\eta)}{{\sf Ber}(1-10\eta)}
  &=
  (1-\eta)\log\left(\frac{1-\eta}{1-10\eta}\right)
  +\eta\log\left(\frac{\eta}{10\eta}\right)\\
  &=(1-\eta)\log\left(1+\frac{9\eta}{1-10\eta}\right)
  -\eta\log\left(10\right)\\
  &\geq(1-\eta)\log(1+9\eta) - \eta\log(10).
  \end{align*}
  Using the fact that $\log(1+s)\geq s/2$ for $s\leq 1$, we get that the last expression is at least
  $4.5\eta(1-\eta) - \eta\log(10)$, and as $\log(10)\leq 3.4$ we get that it is at least
  $1.1\eta - 4.5\eta^2\geq 1.1\eta - 0.045\eta >\eta$, and contradiction.

  For the second item, if we assume that $P(E)\leq 1-10\eta$, then we get that
  \begin{align*}
  \eta \geq
  \DKL{{\sf Ber}(1-10\eta)}{{\sf Ber}(1-\eta)}
  &=
  (1-10\eta)\log\left(\frac{1-10\eta}{1-\eta}\right)
  +10\eta\log\left(\frac{10\eta}{\eta}\right)\\
  &=(1-10\eta)\log\left(1-\frac{9\eta}{1-\eta}\right)
  +10\log\left(10\right)\eta\\
  &\geq
  -(1-10\eta)\frac{18\eta}{1-\eta} + 10\log(10)\eta,
  \end{align*}
  where we used $\log(1-s)\geq -2s$ for $s\leq 0.1$.
  As $1/(1-\eta)\leq 1+2\eta$, we
  get that the last expression is at least
  \[
  (10\log(10) - 18)\eta - 216 \eta^2
  \geq 10\eta - \frac{216}{100}\eta^2
  >\eta,
  \]
  and contradiction.
\end{proof}

\begin{lemma}\label{lem:trivial_move_expect}
  Suppose $X, Y$ are random variables in $[0,1]$, and $\DKL{X}{Y}\leq \eta$,
  for $\eta\leq 1/100$. Then
  \begin{enumerate}
    \item If $\Expect{}{X}\geq 1-\eta$, then $\Expect{}{Y}\geq 1-10\eta$.
    \item If $\Expect{}{Y}\geq 1-\eta$, then $\Expect{}{X}\geq 1-10 \eta$.
  \end{enumerate}
\end{lemma}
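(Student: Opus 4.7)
The plan is to reduce to the Bernoulli case already established in Lemma~\ref{lem:move_high_prob_events} via a one-line application of the data processing inequality. Consider the randomized map $\phi \colon [0,1]\to \{0,1\}$ that, given input $z$, outputs a sample from $\Ber(z)$ using fresh independent randomness. If $X$ is a random variable supported on $[0,1]$, then $\phi(X)$ is a Bernoulli random variable with $\Prob{}{\phi(X) = 1} = \Expect{}{X}$, and likewise for $Y$. In particular, the hypothesis $\Expect{}{X}\geq 1-\eta$ of item (1) is exactly the statement that $\phi(X)$ assigns probability at least $1-\eta$ to the event $E = \set{1}$, and the conclusion $\Expect{}{Y}\geq 1-10\eta$ is the analogous statement about $\phi(Y)$. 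The symmetric rephrasing handles item (2).

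By the data processing inequality applied to the channel $\phi$, we have $\DKL{\phi(X)}{\phi(Y)} \leq \DKL{X}{Y}\leq \eta$. Since $\phi(X) \sim \Ber(\Expect{}{X})$ and $\phi(Y) \sim \Ber(\Expect{}{Y})$, this places us exactly in the hypothesis of Lemma~\ref{lem:move_high_prob_events} with $P = \phi(X)$, $Q = \phi(Y)$, $E = \set{1}$, and the same numerical bound $\eta\leq 1/100$. Invoking the first (respectively second) item of that lemma yields item (1) (respectively item (2)) of the current statement.

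Since Lemma~\ref{lem:move_high_prob_events} already contains all of the numerical work, there is no real obstacle to this reduction. The only point worth flagging is the legitimacy of applying the data processing inequality to the randomized rounding $\phi$: this is a standard consequence of the chain rule for KL divergence, and if one prefers to state the result purely for discrete distributions, the exact same argument applies after replacing $X$ and $Y$ by any common discretization of $[0,1]$, since $\DKL{X}{Y}$ is monotone under refinement of the partition and $\Expect{}{X}$ is preserved up to arbitrarily small error.
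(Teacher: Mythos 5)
Your proposal is correct and follows essentially the same route as the paper's own proof: both pass $X$ and $Y$ through the randomized Bernoulli channel $z\mapsto \Ber(z)$, apply the data processing inequality to get $\DKL{\Ber(\Expect{}{X})}{\Ber(\Expect{}{Y})}\leq \eta$, and then invoke Lemma~\ref{lem:move_high_prob_events} with the event $\set{1}$. No gaps; your remark on justifying the data processing inequality for a randomized map is a fine (and standard) addition.
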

\begin{proof}
  We prove the first item, and the second item follows similarly.

  Consider the following randomized process $\mathcal{T}$: given a number $x\in [0,1]$
  sample a Bernoulli random variable $b$ such that $\Prob{}{b=1} = x$. Then by the data processing
  inequality we have that
  \[
    \eta\geq \DKL{X}{Y}\geq \DKL{\mathcal{X}}{\mathcal{Y}}.
  \]
  Let $E$ be the event that the output of the process is $1$. Then
  \[
    \Prob{}{\mathcal{X}\in E} = \Expect{}{X}\geq 1-\eta,
  \]
  so we get by Lemma~\ref{lem:move_high_prob_events} that
  \[
    \Expect{}{Y} = \Prob{}{\mathcal{Y}\in E} \geq 1-10\eta.\qedhere
  \]
\end{proof}

\begin{lemma}\label{lem:trivial_move_expect2}
  Suppose $f\colon \Omega\to[0,2]$, $\eta\leq 1/100$ and $P,Q$ are distributions such that
  $\DKL{P}{Q}\leq \eta$. Then
  \begin{enumerate}
    \item If $\Expect{z\sim P}{f(z)}\leq \eta$, then $\Expect{z\sim Q}{f(z)}\leq 20\eta$.
    \item If $\Expect{z\sim Q}{f(z)}\leq \eta$, then $\Expect{z\sim P}{f(z)}\leq 20\eta$.
  \end{enumerate}
\end{lemma}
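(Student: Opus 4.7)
The plan is to reduce Lemma \ref{lem:trivial_move_expect2} to the already-proved Lemma \ref{lem:trivial_move_expect} by a simple affine rescaling. The key observation is that since $f\colon\Omega\to[0,2]$, the function $g(z) := 1 - f(z)/2$ takes values in $[0,1]$. I would therefore define the $[0,1]$-valued random variables $X := g(Z_P)$ and $Y := g(Z_Q)$, where $Z_P\sim P$ and $Z_Q\sim Q$. By the data processing inequality for KL-divergence (applied to the deterministic channel $z\mapsto g(z)$), we have $\DKL{X}{Y} \leq \DKL{P}{Q} \leq \eta$, so the hypotheses of Lemma \ref{lem:trivial_move_expect} are met.

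For item (1), the hypothesis $\Expect{z\sim P}{f(z)}\leq \eta$ translates directly to
\[
\Expect{}{X} \;=\; 1 - \tfrac{1}{2}\Expect{z\sim P}{f(z)} \;\geq\; 1 - \eta/2 \;\geq\; 1-\eta.
\]
Invoking the first part of Lemma \ref{lem:trivial_move_expect} yields $\Expect{}{Y}\geq 1-10\eta$, i.e. $1 - \tfrac{1}{2}\Expect{z\sim Q}{f(z)}\geq 1-10\eta$, which rearranges to $\Expect{z\sim Q}{f(z)}\leq 20\eta$ as claimed. Item (2) is perfectly symmetric: the same $X,Y$ still satisfy $\DKL{X}{Y}\leq\eta$, so one instead applies the second half of Lemma \ref{lem:trivial_move_expect} to conclude $\Expect{}{X}\geq 1-10\eta$ whenever $\Expect{}{Y}\geq 1-\eta$, which rearranges to the desired bound.

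I do not anticipate any obstacle in executing this plan; the only things to double-check are that $g$ does land in $[0,1]$ (immediate from $f\in[0,2]$) and that $\eta\leq 1/100$ is the same quantitative hypothesis required by Lemma \ref{lem:trivial_move_expect}, which it is. If one preferred a self-contained argument that avoids citing Lemma \ref{lem:trivial_move_expect}, one could alternatively mimic its proof directly: sample $b_z\sim \mathrm{Ber}(f(z)/2)$, let $\mathcal{P},\mathcal{Q}$ be the induced distributions on $\{0,1\}$ (with $\DKL{\mathcal{P}}{\mathcal{Q}}\leq \eta$ by data processing), and apply Lemma \ref{lem:move_high_prob_events} to the event $\{b=0\}$. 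But the reduction above is cleaner.
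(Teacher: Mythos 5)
Your proposal is correct and follows essentially the same route as the paper: the paper likewise defines $X = 1-\tfrac{1}{2}f(z)$ with $z\sim P$ and $Y = 1-\tfrac{1}{2}f(z)$ with $z\sim Q$, applies the data processing inequality to get $\DKL{X}{Y}\leq\eta$, and then invokes Lemma~\ref{lem:trivial_move_expect} to conclude $\Expect{z\sim Q}{f(z)}\leq 20\eta$. No gaps; your rescaling check and symmetry remark match the paper's argument.
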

\begin{proof}
  We prove the first item, and the second item follows similarly.

  Consider the random variables $X = 1-\frac{1}{2}f(z)$ where $z\sim P$, and
  $Y = 1-\frac{1}{2}f(z)$ where $z\sim Q$. By the Data processing inequality
  $\DKL{X}{Y}\leq \DKL{P}{Q}\leq \eta$, and by assumption $\Expect{}{X}\geq 1-\eta$,
  so by Lemma~\ref{lem:trivial_move_expect} we get that $0.5\Expect{z\sim Q}{f(z)} = 1-\Expect{}{Y}\leq 10\eta$.
\end{proof}

\subsection{Correlated sampling}
Correlated sampling is an important motive in parallel repetition theorems that has been introduced in~\cite{Holenstein}.
Below, we state a version of it for pseudo-distributions, and include a proof for completeness.
\begin{lemma}\label{lemma:correlated sampling}
  There exists a randomized procedure $\mathcal{R}$.

  Let $\eta\leq 1/100$ and $0<\eps<1/2$. Suppose $P$ and $Q$ are distributions, and $\tilde{P}, \tilde{Q}$ are pseudo-distributions
  all over $\Omega$ such that
  \begin{enumerate}
    \item ${\sf SD}(\tilde{P},\tilde{Q})\leq \eta$
    \item for all $w\in \Omega$, $(1-\eps)P(w)\leq \tilde{P}(w)\leq (1+\eps)P(w)$, $(1-\eps)Q(w)\leq \tilde{Q}(w)\leq (1+\eps)Q(w)$.
  \end{enumerate}
  Then, Alice generates a sample $p\sim \tilde{P}$, Bob generates a sample $q\sim \tilde{Q}$ such that
  \[
  \Prob{}{p\neq q}\leq O(\xi).
  \]
\end{lemma}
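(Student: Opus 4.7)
The plan is to reduce the pseudo-distribution setting to the classical probability-distribution setting and then invoke the standard Holenstein-style correlated sampling procedure. I also note that $\xi$ is not defined in the statement; the natural bound (and the one the subsequent applications use) is $O(\eta)$, which is what I will aim for.

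First I would renormalize the pseudo-distributions. Let $Z_P = \sum_w \tilde{P}(w)$ and $Z_Q = \sum_w \tilde{Q}(w)$. From the hypothesis $(1-\eps) P(w) \leq \tilde{P}(w) \leq (1+\eps) P(w)$ and $\sum_w P(w) = 1$, one gets $1-\eps \leq Z_P \leq 1+\eps$, and similarly for $Z_Q$; in particular, both are bounded away from $0$. Define the honest probability distributions $P' = \tilde{P}/Z_P$ and $Q' = \tilde{Q}/Z_Q$; by the paper's convention, sampling from $\tilde{P}$ (resp.\ $\tilde{Q}$) just means sampling from $P'$ (resp.\ $Q'$). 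Next I would transfer the closeness of the pseudo-distributions to the normalized ones: writing
\[
|P'(w) - Q'(w)| \;\leq\; \frac{|\tilde{P}(w) - \tilde{Q}(w)|}{Z_P} \;+\; \tilde{Q}(w)\cdot \frac{|Z_P - Z_Q|}{Z_P Z_Q},
\]
summing over $w$, and using ${\sf SD}(\tilde{P},\tilde{Q})\leq \eta$ together with $|Z_P-Z_Q|\leq \sum_w|\tilde{P}(w)-\tilde{Q}(w)|\leq 2\eta$, gives ${\sf SD}(P',Q') \leq 4\eta/Z_P \leq 8\eta$.

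Finally I would apply Holenstein's correlated sampling to $P'$ and $Q'$. Alice and Bob share an i.i.d.\ stream of pairs $(w_i, r_i)$ where $w_i$ is uniform on $\Omega$ and $r_i$ is uniform on $[0,1]$. Alice outputs the first $w_i$ with $r_i \leq P'(w_i)$ and Bob outputs the first $w_i$ with $r_i \leq Q'(w_i)$. A per-round case analysis shows that, on any given round, the probability both accept the same element is $(1 - {\sf SD}(P',Q'))/|\Omega|$, while the probability at least one accepts is $(1 + {\sf SD}(P',Q'))/|\Omega|$; summing the resulting geometric series gives
\[
\Pr[p \neq q] \;\leq\; \frac{2\,{\sf SD}(P',Q')}{1 + {\sf SD}(P',Q')} \;\leq\; O(\eta).
\]
The marginals produced by this protocol are precisely $P'$ and $Q'$, i.e.\ valid samples from $\tilde{P}$ and $\tilde{Q}$ in the paper's sense, so the procedure meets the lemma's specification.

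The only real subtlety --- and thus the main obstacle --- is the renormalization transfer step: because $\tilde{P}$ and $\tilde{Q}$ are pseudo-distributions, their total masses differ from $1$, and one must verify that rescaling does not blow up the statistical distance. The $(1\pm\eps)$ multiplicative hypothesis is used solely to keep $Z_P, Z_Q \geq 1/2$, after which the triangle-inequality computation above yields ${\sf SD}(P',Q') = O(\eta)$; from that point the proof is the standard Holenstein argument, which introduces no additional loss.
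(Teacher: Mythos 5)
Your proposal is correct and is essentially the paper's own argument: both run Holenstein's rejection-sampling protocol on a shared stream of pairs $(w_i,r_i)$ and bound the disagreement probability by a ratio of (roughly) the statistical distance to the per-round acceptance probability, and your identification of the bound as $O(\eta)$ (the statement's $\xi$ being a typo) matches the intended conclusion. The only difference is cosmetic: you renormalize $\tilde{P},\tilde{Q}$ to honest distributions first and transfer the SD bound (losing a harmless constant via $Z_P,Z_Q\in[1-\eps,1+\eps]$), whereas the paper runs the protocol directly on the pseudo-distributions with $r_i$ uniform on $[0,1+\eps]$ and uses the $(1\pm\eps)$ condition only to lower-bound the acceptance probability by $\tfrac{1-\eps}{1+\eps}\geq\tfrac14$.
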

\begin{proof}
  Alice and Bob consider, as public randomness, an infinite string of uniform tuples from $\Omega\times [0,1+\eps]$, say
  $(w_1,r_1),(w_2,r_2)$ and so on. Alice picks the first $i$ such that
  \[
  r_i\leq \tilde{P}(w_i),
  \]
  and outputs $w_i$. Similarly Bob picks the first $j$ such that
  \[
  r_j\leq \tilde{Q}(w_j),
  \]
  and outputs $w_j$. It is clear that $w_i\sim \tilde{P}$ and $w_j\sim \tilde{Q}$, and we next
  bound the probability that $i\neq j$. This happens if for the smallest $k$ such that
  $r_k\leq \max(\tilde{P}(w_k),\tilde{Q}(w_k))$, it holds that
  $r_k > \min(\tilde{P}(w_k),\tilde{Q}(w_k))$. The probability for that is
  \begin{align*}
  &\cProb{(r,w)}{r\leq \max(\tilde{P}(w),\tilde{Q}(w))}{r > \min(\tilde{P}(w),\tilde{Q}(w))}\\
  &\qquad=\frac{\sum\limits_{w}\Prob{r}{\min(\tilde{P}(w),\tilde{Q}(w))< r < \max(\tilde{P}(w),\tilde{Q}(w))}}{\sum\limits_{w} \Prob{r}{r\leq \max(\tilde{P}(w),\tilde{Q}(w))}}.
  \end{align*}
  We finish the proof by upper bounding the numerator by $\eta$, and lower bounding the denominator by $1/4$.
  The numerator may be bounded as
  \[
  \sum\limits_{w}\card{\max(\tilde{P}(w),\tilde{Q}(w)) - \min(\tilde{P}(w),\tilde{Q}(w))}
  =\sum\limits_{w}\card{\tilde{P}(w)-\tilde{Q}(w)}
  \leq \eta.
  \]
  As for the denominator, it may be lower bounded by
  \[
  \sum\limits_{w} \Prob{r}{r\leq \tilde{P}(w)}
  =\sum\limits_{w} \frac{\tilde{P}(w)}{1+\eps}
  \geq
  \sum\limits_{w} \frac{(1-\eps)P(w)}{1+\eps}
  =\frac{1-\eps}{1+\eps}\geq \frac{1}{4}.\qedhere
  \]
\end{proof}

\end{document}